\theoremstyle{definition}\newtheorem{dfn}{Definition}[section]
\theoremstyle{plain}\newtheorem{obs}{Observation}[section]
\theoremstyle{plain}\newtheorem{thm}{Theorem}[section]
\theoremstyle{plain}\newtheorem{lem}{Lemma}[section]
\theoremstyle{plain}\newtheorem{cor}{Corollary}[section]
\theoremstyle{plain}\newtheorem{claim}{Claim}[section]
\newcommand{\steven}{\textcolor{black}}
\newcommand{\twu}{\textcolor{black}}
\title{\textbf{Treewidth distance on phylogenetic trees}}
\author[1]{Steven Kelk \thanks{steven.kelk@maastrichtuniversity.nl}}
\author[1]{Georgios Stamoulis \thanks{georgios.stamoulis@maastrichtuniversity.nl}}
\author[2]{Taoyang Wu \thanks{taoyang.wu@uea.ac.uk}}
\affil[1]{Department of Data Science and Knowledge Engineering (DKE), Maastricht University, Maastricht, The Netherlands}
\affil[2]{School of Computing Sciences, University of East Anglia, Norwich, United Kingdom}
\date{}
\begin{document}
\maketitle
\begin{abstract}
\noindent In this article we study the treewidth of the \emph{display graph}, an auxiliary graph structure obtained from the fusion of phylogenetic (i.e., evolutionary) trees
at their leaves. Earlier work has shown that the treewidth of the display graph is bounded if the trees are in some formal sense topologically similar. Here we further expand
upon this relationship. We analyse a number of reduction rules which are commonly used in the phylogenetics  literature to obtain fixed parameter tractable algorithms. In
some cases (the \emph{subtree} reduction) the reduction rules behave similarly with respect to treewidth, while others (the \emph{cluster} reduction) behave very differently,
and the behaviour of the \emph{chain reduction} is particularly intriguing because of its link with graph separators and forbidden minors. We also show that the gap between
treewidth and Tree Bisection and Reconnect (TBR) distance can be infinitely large, and that unlike, for example, planar graphs the treewidth of the display graph can be as
much as linear in its number of vertices. On a slightly different note we show that if a display graph is formed from the fusion of a phylogenetic network and a tree, rather
than from two trees, the treewidth of the display graph is bounded whenever the tree can be topologically embedded (``displayed'') within the network. This opens the door to
the formulation of the display problem in Monadic Second Order Logic (MSOL). A number of other auxiliary results are given. We conclude with a discussion and list a number of
open problems.
\end{abstract}


\section{Introduction}

Phylogenetic trees are used extensively within computational biology to model the history of a set of species $X$; the internal nodes represent evolutionary diversification
events such as speciation \cite{SempleSteel2003}. Within the field of phylogenetics there has long been interest in quantifying the topological dissimilarity of phylogenetic
trees and understanding whether this dissimilarity is biologically significant. This has led to the development of many \emph{incongruency measures} such as Subtree Prune and
Regraft (SPR) distance and Tree Bisection and Reconnect (TBR) distance \cite{AllenSteel2001}. These measures are often \textbf{NP}-hard to compute. More recently such measures
have also attracted attention because of their importance in methods which merge dissimilar trees into \emph{phylogenetic networks}; phylogenetic networks are the
generalization of trees to graphs \cite{HusonRuppScornavacca10}.

Parallel to such developments there has been growing interest in the role of the graph-theoretic parameter \emph{treewidth} within phylogenetics. Treewidth is an intensely
studied parameter in algorithmic graph theory and it indicates, at least in an algorithmic sense, how far an undirected graph is from being a tree (see e.g.
\cite{bodlaender1994tourist,bodlaender2010treewidth, bodlaender2011treewidth} for background). The enormous focus on treewidth is closely linked to the fact that a great many
\textbf{NP}-hard optimization problems become (fixed parameter) tractable on graphs of bounded treewidth\cite{Cygan:2015:PA:2815661}. A seminal paper by Bryant and Lagergren
\cite{bryant2006compatibility} linked phylogenetics to treewidth by demonstrating that, if a set of trees (not necessarily all on the same set of taxa $X$) can simultaneously
be topologically embedded within a single ``supertree'' - a property known as \emph{compatibility} -  then an auxiliary graph known as the \emph{display graph} has bounded
treewidth. Since this paper a small but growing number of papers at the interface of graph theory and phylogenetics have explored this relationship further. Much of this
literature focuses on the link between compatibility and (restricted) triangulations of the display graph (e.g. \cite{vakati2011graph,gysel2012reducing,Vakati2015337}), but
more recently the algorithmic dimension has also been tentatively explored \cite{baste2016efficient,grigoriev2015,kelk2015reduction}. In the spirit of the original Bryant and
Lagergren paper, which used heavy meta-theoretic machinery to derive a theoretically efficient algorithm for the compatibility problem, Kelk et al \cite{kelk2015} showed that
the treewidth of the display graph of two trees is linearly bounded as a function of the TBR distance (equivalently, the size of a Maximum Agreement Forest - MAF
\cite{AllenSteel2001}) between the two trees, and then used this insight to derive theoretically efficient algorithms for computation of many different incongruency measures.
In this article it was empirically observed that in practice the treewidth of the display graph is often much smaller than the TBR distance (and thus also of the many
incongruency measures for which TBR is a lower bound). This raised two natural questions. First, in how far can this apparently low treewidth be exploited to yield genuinely
practical dynamic programming algorithms running over low-width tree decompositions? There has been some progress in this direction in the compatibility literature (notably,
\cite{baste2016efficient}) but there is still much work to be done. Second, how \emph{exactly} does the treewidth of the display graph behave, both in the sense of extremal
results (e.g. how large can the treewidth of a display graph get?) and in the sense of understanding when and why the treewidth differs significantly from measures such as TBR. 

Here we focus primarily on the second question.  We start by analyzing how reduction rules often used in the computation of incongruency measures impact upon the treewidth of
the display graph. Not entirely surprisingly the \emph{common pendant subtree} reduction rule \cite{AllenSteel2001} is shown to preserve treewidth. The \emph{cluster} reduction
\cite{BSS06,linz2011cluster,bordewich2017fixed}, however, behaves very differently for treewidth than for many other incongruency measures. Informally speaking, if both trees
can be split by deletion of an edge into two subtrees on $X'$ and $X''$, many incongruency measures combine additively around this \emph{common split}, while treewidth behaves
(up to additive terms) like the maximum function. We use this later in the article to explicitly construct a family of tree pairs such that the treewidth of the display graph
is 3, but the TBR distance of the trees (and their MP distance - a measure based on the phylogenetic principle of parsimony
\cite{fischer2014maximum,moulton2015parsimony,kelk2015reduction}) grows to infinity. The third reduction rule we consider is the \emph{chain rule}, which collapses common
caterpillar-like regions of the trees into shorter structures. For incongruence measures it is often the case that truncation of such chains to $O(1)$ length preserves the
measure \cite{AllenSteel2001,sempbordfpt2007,whidden2015calculating}, although sometimes the weaker result of truncation to length $f(k)$
\cite{ierselLinz2013,vanIersel20161075} (for some function that depends only on the incongruency parameter $k$) is the best known. We show that truncation of common chains to
length $f(tw)$, where $tw$ is the treewidth of the display graph, indeed preserves treewidth; this uses asymptotic results on the number of vertices and edges in forbidden
minors for treewidth. Proving that truncation to $O(1)$-length preserves treewidth remains elusive; we prove the intermediate result that truncation to length \steven{2} can
cause the treewidth to decrease by at most 1. The case when the chain is not a separator of the display graph seems to be a particularly challenging bottleneck in removing the 	
``$-1$'' term from this result. Although intuitively reasonable, it remains unclear whether truncation to length $O(1)$ is treewidth-preserving, for any universal constant.

In the next section we adopt a more structural perspective. We show that, given an arbitrary (multi)graph $G$ on $n$ vertices with maximum degree $k$, one can construct two
unrooted binary trees $T_1(G)$ and $T_2(G)$ such that their display graph $D = D(T_1(G), T_2(G))$ has at most $O(nk)$ vertices and edges and $G$ is a minor of $D$. We combine
this with the known fact that cubic expanders (a special family of 3-regular graphs) on $n$ vertices have treewidth $\Omega(n)$ to yield the result that display graphs on $n$
vertices can also (in the worst case) have treewidth linear in $n$. This contrasts, for example, with planar graphs on $n$ vertices which have treewidth at most $O(\sqrt{n})$
\cite{diestel2010}. We also show how a more specialized construction can be used to embed arbitrary grid minors \cite{chuzhoy2015excluded} into display graphs with a much
smaller inflation in the number of vertices and edges

Moving on, we then switch to the topic of phylogenetic \emph{networks}. In our context, networks can be considered to be connected undirected graphs whose internal nodes all
have degree 3 and whose leaves are bijectively labelled by a set of labels $X$ \cite{van2016unrooted,GBP2012}. Due to the interpretation of phylogenetic networks as species
networks that contain multiple embedded gene trees, a major algorithmic question in that field is to determine whether a network \emph{displays} (i.e. topologically embeds) a
tree \cite{ISS2010b}. Here we show that constructing a display graph from a network and a tree (rather than two trees) also has potential applications. Specifically, we show
that if $N$ displays $T$ the display graph of $N$ and $T$ has bounded treewidth (as a function of the treewidth of $N$ or, alternatively,  as a a function of the
\emph{reticulation number} of $N$). This then allows us to pose the question ``does $N$ display $T$?'' in Monadic Second Order Logic (MSOL) which yields a logical-declarative
version of earlier, combinatorial fixed parameter tractability results. This once again shows that the flexibility of Courcelle's Theorem \cite{Courcelle90,Arnborg91} in the
context of phylogenetics; the details of the MSOL formulation are technical and are deferred to the appendix. For completeness we show that treewidth alone is not sufficient to
distinguish between YES and NO instances of the display problem: we show a network $N$ and a tree $T$ such that $N$ does not display $T$, but the treewidth of $N$ does not
increase when merged into a display graph with $T$.

In the last two mathematical sections of the paper we prove that, if two trees have TBR distance 1, or MP-distance 1, then the treewidth of their display graph is 3. However,
the converse certainly does not hold: we construct the aforementioned ``infinite gap'' examples where the display graph has treewidth 3 but both TBR distance and MP-distance
spiral off to infinity.

Finally, we reflect on the wider context of these results and discuss a number of open problems.

In conclusion, we observe that for (algorithmic) graph theorists the interface between treewidth and phylogenetics continues to yield many new questions which will likely
require a new ``phylo-algorithmic'' graph theory to be answered.  For phylogeneticists the appeal remains structural-algorithmic: can we convert the apparently low treewidth of
display graphs into competitive, or even superior, algorithms for computation of incongruency measures?

\section{Preliminaries}
An \textit{unrooted binary phylogenetic tree} $T$ on a set of leaf labels (known as \textit{taxa}) $X$ is an undirected tree where all internal vertices have degree three and the
leaves are bijectively labeled by $X$. If we \steven{(exceptionally)} allow some internal vertices of $T$ to have degree two, then we call these vertices \textit{roots} (abusing slightly the usual root
meaning). Similarly, an \textit{unrooted binary phylogenetic network} $N$ on a set of leaf labels $X$ is a simple, connected, undirected graph that has $|X|$ degree-1 vertices that are
bijectively labeled by $X$ and any other vertex has degree 3. The \textit{reticulation number} $r(N)$ of $N$ is defined as $r(N) := |E| - (|V|-1)$, i.e., the number of edges we
need to delete from $N$ in order to obtain a \steven{tree that spans $V$}. A network $N$ with $r(N) = 0$ is simply an unrooted phylogenetic tree. \steven{When it is understood from the context we will often drop the prefix ``unrooted binary phylogenetic'' for brevity}.


Let $Y \subseteq X$. Then, for a tree $T$ we denote by \steven{$T|Y$} the tree which is obtained by forming a minimal subgraph $T'$ of $T$ that spans all
leaves labeled by $Y$, and suppressing any vertices of degree 2.

Let $T_1, T_2$ be two trees (or networks, or combination of the trees and networks), both on the same set of leaf labels $X$. The \emph{display graph} of
$T_1, T_2$, denoted by $D(T_1,T_2)$, is formed by identifying vertices with the same leaf label and forming the disjoint union of these two trees. This can be extended in a
straightforward way to more than 2 trees/networks.

If $N$ is a network and $T$  a tree on a common set of taxa $X$ we say that $N$ \textit{displays} $T$ if there exists a \textit{subtree} $N'$ of $N$ that is a
subdivision of $T$. In other words, $T$ can be obtained by a series of edge contractions on a subgraph $N'$ of $N$. $N'$ is \steven{a minimal} connected subgraph of $N$ that spans all
the taxa $X$.  We say that $N'$ is an \emph{image} of $T$.  We can easily see that every vertex
\steven{of $T$ is mapped to a vertex of $N'$, and that edges of $T$ potentially map to
paths in $N'$},
leading us to the following observation (see also \cite{bryant2006compatibility}):


\begin{obs}
If an unrooted binary network $N$ displays an unrooted binary tree $T$, where both are defined on a common set of leaves (taxa) $X$, then there exists a surjection $f$ from a subtree $N'$ of $N$ to
$T$ such that: (1) $f(\ell) = \ell, \forall \ell \in X$, (2)  \steven{the subsets of $V(N')$ induced by $f^{-1}$ are mutually disjoint, and each such subset induces a connected subtree of $V(N')$,}
and (3) $\forall \{ u,v \} \in E(T), \exists_1 \{ \alpha, \beta \} \in E(N'):$ $f(\alpha) = u$ and $f(\beta) = v$.
\end{obs}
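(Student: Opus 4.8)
The plan is to read the surjection $f$ directly off the subdivision structure guaranteed by the hypothesis. Since $N$ displays $T$, by definition there is a subtree $N'$ of $N$ that is a subdivision of $T$; equivalently, $N'$ is obtained from $T$ by replacing each edge with an internally disjoint path formed by inserting zero or more degree-2 vertices. In particular $N'$ is itself a tree. I would first fix the natural embedding $\phi : V(T) \to V(N')$ sending each vertex of $T$ to its corresponding \emph{branch vertex} of $N'$ and each edge $e = \{u,v\} \in E(T)$ to an internally disjoint path $P_e$ in $N'$ joining $\phi(u)$ and $\phi(v)$ whose interior vertices all have degree 2 in $N'$. Every vertex of $N'$ is then either a branch vertex $\phi(u)$ or an interior vertex of exactly one such path $P_e$.

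Next I would define $f$ by inverting $\phi$ on branch vertices and splitting each path at a single edge. Concretely, set $f(\phi(u)) = u$ for every $u \in V(T)$; and for each edge $e = \{u,v\}$ select one edge of $P_e$ as a \emph{cut edge}, mapping every interior vertex of $P_e$ lying on the $\phi(u)$-side to $u$ and every interior vertex on the $\phi(v)$-side to $v$. This is well-defined because distinct paths share no interior vertex, and it is a surjection since $f(\phi(u)) = u$.

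It then remains to check the three properties. For (1), the labelled leaves of $N'$ are exactly the taxa $X$, and after the identification of equally labelled leaves $\phi(\ell) = \ell$, so $f(\ell) = \ell$. For (2), mutual disjointness is automatic because the fibres of a function partition $V(N')$, and each fibre $f^{-1}(u)$ is connected: it is $\phi(u)$ together with the $\phi(u)$-side segments of the paths incident to $\phi(u)$, which form a subtree of $N'$ meeting at $\phi(u)$. For (3), I would argue globally rather than by local edge-counting: contracting every fibre $f^{-1}(u)$ to a single vertex recovers $T$, so, since $N'$ is a tree with connected fibres, the edges of $N'$ joining two distinct fibres are in bijection with $E(T)$; the edge corresponding to $e = \{u,v\}$ is exactly the cut edge chosen on $P_e$, giving the required unique $\{\alpha,\beta\}$.

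Since this is really an unpacking of the definition of ``displays'', no step is a serious obstacle. The only places needing care are making the cut-edge splitting precise so that each fibre stays connected (rather than inadvertently cutting a path in two places), and phrasing the uniqueness in (3); the latter is cleanest via the observation that contracting the fibres returns $T$, which sidesteps having to rule out spurious edges between $f^{-1}(u)$ and $f^{-1}(v)$ by hand.
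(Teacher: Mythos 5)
Your proposal is correct and takes essentially the same route as the paper: the paper states this observation as a direct consequence of the definition of ``displays'' (vertices of $T$ correspond to branch vertices of the subdivision $N'$, edges of $T$ to internally disjoint paths), and offers no further detail. Your cut-edge splitting of each subdivision path, together with the contraction argument for uniqueness in (3) (two edges between two disjoint connected subtrees of the tree $N'$ would create a cycle), is exactly the rigorous unpacking of the sketch the paper relies on.
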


This observation will be crucial when we study the treewidth of $D(N,T)$ as a function of the treewidth of $N$. We say that two (or more) trees are \textit{compatible} if there
exists another tree on $X$ that displays all the trees. \steven{Note that for two unrooted
binary phylogenetic trees on the same set of labels $X$ compatibility is simply equivalent to
the existence of a label-preserving isomorphism between the two trees.}

A \emph{tree decomposition} of an undirected graph $G=(V,E)$ is a pair $(\mathcal{B}, \mathbb{T})$ where $\mathcal{B} = \{B_1, \dots ,B_q\}$, $B_i \subseteq V(G)$, is a collection of bags
and $\mathbb{T}$ is a tree whose nodes are the bags $B_i$ satisfying the following three properties:
\begin{itemize}
\item[(tw1)] $\cup_{i=1}^q B_i = V(G)$;
\item[(tw2)] $\forall e = \{ u,v \} \in E(G), \exists B_i \in \mathcal{B} \mbox{ s.t. } \{u,v\} \subseteq B_i$;
\item[(tw3)] $\forall v \in V(G)$ all the bags $B_i$ that contain $v$ form a connected subtree of $\mathbb{T}$.
\end{itemize}

The \emph{width} of $(\mathcal{B}, \mathbb{T})$ is equal to $\max_{i=1}^q |B_i|-1$. The \emph{treewidth} of $G$ is the smallest width among all possible tree decompositions of
$G$. For a graph $G$, we denote $tw(G)$ the treewidth of $G$. Given a tree decomposition $\mathbb{T}$ for some graph $G$, we denote by $V(\mathbb{T})$ the set of its bags and
by $E(\mathbb{T})$ the set of its edges (connecting bags). Property (tw3) is also known as \emph{running intersection property}. We note that the treewidth of any graph $G$ is
at most $|V(G)|-1$: consider a bag with all vertices of $G$. This is a valid tree decomposition of width $|V(G)|-1$. Thus the treewidth is always a finite parameter for any
graph.

Another, equivalent, definition of treewidth is based on chordal graphs. We remind that a graph $G$ is chordal if every induced cycle in $G$ has exactly three vertices. The
treewidth of $G$ is the minimum, ranging over \emph{all} chordal completions $c(G)$ of $G$ (we add edges until $G$ becomes a chordal graph), of the  size of the maximum clique
in $c(G)$ minus one. Under this definition, each bag of a tree decomposition of $G$ naturally corresponds to a maximal clique in the chordal completion of $G$ \cite{Blair1993}.

For a graph $G=(V,E)$ and an edge $e = \{u,v\} \in E(G)$, the \emph{deletion} of $e$ is the operation which simply deletes $e$ from $E(G)$ and leaves the rest of the graph $G$
the same. The \emph{contraction} of $e$, denoted $G/e$, is the operation where edge $e$ is deleted and its incident vertices $u,v$ are identified.
We say that a graph $H$ is a \emph{minor} of another graph $G$ if $H$ can be obtained by repeated applications of edge deletions and/or edge contraction, followed possibly by
deleting isolated vertices, on $G$\footnote{\steven{Equivalently we can say that $H$ is a minor of $G$ if $H$ can be obtained by vertex deletions, edge deletions and edge
contractions in $G$.}}.  The order that these operations are performed does not matter and it will result always $H$ in any performed order.


\subsection{Phylogenetic distances and measures}

Several distances have been proposed to measure the incongruence between two (or more) phylogenetic trees on the same set of taxa. The high-level problem is to propose a
measure that quantifies the \emph{dissimilarity} of a given set of phylogenetic trees.

The most relevant distances for the purpose of this article are the so-called \emph{Tree Bisection and Reconnect} distance and the \emph{Maximum Parsimony Distance} which are defined in
the following.

Given an unrooted binary phylogenetic tree $T$ on $X$, a \emph{Tree Bisection and Reconnect} (TBR) move is defined as follows \cite{AllenSteel2001}: (1) we delete an edge of
$T$ to obtain two subtrees $T'$ and $T''$. (2) Then we select two  edges $e_1 \in T', e_2 \in T''$, subdivide them with two new vertices $v_1$ and $v_2$ respectively, add an
edge from $v_1$ to $v_2$, and suppress all vertices of degree 2. In case either $T'$ or $T''$ is a single leaf, then the new edge connecting $T'$ and $T''$ is incident to that
leaf. Let $T_1, T_2$ be two unrooted binary phylogenetic trees on the same set of leaf-labels. The TBR-distance from $T_1$ to $T_2$, denoted $d_{TBR}(T_1, T_2)$, is the
\textit{minimum} number of TBR moves required to transform $T_1$ into $T_2$ (or, equivalently, $T_2$ to $T_1$). 

Computing the TBR-distance is essentially equivalent to the \textit{Maximum Agreement Forest (MAF)} problem: Given an unrooted binary phylogenetic tree on $X$ and $X' \subset X$ we let
$T(X')$ denote the minimal subtree that connects all the elements in $X'$. An \emph{agreement forest} of two unrooted binary trees $T_1, T_2$ on $X$ is a partition of $X$ into
non-empty blocks $\{X_1, \ldots, X_k\}$ such that (1) for each $i \neq j$, $T_1(X_i)$ and $T_1(X_j)$ are node-disjoint and $T_2(X_i)$ and $T_2(X_j)$ are node-disjoint, (2) for
each $i$, $T_1|X_i = T_2|X_i$. A \emph{maximum agreement forest} is an agreement forest with a minimum number of components (such that it \emph{maximizes} the agreement), and
this minimum is denoted $d_{MAF}(T_1,T_2)$. In 2001 it was proven by Allen and Steel that $d_{MAF}(T_1, T_2) = d_{TBR}(T_1, T_2) + 1$ \cite{AllenSteel2001}.

In order to define the Maximum Parsimony Distance \cite{fischer2014maximum,moulton2015parsimony,kelk2015reduction} between two unrooted binary phylogenetic trees $T_1, T_2$
both on $X$, we need first to define the concept of \emph{character} on $X$ which is simply a surjection $f:X \rightarrow \mathbf{C}$ where $\mathbf{C}$ is a set of
\emph{states}. Given a tree $T$ on $X$, and a character $f$ also on $X$, an \emph{extension} of $f$ to $T$ is a mapping $f'$ from $V(T)$ to $\mathbf{C}$ such that $f'(\ell) =
f(\ell)$, $\forall \ell \in X$. An edge $e = \{u,v\}$ with $f'(u) \neq f'(v)$ is known as a \emph{mutation} induced by $f'$. The  minimum number of mutations ranging over all
extensions $f'$ of $f$ is called the \emph{parsimony score} of $f$ on $T$ and is denoted by $l_f(T)$.
Given two trees $T_1, T_2$ their \emph{maximum parsimony distance} $d_{MP}(T_1,T_2)$ is equal to $\max_f |l_f(T_1) - l_f(T_2)|$.

Both the TBR and MP distances are \textbf{NP}-hard to compute and they are also \emph{metric} distances i.e., they satisfy the four axioms of metric spaces: (a) non-negativity, (b) 	
identity of indiscernibles (c)	symmetry and (d) triangle inequality.

Given an unrooted binary phylogenetic tree $T$ and a distance $d$ (such as TBR and MP), we define the \textit{unit ball} or the \textit{unit neighborhood} of $T$ under $d$ to be $u_d(T) = \{
T': d(T, T') = 1\}$ i.e., the set of all trees $T'$ that are within distance one from $T$ under the distance $d$.
Such neighborhoods are important because usually
they are building blocks of ``local search" algorithms that try to find trees that optimize some particular criterion.
\twu{Moreover, the diameter $\Delta_n(d)$ is defined as the maximum value $d$ taken over all pairs of phylogenetic trees with $n$ taxa (see~\cite[Section 2.5]{steel2016phylogeny} for a recent review on various results on the unit ball and the diameter of several tree rearrangement metrics.}


\section{Treewidth Distance}

\steven{
The main purpose of this manuscript is to define and study the properties of the \textit{treewidth distance} between two phylogenetic trees.
As mentioned in the introduction, the study of treewidth in the context of phylogenetics was triggered by the pioneering work of Bryant \&
Lagergren \cite{bryant2006compatibility}  who proved that a necessary condition for a set of trees (not necessarily on the same set of taxa) to be compatible, is that their display graph has bounded treewidth. They used this insight to leverage a (theoretical) positive algorithmic result. Here we are interested in the question: in how far does the treewidth
of the display graph \emph{itself} function directly as a measure of phylogenetic incongruence? Hence
the following natural definition:}


\begin{dfn}[Treewidth Distance]
Given two unrooted binary phylogenetic trees $T_1,T_2$, both on the same set of leaf labels $X$, \steven{where $|X| \geq 3$}, their treewidth distance is defined to be $tw(D(T_1,T_2)) - 2$ and is denoted as
$d_{tw}(T_1, T_2)$.
\end{dfn}

It is easy to see that for two unrooted binary phylogenetic trees $T_1,T_2$ we have that $d_{tw}(T_1,T_2) \geq 0$, for $|X| \geq 3$. This is a direct consequence of the fact
that if $|X| \geq 3$ then the display graph contains at least one cycle and hence $tw(D(T_1,T_2)) \geq 2$. If $|X| < 3$ then $T_1,T_2$ are trivially isomorphic (they are either
a single edge or a single vertex) and \steven{it does not make much sense to define a distance between such trees.} So we can discard these boundary cases without any loss of
generality in our study. \steven{(Of course, the treewidth of the display graph is still well-defined in these omitted boundary cases)}. On the other hand we will leverage the
well-known fact that $tw(D(T_1,T_2)) = 2$ for two  unrooted binary phylogenetic trees on $X$, $|X| \geq 3$, if and only if $T_1$ and $T_2$ are compatible (see e.g.
\cite{grigoriev2015}). As mentioned earlier compatibility in this context is the same as label-preserving isomorphism, so it is natural to speak of equality and write $T_1 =
T_2$. Note that it was shown in \cite{kelk2015} that $tw( D(T_1, T_2) ) \leq d_{MAF}(T_1,T_2) + 1 = d_{TBR}(T_1, T_2)+2$, and hence $d_{tw}(T_1,T_2)\leq d_{TBR}(T_1,T_2)$.


\steven{We remark that, because computation of treewidth is fixed parameter tractable \cite{Bodlaender96,downey2013fundamentals}, so too is $d_{tw}$. As we discuss in the final
section of the paper it is not known whether $d_{tw}$ can be computed in polynomial time, but ongoing research efforts by the algorithmic graph theory community to compute
treewidth efficiently in practice (see e.g. \cite{Bodlaender2012}) will naturally strengthen the appeal of $d_{tw}$ as a phylogenetic measure.}



An easy, but important, observation that we will use extensively in the rest of the manuscript is that treewidth (and treewidth distance) are unchanged by edge subdivision and
degree-2 vertex suppression operations - with one trivial exception. We say that a graph is a \emph{unique triangle} graph if  it contains exactly one cycle such that this
cycle has length 3 and at least one of the cycle vertices has degree 2. A unique triangle graph has treewidth 2.

Given a graph $G=(V,E)$, let $e = \{u_1,u_2\} \in E$ be any edge of $G$ and $v$ be any degree-2 vertex of $G$ (if there exists any) with neighbors $v_1,v_2$. We define the
following two operations:

\begin{description}
\item[Subdivision of an edge $e$:] This defines a new graph $G'=(V',E')$ where $V' = V \cup \{w\}, w \notin V$ and $E' = (E \setminus \{e\}) \cup (\{u_1,w \},\{w,u_2\})$.
\item[Suppression of a degree-2 vertex $v$:] This defines a new graph $G''=(V'', E'')$ where $V'' = V \setminus \{v\}$, $E''= E \setminus (\{v_1,v\} \cup \{v,v_2\}) \cup
    \{v_1,v_2 \}$.
\end{description}

\begin{obs}\label{obs:2}
Let $G=(V,E)$ be a graph, which is not a unique triangle graph and let $e = \{u_1,u_2\} \in E$ be any edge of $G$ and $v$ be any degree-2 vertex of $G$ (if any) with neighbors
$v_1,v_2$. Consider the following two graphs:
\begin{enumerate}
\item $G'=(V',E')$ where we obtain $G'$ after a single application of the edge subdivision step on edge $e \in E(G)$, and
\item $G''=(V'', E'')$ where $G''$ is obtained from $G$ after suppressing a degree-2 vertex $v \in V(G)$.
\end{enumerate}
Then we have that:
$$ tw(G) = tw(G') = tw(G'').$$
\end{obs}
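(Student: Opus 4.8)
The plan is to exploit the fact that subdivision and suppression are inverse operations and that in each case one of the two required inequalities comes for free from the well-known minor-monotonicity of treewidth (recall the minor characterisation given above: if $H$ is a minor of $G$ then $tw(H)\le tw(G)$). Concretely, contracting either of the two edges created by the subdivision recovers $G$ from $G'$, so $tw(G)\le tw(G')$; and contracting the edge $\{v,v_1\}$ of $G$ identifies $v$ with $v_1$ and yields exactly $G''$ (the merged vertex is adjacent to $v_2$ and to the old neighbours of $v_1$, which is precisely the suppression outcome), so $tw(G'')\le tw(G)$. It therefore remains to prove the two reverse inequalities $tw(G')\le tw(G)$ and $tw(G)\le tw(G'')$, and I would handle subdivision first since, as it turns out, it needs no exceptional hypothesis.

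For $tw(G')\le tw(G)$ I start from an optimal tree decomposition $(\mathcal{B},\mathbb{T})$ of $G$. Since $e=\{u_1,u_2\}$ is an edge, property (tw2) supplies a bag $B_i\supseteq\{u_1,u_2\}$; I attach to $B_i$ a fresh bag $\{u_1,u_2,w\}$ holding the new vertex $w$. A direct check of (tw1)--(tw3) (the new bag covers both new edges $\{u_1,w\},\{w,u_2\}$ and $w$ lives in a single bag) shows this is valid, of width $\max(tw(G),2)$. Hence $tw(G')\le tw(G)$ whenever $tw(G)\ge 2$, which with the minor direction gives equality; the only loose end is $tw(G)\le 1$, i.e. $G$ a forest, but then $G'$ is again a forest with an edge, so $tw(G)=tw(G')=1$. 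This establishes $tw(G)=tw(G')$ unconditionally, confirming that the unique-triangle hypothesis is only needed for suppression.

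For suppression I split on whether $v_1,v_2$ are adjacent in $G$. If they are \emph{not}, then suppressing $v$ genuinely inserts the edge $\{v_1,v_2\}$ and $G$ is exactly the subdivision of $\{v_1,v_2\}$ in $G''$, so the subdivision result gives $tw(G)=tw(G'')$ at once. The delicate case --- and the main obstacle --- is when $v_1,v_2$ are already adjacent, so that $v,v_1,v_2$ form a triangle and $G''$ is just $G-v$. Here I again use bag-attachment: an optimal decomposition of $G''$ has a bag containing $\{v_1,v_2\}$ (an edge of $G''$), and attaching $\{v,v_1,v_2\}$ yields $tw(G)\le\max(tw(G''),2)$. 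To upgrade this to $tw(G)\le tw(G'')$ I must rule out $tw(G'')\le 1$, and this is precisely where the hypothesis enters: if $G''=G-v$ were a forest, then re-adding the degree-$2$ vertex $v$ across the edge $\{v_1,v_2\}$ would create exactly one cycle --- a triangle on whose vertex $v$ has degree $2$ --- making $G$ a unique triangle graph, contrary to assumption. Hence $G''$ contains a cycle, $tw(G'')\ge 2$, and the two inequalities combine to $tw(G)=tw(G'')$. I expect verifying (tw1)--(tw3) for the modified decompositions to be routine bookkeeping; the genuinely conceptual point is recognising that this triangle subcase is the only situation in which treewidth can drop, and that it coincides exactly with the excluded unique-triangle configuration.
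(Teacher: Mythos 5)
Your proposal is correct, and its core machinery coincides with the paper's: minor-monotonicity of treewidth yields $tw(G) \leq tw(G')$ and $tw(G'') \leq tw(G)$, and the reverse inequalities come from attaching a single size-3 bag ($\{u_1,u_2,w\}$, respectively $\{v_1,v,v_2\}$) to an existing bag containing the relevant edge, the point being that the width is already at least $2$, so the new bag costs nothing. Where you differ is the organization of the suppression step. The paper argues uniformly: since $G$ is neither a tree nor a unique triangle graph, $G''$ still contains a cycle, hence $tw(G'') \geq 2$, and then it attaches $\{v_1,v,v_2\}$ to a bag containing $\{v_1,v_2\}$. You instead split on whether $v_1,v_2$ are adjacent in $G$: if not, $G$ is exactly the subdivision of the edge $\{v_1,v_2\}$ of $G''$, so your unconditional subdivision equality finishes that case with no hypothesis at all; only in the triangle case ($G'' = G - v$) do you need the unique-triangle exclusion, via the observation that if $G''$ were a forest then the triangle $v,v_1,v_2$ would be the unique cycle of $G$, making $G$ a unique triangle graph. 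This case split buys something the paper leaves implicit: it pinpoints that subdivision preserves treewidth for \emph{all} graphs and that the exceptional hypothesis is needed precisely when $\{v_1,v_2\} \in E(G)$; it also treats the degenerate acyclic cases as forests rather than relying on the paper's slightly loose ``if $G$ is a tree'' phrasing.
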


\begin{proof}
For the subdivision of an edge case, let $G'$ be the resulting graph after the subdivision of some edge $e$. It is immediate that the treewidth of $G'$ is at least $q = tw(G)$
since $G$ is a minor of $G'$ and treewidth is non-increasing under minor operations. To show that the treewidth cannot increase we argue as follows. If $G$ is a tree, then $G'$
is also a tree and $tw(G) = tw(G') = 1$ and we are done. So, we assume that $G$ is not a tree so $q \geq 2$. Take a bag $B$ of an optimal tree decomposition $\mathbb{T}$ of $G$
with largest bag size at least 3, that contains the endpoints $u_1,u_2$ of $e$. Create a new bag $B' \notin \mathcal{B}(\mathbb{T}): B' = \{u_1,u_2,w\}$ and attach it to $B$.
This operation cannot increase the treewidth of the tree decomposition and it is immediate that the new tree decomposition is a valid one for $G'$.

Now we will handle the degree-2 vertex suppression operation. \steven{This can be simulated by two edge contraction operations, which are minor operations, so the treewidth cannot increase}. In the other direction (i.e. proving that the treewidth cannot decrease), we see that if $G$ is a tree the treewidth is immediately preserved. If $G$ is not a tree, let $G''$ be
the resulting graph after a single degree-2 vertex suppression operation on a vertex $v$ with neighbors, in $G$, $v_1,v_2$ such that in $G''$ $\{v_1,v_2\} \in E''$. Take an
optimal tree decomposition of $G''$, let this be $\mathbb{T}''$. By assumption that $G$ is not a tree and that $G$ is not a unique triangle graph, $G''$ contains at least one cycle. Hence,
 $tw(G'') \geq 2$ i.e., the size of the largest bag is at least 3. In $\mathbb{T}''$, locate a bag $A$ that contains the pair of vertices $v_1,v_2$.
Such a bag must exists by definition. Create a new bag $A' = \{v_1,v,v_2\}$ and attach it to $A$ thus creating a new tree decomposition $\mathbb{T}'''$. It is immediate that
$\mathbb{T}'''$ is a valid tree decomposition for $G$ with width the same as the width of $\mathbb{T}''$, and the claim follows.
\end{proof}

Recall that if two unrooted binary trees $T_1, T_2$ are incompatible, then $tw(D(T_1, T_2)) \geq 3$, so the display graph cannot be a unique triangle graph. Hence:

\begin{obs}
\label{obs:2b}
Let $T_1$ and $T_2$ be two unrooted binary phylogenetic trees on the same set of taxa $X$.
If $T_1$ and $T_2$ are incompatible, then the following operations can be applied
arbitrarily to $D(T_1, T_2)$ without altering its treewidth: suppression of degree-2 vertices, and subdivision of edges.
\end{obs}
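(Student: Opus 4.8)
The plan is to derive Observation~\ref{obs:2b} directly from Observation~\ref{obs:2} by verifying that the only escape clause in that earlier result --- the \emph{unique triangle graph} exception --- can never arise for the display graph of two incompatible trees. Recall that Observation~\ref{obs:2} guarantees that subdivision and suppression preserve treewidth \emph{provided} the graph in question is not a unique triangle graph. Thus the entire content of the present statement reduces to showing that $D(T_1,T_2)$ stays outside the unique-triangle class throughout any sequence of these operations, so that the hypothesis of Observation~\ref{obs:2} is always met.

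First I would invoke the fact (stated just before the observation) that incompatibility of $T_1$ and $T_2$ forces $tw(D(T_1,T_2)) \geq 3$. Since a unique triangle graph has treewidth exactly $2$, the display graph of two incompatible trees is certainly not a unique triangle graph \emph{to begin with}. Next I would observe that both operations preserve treewidth \emph{as long as} the current graph is not a unique triangle graph; so applying one operation to $D(T_1,T_2)$ yields a graph of treewidth $\geq 3$ by Observation~\ref{obs:2}, which in particular is again not a unique triangle graph (as that would force treewidth $2$). An easy induction on the number of operations applied then shows that every graph in the sequence has treewidth $\geq 3$ and hence never becomes a unique triangle graph, so Observation~\ref{obs:2} applies at every step and the treewidth is unchanged throughout.

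The one subtle point I would make explicit is that the treewidth is preserved \emph{at each individual step}, so it remains at least $3$ after each operation, which is exactly what keeps us out of the unique-triangle exception for the next step. I expect the main (though still minor) obstacle to be phrasing this induction cleanly: one must be careful that the exception is only avoided because treewidth is an invariant of the operations in the permitted regime, so the argument is genuinely inductive rather than a one-shot appeal. Concretely, I would set up the induction hypothesis that after $i$ operations the graph $G_i$ satisfies $tw(G_i) = tw(D(T_1,T_2)) \geq 3$; the base case is $G_0 = D(T_1,T_2)$, and the inductive step applies Observation~\ref{obs:2} to $G_i$ (legitimate since $G_i$ is not a unique triangle graph) to conclude $tw(G_{i+1}) = tw(G_i) \geq 3$, so $G_{i+1}$ is again not a unique triangle graph. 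This closes the induction and yields that arbitrary interleavings of suppressions and subdivisions leave $tw(D(T_1,T_2))$ invariant, which is the claim.
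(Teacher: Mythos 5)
Your proposal is correct and follows essentially the same route as the paper, which justifies Observation~\ref{obs:2b} by noting that incompatibility forces $tw(D(T_1,T_2)) \geq 3$, so the display graph can never be a unique triangle graph (which has treewidth $2$), and hence Observation~\ref{obs:2} applies. The only difference is that you make explicit the induction showing the treewidth bound (and thus the non-exceptional status) is maintained after each operation, a step the paper leaves implicit.
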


In subsequent sections we will often use Observation \ref{obs:2b} to (in particular)
suppress some or all of the taxa in the display graph without altering its treewidth.


\subsection{Metric properties of $d_{tw}$}

Given the definition of the treewidth distance, it is tempting to see if indeed such a distance is a metric distance e.g., it satisfies the four axioms of metric distances. We
already argued that it satisfies the non-negativity condition and trivially it satisfies the identity of indiscernibles because $T_1 = T_2 \Leftrightarrow d_{tw}(T_1,T_2) = 0$
as demonstrated in the previous discussion.  The symmetry condition is also trivially satisfied because $D(T_1,T_2) = D(T_2,T_1)$ i.e., the display graph is identical in both
cases and thus has the same treewidth.

\begin{figure}[ht]
\centering
\includegraphics[scale=0.8]{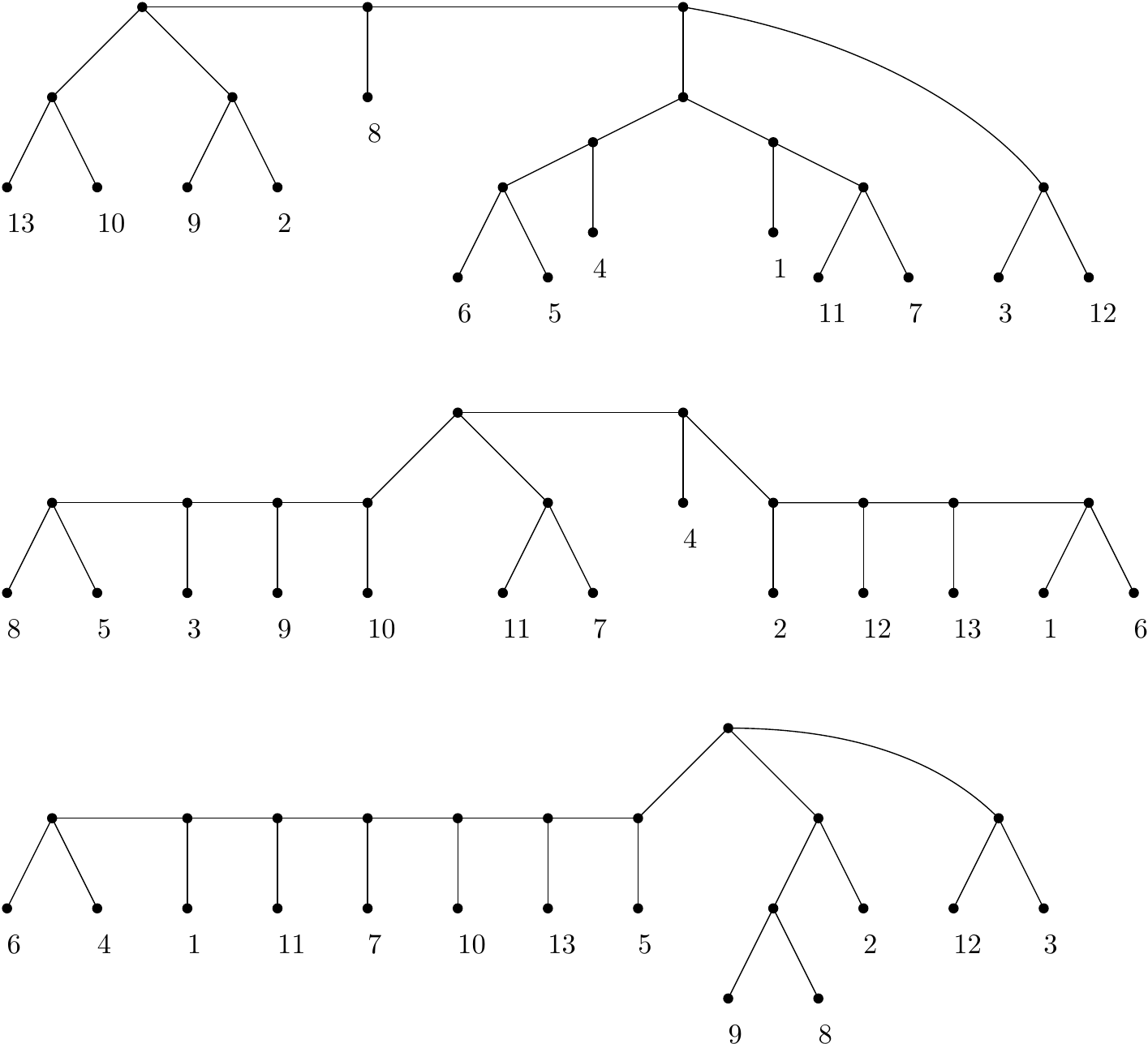}
\caption{An example of three trees \steven{(from top to bottom: $T_1, T_2$ and $T_3$) } on a common set of taxa for which the triangle inequality is violated.}
\label{fig:metric}
\end{figure}

The only case left is to see if $d_{tw}$ satisfies the triangle inequality property: given three unrooted binary phylogenetic trees $T_1,T_2,T_3$ all on $X$ is it the case that
$d_{tw}(T_1, T_3) \leq d_{tw}(T_1,T_2) + d_{tw}(T_2,T_3)$? Unfortunately, this is false as shown in Figure \ref{fig:metric}. By using appropriate software, for example QuickBB
\cite{Gogate:2004:CAA:1036843.1036868}, we can see that  $d_{tw}(T_1,T_2) = 1, d_{tw}(T_2,T_3) = 2$ and $d_{tw}(T_1, T_3) = 4 > d_{tw}(T_1,T_2) + d_{tw}(T_2,T_3)$. \steven{We
remark that, although mathematically disappointing, the absence of the triangle inequality is not a great hindrance in practice. Some other well-known phylogenetic measures,
such as hybridization number, also do not obey the triangle inequality \cite{Semple2007}.}



\section{The treewidth of the display graph under phylogenetic reduction rules}

In this section we investigate the effect of several common phylogenetic reduction rules on the treewidth of the display graph. We will study the following three rules: (i)
common pendant subtree, (ii) common chain and (iii) cluster reduction rule. Such rules constitute the building block of many FPT algorithms for computing phylogenetic
distances. We will see that the three reduction rules behave somewhat differently with respect to the treewidth of the display graph. In particular, we will show how the
subtree reduction operation, where compatible subtrees are collapsed to a single taxon, preserves the treewidth of the display graph. For the second case, the collapsing of a
common chain (a maximal ``caterpillar-like'' region) in both trees down to length 2, could potentially decrease the treewidth of the display graph by \emph{at most} one. On the
other hand we show that if we collapse common chains down to length that is a function of the treewidth of the display graph, then we preserve the treewidth. The open question
here is if this gap can be understood better i.e., if we can collapse the common chains to a constant length and preserve the treewidth. Finally, we investigate the cluster
reduction rule where clusters are formed if in each tree there is an edge (called a \emph{common split}) the deletion of which results that both trees are split into two
subtrees on $X'$ and $X''$. We will see that the treewidth of the display graph is (up to additive terms) equal to the \emph{maximum} of the treewidth of the two clusters. We
note that this is in contrast to other phylogenetic distance measure which usually behave \emph{additively} with respect to the distances of the two clusters.

\steven{It is well-known that compatibility is preserved under
the described reductions. For this reason} we will assume that the two input trees $T_1,T_2$ on $X$ are \emph{not} compatible. This immediately gives us a lower bound on the cardinality of the taxon set, namely $|X| \geq 3$ since any two trees on 2
taxa are by definition compatible (both trees are single edges). Moreover the treewidth of their display graph is at least 3.

We start with the common pendant subtree rule.

\subsection{Subtree Reduction Rule}
Let $T_1, T_2$ be two unrooted binary phylogenetic trees on the same set of taxa $X$. A  subtree $T$ is called a \textit{pendant} subtree of $T_i$, $i \in \{1,2\}$ if there exists an edge $e$
the deletion of which detaches $T$ from $T_i$. A subtree $T$, which induces a subset of taxa $X' \subset X$, is called \textit{common pendant subtree} of $T_1$ and $T_2$ if
$T_1 | X' = T_2 | X'$ and if the additional following condition holds:

\begin{itemize}
\item[$\triangleright$] Let $e_i$ be the edge of tree $T_i, i \in \{1,2\}$ the deletion of which detaches $T$ from $T_i$ and let $v_i \in e_i, i \in \{1,2\}$ be the endpoint
    of $e_i$ ``closest" to the taxon set $X'$. Let's say that we root each $T_i|X'$ at $v_i$, thus inducing a \emph{rooted} binary phylogenetic tree $(T_i|X')^{\rho}$ on
    $X'$. We require that $(T_1|X')^{\rho} = (T_2|X')^{\rho}$.
\end{itemize}

The previous condition formalizes the idea that the point of contact of the pendant subtree with the rest of the tree should explicitly be taken into account when determining whether a pendant subtree is common.
(This is consistent with the definition of common pendant subtree elsewhere in the literature). 

\begin{figure}[ht]
\centering
\includegraphics[scale=0.8]{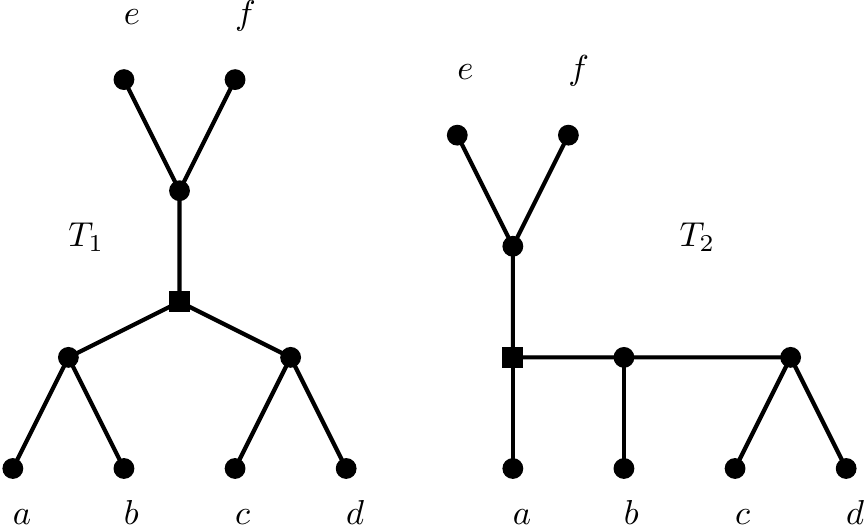}
\caption{An illustration of the concept of \emph{common pendant subtree}. If $X' = \{a,b,c,d\}$ we see that $T_1|X' = T_2|X'$ (because of the suppression of the parent of $a$ in $T_2$) but this is not true if we take into account the
``root" location (in bold squares). \steven{Here both $\{e,f\}$ and $\{c,d\}$ induce maximal common pendant subtrees.}}
\label{fig:cherry}
\end{figure}

In the following we will show that the treewidth of the display graph $D(T_1,T_2)$ of the two phylogenetic \steven{trees} $T_1,T_2$ is preserved under the common pendant subtree
reduction rule:

\begin{description}
\item[Common Pendant Subtree (CPS) reduction:] Find a maximal common pendant subtree in $T_1,T_2$. Let $T$ be such a common subtree with at least two taxa and let $X_T$ be
    its set of taxa. Clip $T$ from $T_1$ and $T_2$. Attach a single label $x \notin X$ in place of $T$ on each $T_i$. Set $X := (X \setminus X_T) \cup \{x\}$ and let $T_1',
    T_2'$ be the two resulting trees and $D(T_1',T_2') = D'$ be their resulting display graph.
\end{description}

\begin{thm}
\label{thm:subtree:reduction} Suppose that $T_1$ and $T_2$ are a pair of incompatible unrooted binary phylogenetic trees on $X$ and the pair $(T'_1,T'_2)$ is obtained from $(T_1,T_2)$ by one
application of the Common Pendant Subtree reduction.  Then $d_{tw}(T_1,T_2)=d_{tw}(T'_1,T'_2)$.
\end{thm}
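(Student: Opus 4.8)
The plan is to prove the equivalent statement $tw(D)=tw(D')$, where $D=D(T_1,T_2)$ and $D'=D(T_1',T_2')$, since $d_{tw}=tw(\cdot)-2$. Fix notation: let $X_T$ be the taxa of the clipped maximal common pendant subtree $T$, and for $i\in\{1,2\}$ let $e_i=\{v_i,w_i\}$ be the edge of $T_i$ whose deletion detaches $T$, with $v_i$ the endpoint closest to $X_T$ (so $v_i$ is the root of the copy of $T$ inside $T_i$). Let $D_T$ be the subgraph of $D$ formed by the two copies of $T$, glued along the shared leaves $X_T$. Because $(T_1|X_T)^\rho=(T_2|X_T)^\rho$, the graph $D_T$ is exactly the display graph obtained by fusing two \emph{identical} copies of the rooted tree $T$ at their leaves, keeping the roots $v_1,v_2$ distinct; it is connected since $|X_T|\geq 2$. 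The only edges of $D$ joining $D_T$ to the rest are $e_1,e_2$, so writing $R:=D[V(D)\setminus V(D_T)]$ (which contains $w_1,w_2$) we have $D=R\cup D_T$ glued through $\{w_1v_1,\,w_2v_2\}$, while $D'=R+x+\{w_1x,\,w_2x\}$, where $x$ is the new taxon.

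For the easy inequality $tw(D')\leq tw(D)$ I would observe that $D'$ is a \emph{minor} of $D$: contracting all edges of the connected subgraph $D_T$ identifies it to a single vertex whose only neighbours are the external neighbours $w_1,w_2$ of $D_T$, and this is precisely $x$ in $D'$. Minor-monotonicity of treewidth then gives $tw(D')\leq tw(D)$.

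The substantive direction is $tw(D)\leq tw(D')$. First I note that the CPS reduction preserves incompatibility (un-reducing an isomorphism $T_1'\cong T_2'$ would reattach the common rooted subtree $T$ and yield $T_1=T_2$), so $T_1',T_2'$ are incompatible and $k':=tw(D')\geq 3$; in particular $D'$ is not a unique-triangle graph. Since $x$ has degree $2$ with neighbours $w_1,w_2$, Observation~\ref{obs:2} lets me suppress $x$ without changing treewidth, and the result is exactly $R+\{w_1w_2\}$; hence $tw(R+\{w_1w_2\})=k'$. Take an optimal tree decomposition $\mathbb{T}''$ of $R+\{w_1w_2\}$: because $w_1w_2$ is an edge, property (tw2) forces some bag $B\supseteq\{w_1,w_2\}$, and $\mathbb{T}''$ is simultaneously a valid decomposition of the subgraph $R$. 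The key auxiliary result I would then isolate is a structural lemma: \emph{if $\tau$ is a rooted binary tree with at least two leaves and $G_\tau$ is obtained by fusing two copies of $\tau$ at their leaves (roots $r,r'$ kept distinct) and adding the edge $rr'$, then $tw(G_\tau)=2$ and some width-$2$ decomposition has a bag containing $\{r,r'\}$.} Granting this (applied to $\tau=T$, giving a width-$2$ decomposition $\mathbb{S}$ of $D_T+\{v_1v_2\}$ with a bag $B_S\supseteq\{v_1,v_2\}$), I assemble a decomposition of $D$ by introducing a single bridge bag $B^\ast=\{w_1,w_2,v_1,v_2\}$, attaching it to $B$ and to $B_S$. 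A routine check of (tw1)--(tw3) shows this is valid for $D$: the edges $w_1v_1,w_2v_2$ are covered by $B^\ast$, and the running-intersection property for $w_1,w_2$ (resp. $v_1,v_2$) holds because they already occur in $B$ (resp. $B_S$). Its width is $\max(k',2,|B^\ast|-1)=\max(k',3)=k'$, using $k'\geq 3$. This yields $tw(D)\leq k'=tw(D')$, completing the equality.

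The main obstacle is the structural lemma on the ``doubled'' pendant subtree $D_T+\{v_1v_2\}$, which I would prove by induction on $\tau$: the base case is a cherry (two sibling leaves), where $G_\tau$ is two triangles sharing the edge $rr'$; in the inductive step, splitting $\tau$ at its root into child subtrees $\tau_L,\tau_R$ lets me glue the inductively obtained width-$2$ decompositions (with exposed root-pairs) to a small gadget of size-$3$ bags that exposes $\{r,r'\}$. The bookkeeping needed to keep the two degree-$2$ roots $v_1,v_2$ simultaneously visible in a single bag throughout the induction — rather than suppressing them and losing access to the interface — is the delicate part; a secondary technical point is verifying that the CPS reduction keeps the trees incompatible so that Observation~\ref{obs:2} applies and the bound $k'\geq 3$ accommodates the size-$4$ bridge bag.
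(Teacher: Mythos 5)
Your proposal is correct, but it takes a genuinely different route from the paper's proof. The paper first treats the case where the clipped subtree is a common \emph{cherry} $\{x,y\}$: there $D'$ is obtained from $D$ by suppressing $x$, suppressing $y$ (deleting the resulting multi-edge) and suppressing one of the two parents, and these operations are treewidth-invariant by Observation~\ref{obs:2b}; the general pendant subtree is then handled by iterating the cherry reduction, noting that collapsing a common cherry cannot make incompatible trees compatible. You instead handle an arbitrary pendant subtree in one shot, exploiting the fact that $D_T$ meets the rest of $D$ in only the two edges $e_1,e_2$: one direction by contracting the connected subgraph $D_T$ onto the new taxon $x$ (minor-monotonicity), the other by gluing an optimal decomposition of $R+\{w_1w_2\}$ (obtained from $D'$ by suppressing $x$, valid since $tw(D')\geq 3$ rules out the unique-triangle exception) to a width-$2$ decomposition of $D_T+\{v_1v_2\}$ through the size-$4$ bridge bag $\{w_1,w_2,v_1,v_2\}$, whose width-$3$ contribution is absorbed because $tw(D')\geq 3$. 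This is essentially the ``decompose across a $2$-edge cut'' template that the paper itself uses for the cluster reduction (Lemma~\ref{lem:cr_bound}), so your argument is longer than the paper's but more structural and reusable, and it avoids the induction on cherries. Two simplifications are available to you. First, the second half of your structural lemma is automatic: since $\{v_1,v_2\}$ is an edge of $D_T+\{v_1v_2\}$, property (tw2) forces \emph{every} tree decomposition of that graph to contain a bag with both roots, so the ``bookkeeping to keep $v_1,v_2$ visible'' that you flag as delicate is not needed at all. Second, the first half needs no induction either: subdividing the edge $\{v_1,v_2\}$ with a new vertex $\rho$ turns $D_T+\{v_1v_2\}$ into the display graph of two copies of the tree obtained from $T$ by attaching a pendant leaf $\rho$ at the root; this is the display graph of two compatible trees on at least $3$ taxa, hence has treewidth $2$ by the fact quoted in the paper, and suppressing $\rho$ preserves this by Observation~\ref{obs:2} (the graph has at least two independent cycles, so it is not a unique triangle graph). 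With these two shortcuts your proof becomes about as short as the paper's while still handling the general subtree directly.
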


\begin{figure}[ht]
\centering
\includegraphics[scale=1]{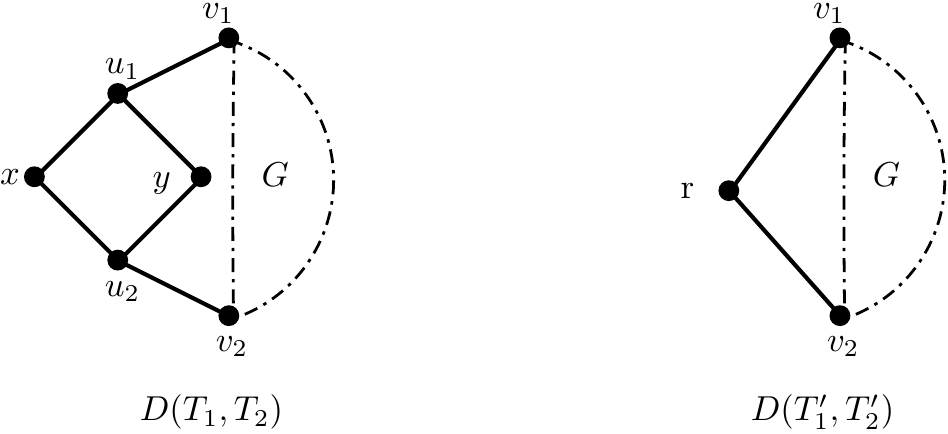}
\caption{Reduction of a common cherry $\{x,y\}$ as described in the proof of Theorem~\ref{thm:subtree:reduction}.}
\label{fig:cherry}
\end{figure}

\begin{proof}
A \emph{cherry} is simply a size-2 subset of taxa $\{x,y\}$ that have a common parent, and
a cherry $\{x,y\}$ is \emph{common} if it is in both trees.
Let us first consider the case that the pair $(T'_1,T'_2)$ is obtained from a subtree reduction on a common cherry $\{x,y\}$ whose parent is $u_i$ in $T_i$ and the
parent of $u_i$ is $v_i$, $i = 1,2$. Then the display graph $D'=D(T'_1,T'_2)$ is obtained from $D=D(T_1,T_2)$ by replacing the vertex subset $\{u_1,x,y,u_2\}$ with a single
vertex $r$ which is connected to $v_1$ and $v_2$ and these are the only neighbors of $r$ (see Figure \ref{fig:cherry}). Note that \steven{$v_1 \neq v_2$ and $\{v_1,v_2\}$ is not an edge in $D$ because $T_1$ and $T_2$ are incompatible.}
$D'$ can be obtained from $D$ by applying Observation \ref{obs:2b}: suppress $x$, suppress $y$ (and delete the created multi-edge) and then suppress $u_2$. Hence $tw(D') = tw(D)$. (The surviving vertex $u_1$ assumes the role of $r$, since labels are irrelevant to treewidth.)
For the more general case: it is easy to see that applying the CPS reduction rule to a subtree that is not a cherry, can be achieved by iteratively applying the CPS reduction to common cherries. This is correct because collapsing a common cherry cannot make two incompatible trees compatible. The result follows.
\end{proof}

\subsection{Chain Reduction Rule}

Let $T$ be an unrooted binary tree on $X$. For each taxon $x_i \in X$, let $p_i$ be its unique parent in $T$. Let $C = (x_1, x_2, \dots, x_t)$ be an ordered sequence of taxa
and let $P = (p_1, p_2, \dots, p_t)$ be the corresponding ordered sequence of their parents,
If $P$ is a \textit{path} in $T$ and the $p_i$ are all mutually distinct then $C$
is called a \textit{chain} of length $t$. A chain $C$ is a \textit{common chain} of two binary phylogenetic trees $T_1, T_2$ on a common set of taxa, if $C$ is a chain in each
one of them. See Figure \ref{fig:two_ns} for an example. \steven{Note that our insistence that the $p_i$ are mutually distinct differs from the definition of chain encountered elsewhere in the literature, in which
$p_1 = p_2$ and $p_{t-1} = p_t$ is permitted. However, our more restrictive definition of
chain is only a very mild restriction, since a chain of length $t$ under the traditional
definition yields a chain of length at least $(t-4)$ under our definition. Our definition ensures
that in both trees neither end of the chain is a cherry, which avoids a number of annoying
(and uninteresting) technicalities.} Let $v_i$ denote the parent of $x_i$ in $T_1$ and
$u_i$ its parent in $T_2$.

 We now define the common chain reduction rule.

\begin{description}
\item[Common $d$-Chain Reduction Rule ($d$-cc):] Let $T_1, T_2$ be two \steven{incompatible} unrooted binary phylogenetic trees on a common set of taxa $X$. Let $C$ be a common chain of $T_1, T_2$ of
    length $t \geq 3$. On each $T_i, i \in \{1,2\}$ clip the chain down to length $d \in \{2, \ldots, t-1\}$ as follows: Keep the first $\lceil d/2 \rceil$ and the last
    $\lfloor d/2 \rfloor$ taxa
and delete all the
    intermediate ones (i.e., delete all the taxa with indexes in $\{\lceil d/2 \rceil +1, \ldots,  t - \lfloor d/2 \rfloor
\}$), suppress any resulting vertex of degree $2$ and
    delete any resulting unlabelled leaves of degree 1. Let $C'$ be the new clipped common chain on both trees.
\end{description}

Observe that $C'$ has $\lceil d/2 \rceil + t -(t - \lfloor d/2 \rfloor) = \lceil d/2 \rceil + \lfloor d/2 \rfloor = d$ taxa and that in each $T_1, T_2$ the parents of the taxa
$x_{\lceil d/2 \rceil}, x_{t - \lfloor d/2 \rfloor + 1}$ are connected by an edge.  Let $D(T_1,T_2) = D$ be the display graph of $T_1,T_2$ and $D'(T_1',T_2') = D'$ be the display
graph of $T_1',T_2'$ after the application of one chain reduction rule. Equivalently, $D'$ can be obtained \steven{directly from $D$ by deleting the $(t-d)$ pruned taxa and suppressing
unlabelled degree-2 vertices}.
\steven{In fact, due to the fact that $T_1, T_2$ are incompatible (and thus so are $T'_1, T'_2$)
we can (by Observation \ref{obs:2b}) safely suppress (in $D$) all the degree-2 nodes labelled by
taxa in $C$, and (in $D'$) all the degree-2 nodes labelled by taxa in $C'$, without altering
the treewidth of $D$ or $D'$.  Without loss of generality we assume
that this suppression has taken place.}

Observe that the part of $D$ that corresponds to the common chain $C$ \steven{now} resembles a $2 \times t$ grid and
in $D'$ is a $2 \times d$ grid. For a common chain $C$ of length $t$, let $g(C)$ be the corresponding $2\times t$ grid in $D$ and similarly define $g(C')$ in $D'$ for the
clipped common chain of length $d$.

Now, assume that we have an optimal tree decomposition $\mathbb{T}$ of $D$ of width $k$, i.e., the maximum bag size in $\mathbb{T}$ is $k+1$. First of all, by a
standard minor argument, it is immediate that application of the cc-reduction rule cannot increase the treewidth: the resulting display graph $D'$ is a minor of $D$.

Our strategy will be as follows:
Given an optimal tree decomposition $\mathbb{T}'$ for $D'$, we will modify it to construct
a tree decomposition for $D$ that in the worst case has width at most $tw(D')+1$, thus
proving $tw(D') \geq tw(D) - 1 = k - 1$. (In some cases we will be able to prove the stronger result that $tw(D) = tw(D')$).






We distinguish two cases.


\smallskip
\noindent \textit{\textbf{Case 1: The common chain $g(C)$ is a separator in D.}} In other words, deleting $g(C)$ from $D$ will result in two connected components. In this case
we will show that clipping the common chain $C$ down to length $2$ by applying a $2$-cc step preserves the treewidth of $D$. We note that an application of a $2-cc$ step causes
$g(C')$ to resemble a $C_4$ in $D'$, where as usual, $C_4$ is a cycle of length 4.

\begin{lem}\label{no_sep}
Let $T_1',T_2'$ be two incompatible unrooted binary phylogenetic trees  that are obtained after a single application of the operation $2$-cc$(C)$ on $T_1$ and $T_2$ where $g(C)$  is a separator in $D(T_1,T_2)$. Then $d_{tw}(T_1,T_2)=d_{tw}(T'_1,T'_2)$.
\end{lem}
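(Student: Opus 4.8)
The plan is to prove the two inequalities $tw(D')\le tw(D)$ and $tw(D)\le tw(D')$ separately; once both hold, the equality $d_{tw}(T_1,T_2)=tw(D)-2=tw(D')-2=d_{tw}(T'_1,T'_2)$ is immediate. The first inequality is already recorded in the text above (the reduced display graph $D'$ is a minor of $D$), so the entire content is to show $tw(D)\le tw(D')$. My approach is to take optimal tree decompositions of the two ``sides'' of the separator, splice in a cheap decomposition of the long grid, and argue that the result is a valid decomposition of $D$ of width at most $tw(D')$.

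First I would fix the structure. By Observation~\ref{obs:2b} I may assume the chain taxa have been suppressed, so $g(C)$ is the $2\times t$ ladder with top path $v_1-\cdots-v_t$, bottom path $u_1-\cdots-u_t$, and rungs $\{v_i,u_i\}$ arising from the suppressed $x_i$; crucially the two end rungs $\{v_1,u_1\}$ and $\{v_t,u_t\}$ are genuine \emph{edges} of both $D$ and $D'$. Since $g(C)$ separates $D$ into two components, I would call the side ``at the $x_1$-end'' $L$ and the side ``at the $x_t$-end'' $R$, and set $G_L=D[L\cup\{v_1,u_1\}]$ and $G_R=D[R\cup\{v_t,u_t\}]$. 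The key point is that the reduction only rewires the interior of the grid, so $G_L$ and $G_R$ are exactly the induced subgraphs $D'[L\cup\{v_1,u_1\}]$ and $D'[R\cup\{v_t,u_t\}]$; being induced subgraphs of $D'$, they satisfy $tw(G_L)\le tw(D')$ and $tw(G_R)\le tw(D')$.

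Next I would glue. The ladder $g(C)$ has the standard width-$2$ path decomposition $\{v_1,u_1,v_2\},\{u_1,v_2,u_2\},\{v_2,u_2,v_3\},\dots,\{u_{t-1},v_t,u_t\}$, whose first bag contains $\{v_1,u_1\}$ and whose last bag contains $\{v_t,u_t\}$. Because $\{v_1,u_1\}$ is an edge of $G_L$, property (tw2) guarantees a bag of an optimal decomposition of $G_L$ containing both endpoints, and I would attach the first grid bag there; symmetrically I would attach the last grid bag to a bag of $G_R$ containing $\{v_t,u_t\}$. Checking (tw1)--(tw3) is routine: the only vertices shared between the three pieces are the two end rungs, and each occupies only a connected set of bags after the attachment, so the running-intersection property survives. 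The resulting decomposition of $D$ has width $\max\{tw(G_L),tw(G_R),2\}$, which is at most $tw(D')$ since $T'_1,T'_2$ are incompatible and hence $tw(D')\ge 3$. This yields $tw(D)\le tw(D')$ and the lemma.

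The hard part is the structural claim underlying the second paragraph: that the separation is \emph{aligned} with the chain, i.e.\ that $L$ meets the grid only through $\{v_1,u_1\}$ and $R$ only through $\{v_t,u_t\}$, so that the two end rungs genuinely function as co-bagged ports. The dangerous alternative is a ``twisted'' separator in which one component attaches at two \emph{diagonal} corners (say $v_1$ and $u_t$): then contracting each side routes a connection between opposite corners of the ladder, producing a M\"obius-ladder (Wagner-graph) minor whose treewidth is strictly larger, and the end rungs no longer lie in a common bag of the side decompositions. Ruling this configuration out (or reducing it to the aligned one) is exactly where one must use that the chain acts as a left/right separator of $D$ rather than as an arbitrary vertex cut, and it is the delicate step on which the whole argument turns.
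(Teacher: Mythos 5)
Your argument for the aligned configuration is correct, and it is essentially the paper's own proof in streamlined form. The paper first establishes a structural claim: because $g(C)$ separates $D$, an optimal tree decomposition of $D'$ may be assumed to contain two adjacent size-3 bags $\{v_1,u_1,v_t\}$ and $\{v_t,u_1,u_t\}$ -- built exactly as you build your decomposition, by gluing optimal decompositions of the two sides at bags containing the end rungs -- and it then re-grows the chain rung by rung inside this decomposition, replacing those two bags by four size-3 bags at each inductive step. You instead splice in the whole width-2 ladder decomposition in one shot; the two arguments are interchangeable, and both stand or fall with the identical structural assertion, namely that $L$ meets the grid only at $\{v_1,u_1\}$ and $R$ only at $\{v_t,u_t\}$.

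The gap you flag in your final paragraph is therefore genuine, and -- this is the important point -- it cannot be closed, because the twisted configuration really does occur under the lemma's stated hypotheses and falsifies its conclusion. Let $T_1$ be the caterpillar with leaves in order $(a,x_1,x_2,x_3,x_4,b)$ and $T_2$ the caterpillar with leaves in order $(b,x_1,x_2,x_3,x_4,a)$, and let $v_i,u_i$ denote the parents of $x_i$ as usual. Then $C=(x_1,x_2,x_3,x_4)$ is a common chain of these two incompatible trees, and deleting $g(C)$ from $D(T_1,T_2)$ leaves exactly two connected components, the isolated taxa $a$ and $b$; so $g(C)$ is a separator precisely in the sense of Case 1. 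But the attachment is diagonal: $a$ is adjacent to $v_1$ and $u_4$, while $b$ is adjacent to $v_4$ and $u_1$. After suppressing degree-2 vertices (Observation \ref{obs:2b}), $D$ is the 8-cycle $v_1v_2v_3v_4u_1u_2u_3u_4$ together with the four antipodal chords $\{v_i,u_i\}$, i.e.\ the Wagner graph, one of the forbidden minors for treewidth 3, so $tw(D)=4$; whereas $D'$ (whose trees $ax_1|x_4b$ and $bx_1|x_4a$ are still incompatible) suppresses to $K_4$, so $tw(D')=3$. Hence $d_{tw}(T_1,T_2)=2$ but $d_{tw}(T'_1,T'_2)=1$, contradicting Lemma \ref{no_sep}. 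Note that the paper's proof is defeated by the same example: its assertion that deleting the edges $\{v_1,v_t\}$ and $\{u_1,u_t\}$ from $D'$ yields ``wlog'' two components with $\{v_1,u_1\}$ in one and $\{v_t,u_t\}$ in the other is false here ($D'$ minus those two edges is a single 8-cycle), and reversing the chain's orientation does not help since that swaps both ends simultaneously. So your closing diagnosis does not merely locate a hard step; it locates an unstated hypothesis. Both your proof and the paper's are valid exactly when the separation is aligned -- when the outside neighbours of $v_1,u_1$ lie in one component of $D\setminus g(C)$ and those of $v_t,u_t$ in the other -- and in the twisted case the treewidth can genuinely drop by one, so the best one can hope for there is the weaker bound of Lemma \ref{chain_sep}.
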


\begin{proof}
Let $D$ be the display graph of $T_1, T_2$ and $D'$ the display graph after we clipped the common chain $C$ down to length 2 and let $g(C')$ be the $2 \times 2$ grid induced by
the common chain in $D'$. Remember that $g(C')$ has 4 vertices $\{ v_1, u_1, v_t, u_t \}$ such that $\{v_1,v_t\} \subset V(T_1)$ and $\{u_1,u_t\} \subset V(T_2)$. Let $\mathbb{T}'$
be an optimal tree decomposition for $D'$.


Consider the grid $g(C')$ in $D'$ corresponding to the clipped chain $C'$ of length $d =2$.  We will  expand $g(C')$
inductively by first inserting the parents $v_2, u_2$ of the clipped taxon $x_2$ \steven{(and an edge between them)}: These two vertices will be inserted in the $C_4$ induced by $\{ v_1, v_{t}, u_1, u_{t} \}$. \steven{After}
the $j$-th step, $j \leq t-d$, of this process, we \steven{will} have retrieved the parents of taxa $x_2, \dots x_{j+1}$. Step $(j+1)$ continues by expanding the current $g(C'')$ of length \steven{$j+2$} by inserting
the parents \steven{$v_{j+2}, u_{j+2}$} in the $C_4$ induced by $v_{j+1}, v_{t}, u_{j+1}, u_{t}$. We will show how, at each step, we can update the tree decomposition $\mathbb{T}'$, without increasing its width, so that the new one
will be a valid tree decomposition for the updated display graph.

We will start by proving the base case. For this, we will find helpful the following claim about the structure of $\mathbb{T}'$.

\begin{claim}
There exists an optimal tree decomposition $\mathbb{T}'$ of $D'$ such that $\mathbb{T}'$ contains two adjacent degree-2 bags $A_1$ and $A_2$ where \steven{$A_1=\{v_1,u_1,v_t\}$,
$A_2=\{v_t,u_1,u_t\}$}.
\end{claim}

\begin{proof}
Observe that since $g(C)$ is a separator in $D$, then so is $g(C')$ in $D'$. In $D'$ we delete the edges $\{v_1, v_t\}$ and $\{u_1, u_t\}$ and we obtain, wlog, two connected
components $D_1'$ and $D_2'$ such that $\{v_1, u_1\} \subset V(D_1')$ and $\{v_t, u_t\} \subset V(D_2')$. Consider optimal tree decompositions $t_1$, $t_2$ of $D_1', D_2'$
respectively. \steven{Note that $tw(D_1') \leq tw(D'), tw(D_2') \leq tw(D')$ and $tw(D') \geq 3$}. Since $\{v_1, u_1\} \in E(D_1')$, there must be a bag $B_1 \in V(t_1)$ that contains $\{v_1, u_1\}$. Similarly, there must be a bag $B_2 \in V(t_2)$ that contains
$\{v_t, u_t\}$. Attach to $B_1$ a new bag $A_1 = \{ v_1, u_1, v_t \}$ and attach to $B_2$ bag $A_2 = \{v_t, u_1, u_t\}$ and join $A_1, A_2$ by an edge to create a new tree
decomposition $\mathbb{T}'$ for $D'$: indeed, it is immediate to see that $\mathbb{T}'$ satisfies all the treewidth conditions. Moreover, the width of this tree decomposition is
$\max (tw(D_1'), tw(D_2'), 2 )$. Noting that
$3 \leq tw(D') \leq \max (tw(D_1'), tw(D_2'), 2 ) \leq max (tw(D_1'), tw(D_2')) \leq tw(D')$
it follows that it is an optimal tree decomposition of $D'$.
\end{proof}

\begin{figure}[ht]
\centering
\includegraphics[scale=0.8]{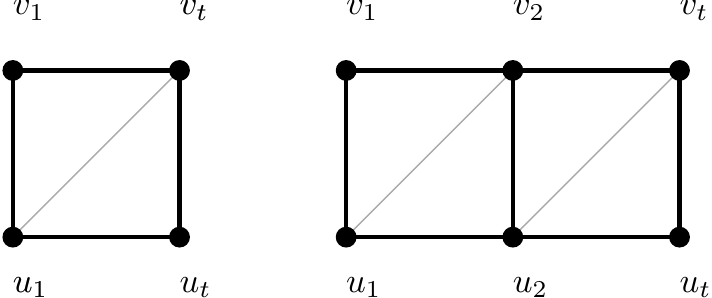}
\caption{An example of the inductive construction of Lemma \ref{no_sep}. We construct a new tree decomposition which facilitates the extra links added by increasing the length of $g(C')$ by one, corresponding to adding the parents of the current missing taxon (in this case $x_2$). \steven{The grey edges are not in the display graph $D'$ but they indicate the maximal cliques induced by
the size-3 bags that we add in Lemma \ref{no_sep}.}}
\label{fig:no_sep}
\end{figure}

Given $\mathbb{T}'$ as described in the previous claim, delete bags $A_1, A_2$ and consider the following set of bags: $J_1 = \{ v_1,v_2, u_1 \}$,
$J_2 = \{ v_2,u_1, u_2 \}$, $J_3 = \{ v_2,v_t, u_2 \}$ and $J_4 = \{ v_t, u_2, u_t \}$. Attach $J_1$ to $B_1$ (the bag that was adjacent to $A_1$) and $J_4$ to $B_2$ (the bag
that was adjacent to $A_2$) and create a path of bags from $J_1$ to $J_4$. It is easy to argue that this is a valid tree decomposition $D''$, defined as the display graph after the parents of $x_2$ have been added; see Figure \ref{fig:no_sep}. First of all, for conditions (tw1) and (tw2) this is immediate by construction. Indeed, $v_2$ belongs to $J_1,J_2,J_3$ and $u_2$ belongs
to $J_2,J_3,J_4$. For (tw2) observe that the edges $\{v_1, v_t\}, \{u_1, u_t\}$ are not present in $g(C'')$ so we do not need to consider them. For the new edges we have that
$\{ v_1,v_2 \} \in J_1$, $\{u_1,u_2\} \in J_2$, $\{v_2, u_2\} \in J_3$, $\{v_2,v_t\} \in J_3$ and $\{u_2,u_t\} \in J_4$. Also, by leveraging the explicit construction of $\mathbb{T}'$ (in particular: $v_t, u_t \not \in B_1$ and $u_1, v_1 \not \in B_2$) we can easily verify that (tw3) is true for $\mathbb{T}''$.
Finally, the width of this new tree decomposition is no greater than the width of $\mathbb{T}'$ because we only add bags of size 3 and, by construction,  $\mathbb{T}'$
already contained at least one bag of size 4.

This proves that, for the base case, the treewidth of the new display graph remains unchanged. For the $j$-th step, we apply the arguments above where as $A_1$ and $A_2$ we use the bags
$\{ v_j, u_j, v_t  \}$ and $\{ v_t, u_j, u_t \}$ which by induction exist and are adjacent.
Delete them and replace them with the following chain of bags, as before: $J_1 = \{ v_j,v_{j+1}, u_j \}$, $J_2 = \{ v_{j+1},u_j, u_{j+1} \}$,
$J_3 = \{ v_{j+1} , v_t, u_{j+1} \}$ and $J_4 = \{ v_t, u_{j+1}, u_t \}$. We continue until we add the last missing piece of $g(C)$.
\end{proof}

\noindent \textit{\textbf{Case 2: The common chain $C$ is not a separator in D.}} 
We say that the $2 \times t$ grid $g(C)$ in $D$ that corresponds to the common chain $C$ is not a separator if the deletion of $g(C)$ from $D$ leaves the display graph $D$
connected. See Figure \ref{fig:two_ns} as an example of such a case and Figure \ref{fig:dis_graph} for an example of their display graph. It is easy to observe that if $g(C)$
is not a separator in $D$ then neither is $g(C')$ in $D'$. We will show that in this case the treewidth of $D$ after clipping $g(C)$ down cannot decrease by more than a unit
term.

\begin{lem}\label{chain_sep}
Let $T_1',T_2'$ be the two incompatible unrooted binary phylogenetic trees  that are obtained after a single application of the $d$-cc reduction rule with $d=2$ on $T_1$ and $T_2$ on a common chain $C$
such that $g(C)$ is \emph{not} a separator in $D(T_1,T_2)$. Then we have $d_{tw}(T'_1,T'_2) \leq d_{tw}(T_1,T_2) \leq d_{tw}(T'_1,T'_2)+1.$
\end{lem}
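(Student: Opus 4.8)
The left inequality $d_{tw}(T'_1,T'_2)\le d_{tw}(T_1,T_2)$ is the easy direction, and is just the minor argument already noted in the text: $D'$ is obtained from $D$ by deleting the pruned taxa and suppressing degree-$2$ vertices (edge contractions), so $D'$ is a minor of $D$ and hence $tw(D')\le tw(D)$. It therefore remains to prove $tw(D)\le tw(D')+1$. Following the stated strategy, I will start from an optimal tree decomposition $\mathbb{T}'$ of $D'$ of width $k':=tw(D')$ (recall $k'\ge 3$, since $T'_1,T'_2$ are incompatible) and build from it a tree decomposition of $D$ of width at most $k'+1$.

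The plan is to reuse, almost verbatim, the column-by-column expansion from the proof of Lemma~\ref{no_sep}. There the clipped chain $g(C')$ is a $4$-cycle on the corners $\{v_1,u_1,v_t,u_t\}$, and from two \emph{adjacent} ``seed'' bags $A_1=\{v_1,u_1,v_t\}$ and $A_2=\{v_t,u_1,u_t\}$ one regrows the full $2\times t$ grid $g(C)$ one column at a time, each step inserting only size-$3$ bags and so never raising the width above $\max(\text{current width},2)$. Consequently, once I have a decomposition of $D'$ of width at most $k'+1$ containing such an adjacent pair $A_1,A_2$, the identical expansion (which is valid because a bag of size $4$ is already present when $k'+1\ge 4$) yields a decomposition of $D$ of width at most $k'+1$. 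The whole difficulty is thus concentrated in producing the two seed bags.

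In Lemma~\ref{no_sep} the seed bags came \emph{for free}, at width $k'$: because $g(C')$ is a separator, $D'$ splits into a side containing $\{v_1,u_1\}$ and a side containing $\{v_t,u_t\}$, and $A_1,A_2$ are hung between optimal decompositions of the two sides. Here $g(C)$ (equivalently $g(C')$) is \emph{not} a separator, so no such split exists and I must instead manufacture the seed bags inside $\mathbb{T}'$. The four $C_4$-edges $\{v_1,u_1\},\{v_t,u_t\}$ (the surviving taxa $x_1,x_t$) and $\{v_1,v_t\},\{u_1,u_t\}$ (the suppressed top and bottom of the clipped chain) are all present in $D'$, so each pair lies in some common bag of $\mathbb{T}'$. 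Using the running-intersection property I will ``thread'' corner vertices along paths of $\mathbb{T}'$ joining these witnessing bags; since adding a single vertex to every bag along a path raises the width by at most one, I aim to gather $\{v_1,u_1,v_t\}$ into one bag and $\{v_t,u_1,u_t\}$ into an adjacent one, and then attach the two size-$3$ seed bags as pendants.

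The hard part will be keeping this threading cost down to $+1$ rather than $+2$. Merely bringing the ``hub'' diagonal pair $\{u_1,v_t\}$ into a single bag already consumes one threaded vertex, and naively adjoining $u_t$ (to complete $A_2$) or $v_1$ (to complete $A_1$) appears to demand a second, independent threading. The key idea I would pursue is to route the two threadings along \emph{edge-disjoint} portions of $\mathbb{T}'$, exploiting the two parallel $u_1$-to-$v_t$ routes furnished by the $C_4$ (one through $v_1$, via the bags witnessing $\{v_1,u_1\}$ and $\{v_1,v_t\}$; the other through $u_t$, via the bags witnessing $\{u_1,u_t\}$ and $\{u_t,v_t\}$), so that no single bag receives more than one extra vertex. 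Verifying this disjointness — and, crucially, checking that it can still be arranged when $g(C)$ is not a separator, the very case in which the two routes may be forced to share bags — is where I expect the argument to become delicate. This is precisely the obstruction that prevents improving the bound from $+1$ to $0$, and hence prevents removing the ``$-1$'' term altogether.
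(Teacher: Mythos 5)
Your easy direction (minor argument) is correct, and your high-level strategy -- pay $+1$ once to establish an invariant, then regrow the chain for free -- is indeed the paper's strategy. But the core mechanism you propose does not work, and this is a genuine gap rather than a detail. You claim that once two adjacent seed bags $A_1=\{v_1,u_1,v_t\}$ and $A_2=\{v_t,u_1,u_t\}$ are manufactured, the size-$3$-bag column expansion of Lemma \ref{no_sep} applies ``almost verbatim''. It does not: the verification of the running intersection property (tw3) in that expansion explicitly uses the separator structure, namely that $v_t,u_t$ occur in no bag on the $B_1$ side and $v_1,u_1$ in no bag on the $B_2$ side, so that when $A_1,A_2$ are deleted and replaced by the path of bags $J_1,J_2,J_3,J_4$, each corner's occurrences remain connected. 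In the non-separator case all four corners necessarily occur in further bags of $\mathbb{T}'$ (their remaining incident edges must be covered somewhere), and after the replacement the bags $J_3,J_4$ containing $v_t$, and $J_4$ containing $u_t$, can reach those other occurrences only through $J_1,J_2$, which do not contain these vertices, so (tw3) fails. Patching this forces you to recreate the separator-case layout inside $\mathbb{T}'$, i.e.\ to make a bag containing $\{v_1,u_1\}$ adjacent to a bag containing $\{v_t,u_t\}$ with the seeds in between; that is a restructuring of the tree (two possibly distant bags must become adjacent), not a single-vertex threading, and naive ways of doing it either cost $+2$ or again break (tw3).

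The second gap is one you flag yourself: the ``edge-disjoint threading'' needed to keep the seed-creation cost at $+1$ is never verified, and it is precisely the crux. The paper's actual proof sidesteps both problems. First, it gets all four corners into a \emph{single} bag at cost at most $+1$, by a three-case analysis: if some bag of a small optimal decomposition already contains all four corners and $|V(D')|>4$, then smallness plus the non-separator hypothesis forces that bag to have size at least $5$ (an adjacent bag's intersection would otherwise be a separator contained in $g(C')$); if $D'=K_4$ the width simply goes up by one; and otherwise a chordal-completion argument shows that, without loss of generality, bags containing $\{v_1,u_1,v_t\}$ and $\{u_1,u_t,v_t\}$ both exist, after which threading the \emph{single} vertex $v_1$ along the path between them (whose bags all contain $\{u_1,v_t\}$) costs at most $+1$ -- one threading, so no disjointness argument is needed. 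Second, it regrows the chain not with size-$3$ bags but with a \emph{pendant} chain of size-$5$ bags, each carrying the far corners $\{v_t,u_t\}$ together with the current column and one new vertex. Pendant attachment can never violate (tw3) for pre-existing vertices, and carrying $v_t,u_t$ in every new bag is exactly what the $3$-bag scheme cannot do; it is also the structural reason the bound is $tw(D')+1$ rather than $tw(D')$.
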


\begin{proof}
As in the separator case, we will alter the tree decomposition $\mathbb{T}'$ for $D'$ to obtain a new tree decomposition $\mathbb{T}''$ that will be valid for $D''$ (the display graph with the expanded $2 \times 3$ grid $g(C'')$)  and which has width at most $tw(D')+1$. Then, we will argue how we can increase the length of this $2 \times 3$ grid $g(C'')$ to any
arbitrary length without further increasing the width. So, the $+1$ term might be incurred only when we transfer from the $2 \times 2$ to the $2 \times 3$ grid but when we retrieve the
rest of $C$ we do not have to pay again in terms of increasing the width. The reason for this is that in the transition from length 2 to 3 we guarantee that the tree
decomposition for the updated situation has a certain invariant property that we can exploit in order to further increase the length of the grid ``for free''.  The initial tree
decomposition might however not possess this property and we have to pay potentially a unit increase in the width of the decomposition to establish it.

Consider the grid $g(C')$ in $D'$ corresponding to the clipped chain $C'$ of length $d =2$. It contains 4 vertices: $\{ v_1,  v_{t} \} \in V(T_1)$ and $\{ u_1, u_{t} \} \in
V(T_2)$. As in the separator case we will expand this $g(C')$ inductively by first inserting the parents $v_2, u_2$ of the clipped taxon $x_2$ and after the $j$-th step, $j \leq t-d$ of this process we will have already retrieved the parents of taxa $x_2, \dots x_{j+1}$. The $(j+1)$th step proceeds by expanding the current $g(C'')$ of length $j+2$ by inserting the parents $v_{j+2}, u_{j+2}$ in the $C_4$ induced by $v_{j+1}, v_{t}, u_{j+1}, u_{t}$.

For the base case,  we will distinguish \steven{three cases. In all cases we assume without loss of generality that
$\mathbb{T}'$ is an optimal \emph{small} tree decomposition of $D'$. A small tree decomposition is a tree decomposition
where no bag in the tree decomposition is a subset of another (which thus also excludes the possibility of having two
copies of the same bag). It is well-known that there exist optimal tree decompositions that are also small.}

\begin{description}
\item[$|V(D')| > 4$ and $\exists$ bag $B \in V(\mathbb{T}')$ such that $B$ contains $\{v_1, v_{t}, u_1, u_{t}\}$.] As a first step, we claim that $|B| \geq 5$.
Indeed, assume for the sake of contradiction that $B$ contains only these four vertices and take any bag $A \in V(\mathbb{T}')$ that is adjacent to $B$ in the tree decomposition $\mathbb{T}'$. \steven{(Such a bag must exist because $|V(D')| > 4$.)}  Consider their
    intersection $A \cap B$. By the smallness assumption on $\mathbb{T}'$ we have that $|A \cap B| \leq 3$.  By standard properties of tree
    decompositions (see e.g., \cite{Cygan:2015:PA:2815661}) we know that $A \cap B$ is a \textit{separator} in $D'$ of the following two sets of vertices: $F_A = \cup_{v \in
    V(T_A)} B_v , F_B = \cup_{v \in V(T_B)} B_v$ where $T_A$ is the connected component of $\mathbb{T}'$ that contains bag $A$ and $T_B$ is the connected component of
    $\mathbb{T}'$ that contains $B$ if we delete the edge $\{A,B\}$ from $E(\mathbb{T}')$. But observe that $A \cap B$ cannot be a separator for separation $F_A, F_B$ because
    $A \cap B \subset g(C')$ and $g(C')$ is not a separator of $D'$. A contradiction.

    Now we proceed as follows:  Create a new bag $H_1 = \{v_1, v_{t}, u_1, u_{t}, v_2\}$
and attach it to $B$ with an edge. Create a second bag $H_2 = H_1 \cup \{u_2\} \setminus \{v_1\}$ and attach it to $H_1$.


    We claim this is a valid tree decomposition for $D''$ (which is $D'$ where $g(C')$ has increased its length by 1). Indeed, property (tw1) is immediate by construction, as
    is (tw3). For (tw2) observe that bag $H_1$ takes care of the new edges $\{ v_1, v_2 \}, \{ v_2, v_{t}\}$ of $g(C'')$ and the bag $H_2$ of the new edges $\{v_2, u_2\}, \{
    u_1, u_2 \}, \{ u_2, u_{t}\}$. Note that, because $|B| \geq 5$, the new bags $H_1$ and
$H_2$ do not increase the width of the decomposition.


\item[$|V(D')| = 4$ and $\exists$ bag $B \in V(\mathbb{T}')$ such that $B$ contains $\{v_1, v_{t}, u_1, u_{t}\}$.] This situation can only occur if $D'$ is the complete graph on 4 vertices $K_4$ (since we know $tw(D') \geq 3$). This
exceptional case can be dealt with similarly to the previous case, except that the addition of
bags $H_1$ and $H_2$  increase the width of the decomposition by exactly
one.  That is, we obtain a decomposition of $D''$ of width $tw(D') + 1$.

\item[$\not \exists B \in V(\mathbb{T}')$ that contains all of $\{v_1, v_{t}, u_1, u_{t}\}$.] Note that
every chordal completion of $D'$ must introduce the chord $\{v_1, u_t\}$ and/or the chord $\{v_t, u_1\}$. It is well-known that each maximal clique in a chordal completion induces
a bag in a corresponding tree decomposition, and each bag in a tree decomposition induces a maximal clique in a corresponding chordal completion. Assume without loss of generality that the chord  $\{v_t, u_1\}$ is present\footnote{If $\{v_1, u_t\}$ is present and not $\{v_t, u_1\}$ then by topological symmetry of the chain the argument still goes through: conceptually we are then simply reconstructing the chain in the ``opposite'' direction.} Then $\{v_1, u_t\}$ is not present (because otherwise the corresponding bag would contain all of  $\{v_1, v_{t}, u_1, u_{t}\}$, violating the case assumption.)
Hence there exist two
    bags $A \neq B$ of $\mathbb{T}'$ that contain the sets of vertices $\{v_1, u_1, v_{t}\}$ and $\{u_1, u_{t}, v_{t}\}$ respectively (and possibly other vertices).   Add the
    element $v_1$ to $B$ and, in order to guarantee the running intersection property for $v_1$, add it also to each of the bags in the unique path from $A$ to $B$ in the tree
    decomposition $T'$ (all these bags contain $\{ u_1, v_{t} \}$ by the running intersection property).   This might increase the width of the decomposition by at most one. We introduce
$H_1$ next to $B \cup \{v_1\}$ and $H_2$ next to $H_1$.
\begin{itemize}
\item  If adding $v_1$ \emph{does} increase the width, it is because $v_1$ is added to a bag that already has maximum size. All
maximum-size bags in $\mathbb{T}'$ contain at least 4 vertices (because $tw(D') \geq 3$) so after adding $v_1$
the maximum-size bags in the decomposition contain at least 5 vertices. Specifically,  adding $H_1$ and $H_2$ cannot further increase the width of the decomposition and we obtain
a decomposition of width at most $tw(D') + 1$.

\item If adding $v_1$ does \emph{not} increase the width, then the maximum bag size in our new $v_1$-augmented
decomposition is at least 4 (because $|B \cup \{v_1\}| \geq 4$). Hence, adding $H_1$ and $H_2$ cannot increase the
width of the decomposition by more than 1. So we again have a decomposition of width at
most $tw(D') + 1$.
\end{itemize}

\end{description}


In all the above three cases we end up with a (not necessarily optimal) tree decomposition
in which $H_1$ and $H_2$ are two adjacent size 5 bags (of degree 2 and 1 respectively).
This process can now be iterated without further raising the width of the decomposition because
all added bags will have size at most 5. For example, to add the parents of $x_3$: add a new bag $\{v_2, u_2, v_t, u_t\}$
next to $H_2$ (``forget'' $u_1$ from bag $H_2$) and then add two new bags
$\{v_2, v_3, v_t, u_2, u_t\}$ (``introduce'' $v_3$) and $\{v_3, v_t, u_2, u_3, u_t\}$
(``forget'' $v_2$ and ``introduce'' $u_3$).


In conclusion, from a clipped chain $C'$ and its corresponding grid $g(C')$ in $D'$ we can retrieve the whole original chain by increasing the treewidth of the resulting
display graph by at most 1. Equivalently, clipping a common chain down to length 2 where in the display graph $D(T_1,T_2)$ the common chain is not a separator, cannot decrease
the treewidth of the resulting display graph by more than 1.
\end{proof}

 \steven{Figure \ref{fig:dis_graph} shows that shortening a chain to length 2 might indeed
reduce the treewidth of the display graph by 1. A natural question therefore arises: is there a constant $d > 2$ such that, if we clip a chain down to length $d$, the treewidth of the display graph is guaranteed to not decrease?} This seems like a highly non-trivial question with
deep connections to forbidden minors. But, at least in the case where the common chain is very large with respect to a function of the treewidth of the display graph $D$, we can
show that shortening chains to a length dependent on the treewidth of $D$ \emph{does} preserve the treewidth.

\begin{thm}
Let $T_1, T_2$ be two incompatible unrooted binary trees and $D(T_1,T_2)$ their display graph such that $tw(D(T_1,T_2)) = k$. Then, there is a function $f(k)$ such that if there exists a
common chain $C$ of length $t > f(k)$ then we can clip $C$ down to length $f(k)$ such
that $tw(D') = tw(D)$ (where as usual $D'$ is the display graph of the trees with the shortened
chains).
\end{thm}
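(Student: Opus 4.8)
The plan is to prove the two inequalities separately. The easy direction $tw(D')\le tw(D)=k$ is the standard minor argument already used above, since $D'$ is obtained from $D$ by edge deletions and contractions inside the grid $g(C)$. All the content is therefore to exhibit, for a suitable $f(k)$, a certificate forcing $tw(D')\ge k$ once $t>f(k)$. The certificate will be a \emph{forbidden minor}: since $tw(D)=k$, the display graph contains, as a minor, some member $H$ of the obstruction set for treewidth $\le k-1$. Every such $H$ has $tw(H)\ge k$, and being a minor of $D$ it also has $tw(H)\le k$, so $tw(H)=k$. The key external input is that this obstruction set is finite and, more quantitatively, that there is a bound $N(k)$ on both $|V(H)|$ and $|E(H)|$ for every obstruction $H$ for treewidth $\le k-1$; these are exactly the asymptotic bounds on the size of treewidth obstructions alluded to in the introduction. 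I would then show that $H$ ``survives'' clipping whenever the chain was long enough, by proving that $H$ can already be modelled in a version of $D$ whose grid has length only $O(N(k))$.

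Fix a minor model of $H$ in $D$ of minimum total size, i.e.\ minimising $\sum_{w\in V(H)}|B_w|$, where the $B_w$ are the disjoint connected branch sets and each edge of $H$ is realised by a single edge of $D$ between the appropriate two branch sets. Taking spanning trees and using minimality, each $B_w$ induces a tree all of whose leaves are endpoints of realising edges; hence the number of vertices of degree $\ne 2$ inside the branch sets, together with the $\le 2|E(H)|$ endpoints of realising edges, is $O(|V(H)|+|E(H)|)=O(N(k))$. Call these vertices, together with the four grid corners $v_1,u_1,v_t,u_t$, the \emph{special} vertices, and call a grid edge \emph{special} if it realises an edge of $H$; there are $O(N(k))$ special vertices and at most $|E(H)|$ special edges.

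Next I would exploit the shape of $g(C)$: it is a $2\times t$ grid whose internal columns $\{v_i,u_i\}$ attach to the rest of $D$ only through their grid neighbours, so the only edges of $g(C)$ leaving the block of columns $\{1,\dots,i\}$ are the two wire edges $v_iv_{i+1}$ and $u_iu_{i+1}$. Consider a maximal run of consecutive columns containing no special vertex and no endpoint of a special edge. Along such a run the model is homogeneous: since any change of branch-set membership (or a used/unused transition) along a wire would create a leaf of some $B_w$ and hence a special vertex, the top wire lies uniformly in one branch set or is uniformly unused, and likewise the bottom wire, while every vertical edge is unused or internal. Such a run can be contracted column-by-column down to a single column (contracting a wire edge internal to a branch set keeps that branch set connected and creates no new realised edge) without changing which pairs of branch sets are adjacent and without destroying any realising edge. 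Iterating over all runs pushes the model into $D[\ell^\ast]$, the display graph in which the chain has length $\ell^\ast$, where $\ell^\ast=O(N(k))$ because the runs are separated by only $O(N(k))$ special columns. I set $f(k)$ to be this bound on $\ell^\ast$.

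Finally, assume $t>f(k)$ and clip the chain to length $f(k)$, obtaining $D'=D[f(k)]$. Since $\ell^\ast\le f(k)$, the shorter graph $D[\ell^\ast]$ is a minor of $D'$ (contract the surplus grid columns), and $H$ is a minor of $D[\ell^\ast]$ by the previous step; by transitivity $H$ is a minor of $D'$, so $tw(D')\ge tw(H)=k$. Combined with $tw(D')\le k$ this gives $tw(D')=tw(D)$, i.e.\ $d_{tw}(T'_1,T'_2)=d_{tw}(T_1,T_2)$. The step I expect to be delicate is contracting a homogeneous run in the \emph{non-separator} case (Case 2), where a branch set may keep its connectivity by leaving the grid at one corner and re-entering at the other rather than through the wires: one must verify that contracting a run only alters the ``through-grid'' connectivity carried by the two wire edges, and leaves the ``around'' connectivity through the rest of $D$, as well as every out-of-grid realising edge, untouched. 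This is precisely why the run must avoid the boundedly many columns carrying special vertices and realising edges, and hence why $f(k)$ must depend on $|E(H)|$ as well as on $|V(H)|$.
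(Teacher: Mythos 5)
Your proposal follows essentially the same route as the paper's proof: since $tw(D)=k$, fix a minor model of an obstruction $H$ for treewidth $k-1$ inside $D$, invoke the size bound (Lagergren) on $|V(H)|+|E(H)|$, and argue that a sufficiently long chain must contain long stretches that are irrelevant to the model, so that clipping to length $f(k)$ cannot destroy $H$. In fact your run-contraction argument is considerably more explicit than the paper's own write-up, which at the critical moment only asserts that ``either the image survives or a slight modification of it \ldots can be embedded in the graph.''

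There is, however, one intermediate claim that is false as stated: it is not true that a used/unused (or membership) transition along a wire inside a run forces a leaf of some $B_w$. By total-size minimality every vertex of $B_w$ that is not a realizing-edge endpoint must be a cut vertex of the induced subgraph $G[B_w]$, and a vertex $v_i\in B_w$ with $v_{i+1}\notin B_w$ can satisfy this with both $v_{i-1}$ and $u_i$ in $B_w$: the branch set simply turns down the vertical edge $\{v_i,u_i\}$ and continues rightwards along the bottom wire. So a run need not be homogeneous; it can contain a ``crossing'' column, with the top wire used to its left and the bottom wire used to its right. The argument survives, but the repair should be spelled out: (i) minimality rules out a second crossing by the same branch set in one run (either the zigzag closes a cycle in $G[B_w]$, contradicting the cut-vertex property, or re-routing along the unused parallel wire gives a strictly smaller model), and rules out two distinct branch sets meeting along one wire, so each run has at most one crossing and each wire is otherwise constant; (ii) consequently no horizontal wire edge inside a run ever has its two endpoints in distinct branch sets, so contracting the run column by column never merges branch sets, never touches a realizing edge (these all lie outside the run), and any extra adjacencies created by contraction are harmless for a minor model. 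With (i) and (ii) your contraction to $D[\ell^\ast]$, and hence the final transitivity step, goes through, including in the non-separator case you flag, since the ``around'' connectivity of a branch set lives entirely outside the contracted edges.
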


\begin{proof}
Give that $tw(D(T_1,T_2)) = k \geq 3$ we can as usual without loss of generality
suppress all taxa in the display graph. Now, $D(T_1,T_2)$ must have as a minor one of the forbidden minors for treewidth $k-1$. \steven{Forbidden minors for treewidth $k-1$ (where $k-1 \geq 2$) are all connected simple graphs with minimum degree 3.} By the work of Lagergren
\cite{DBLP:journals/jct/Lagergren98}  we know that the number of edges (and vertices) in forbidden minors for treewidth $k$ is bounded by a function $f'$ of $k$ which is doubly exponential in
$O(k^5)$.  Let $d' = f'(k-1)$. Now, fix the image of a forbidden
minor for treewidth $k-1$ inside $D$. Each vertex $v$ of the minor has degree at most $d'$, and
(crudely) a degree $d'$ vertex $v$ can be split into at most $\leq d'$ degree-3 vertices on the image inside $D$ (these are the vertices which via edge contractions will merge to form $v$). Hence a common chain longer than $(d')^2$ must necessarily contain ever more vertices which are not on the image at all, or which are degree-2 vertices on the image. For a sufficiently large function $f$
the point is reached that, if the chain is longer than $f((d')^2)$, reducing the length
of the chain by 1 cannot destroy the forbidden minor: either the image survives or a
slight modification of it (with fewer degree-2 vertices) can be embedded in the graph. Hence,
shortening the chain to length $f((d')^2)$ cannot reduce the treewidth below $k$.
%
%
\end{proof}

\begin{figure}[ht]
\centering
\includegraphics[scale=1]{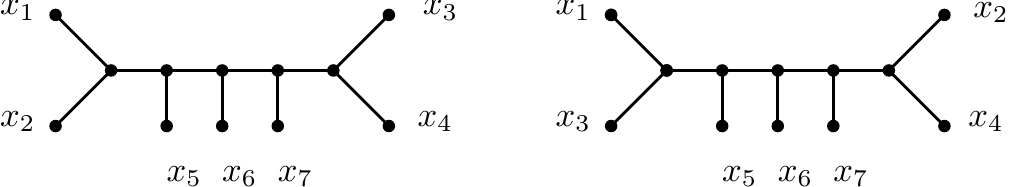}
\caption{An example of two trees with a common chain indexed by taxa $x_5,x_6,x_7$.}
\label{fig:two_ns}
\end{figure}

\begin{figure}[ht]
\centering
\includegraphics[scale=0.8]{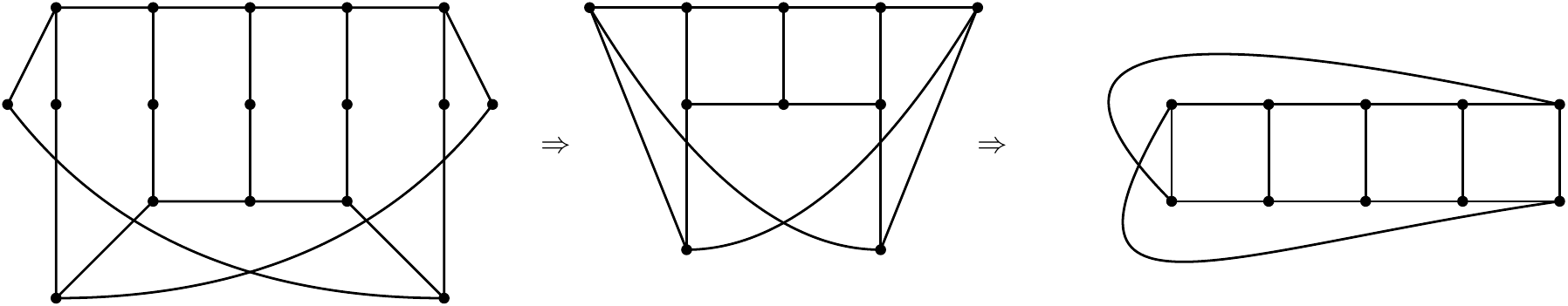}
\caption{The display graph (after the suppression of all vertices of degree two) of the two trees $T_1,T_2$ from Figure \ref{fig:two_ns}.
Observe that the final graph is in fact one of the forbidden minors for treewidth 3, the Mo\"{e}bius graph, and $tw(D) = 4$.
Observe also that if we clip the common chain \steven{down to length 2}, then the treewidth of $D$ decreases to 3.}
\label{fig:dis_graph}
\end{figure}

\subsection{Cluster Reduction Rule}
In this subsection we will study how the treewidth of the display graph relates to the treewidth of its \textit{clusters} which are related to common splits:

\begin{dfn}
Let $T_1$ and $T_2$ be two unrooted binary phylogenetic trees on the same set of taxa $X$. We say that $T_1$ and $T_2$ have a \emph{common split} $X^{*} | X^{**}$ if $X^{*}$
and $X^{**}$ together form a bipartition of $X$ and, for $i \in \{1,2\}$, $T_i$ has some edge $e_i$ such that deleting $e_i$ separates $X^{*}$ from $X^{**}$ in that tree.
\end{dfn}

In the following proofs we will refer extensively to Figure \ref{fig:clusterreduction}.


\begin{figure}[h]
\centering
\includegraphics[scale=0.75]{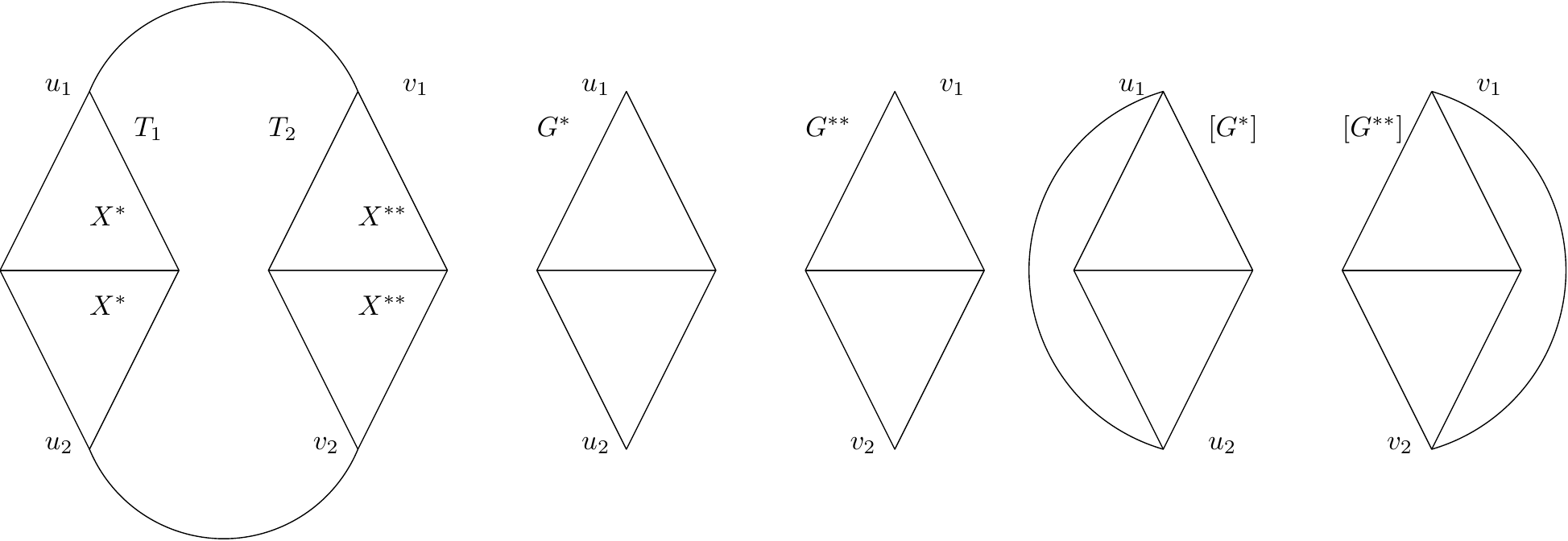}
\caption{Left: the display graph when $T_1$ and $T_2$ have a common split $X^{*}|X^{**}$. Centre: the graphs $G^{*}$ and $G^{**}$ obtained by deleting the two edges inducing the common split. Right: the graphs $[G^{*}]$ and $[G^{**}]$ obtained from $G^{*}$ and $G^{**}$ by joining the ``roots'' together.}
\label{fig:clusterreduction}
\end{figure}

\begin{lem}
\label{lem:cr_bound} Let $T_1$ and $T_2$ be two incompatible unrooted binary phylogenetic trees on the same set of taxa $X$ and let $X^{*}|X^{**}$ be a common split of $T_1$ and $T_2$.
Let $p =tw( D( T_1 | X^{*}, T_2 | X^{*} ) )$ and $q = tw( D( T_1 | X^{**}, T_2 | X^{**} ) )$. Then $$\max(p,q) \leq tw(D(T_1, T_2)) \leq \max(p,q) + 1 $$.
\end{lem}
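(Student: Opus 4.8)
The plan is to prove the two inequalities separately, using minor-monotonicity of treewidth for the lower bound and an explicit gluing of tree decompositions for the upper bound. Throughout write $D = D(T_1,T_2)$, and let $e_1 \in E(T_1)$ and $e_2 \in E(T_2)$ be the two edges inducing the common split $X^{*}|X^{**}$. Deleting $e_1$ and $e_2$ from $D$ disconnects it into $G^{*}$ (containing the taxa $X^{*}$) and $G^{**}$ (containing $X^{**}$), as in Figure~\ref{fig:clusterreduction}. On the $X^{*}$-side the two endpoints of $e_1,e_2$ --- call them $r_1^{*}$ (from $T_1$) and $r_2^{*}$ (from $T_2$) --- now have degree $2$; symmetrically $G^{**}$ has two degree-$2$ roots $r_1^{**},r_2^{**}$, and $e_1=\{r_1^{*},r_1^{**}\}$, $e_2=\{r_2^{*},r_2^{**}\}$ are the only edges of $D$ crossing between the two sides.

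For the lower bound I would observe that $D(T_1|X^{*},T_2|X^{*})$ is a \emph{minor} of $D$: deleting every vertex of the two $X^{**}$-subtrees leaves precisely $G^{*}$ (since the split is exactly along $e_1,e_2$, the $X^{*}$-side of each tree is already its minimal spanning subtree for $X^{*}$, with only the root left as a degree-$2$ vertex), and then suppressing the degree-$2$ roots $r_1^{*},r_2^{*}$ is an edge-contraction. As vertex deletion, edge deletion and edge contraction are exactly the minor operations and treewidth is non-increasing under them, this gives $p \le tw(D)$; the symmetric argument gives $q \le tw(D)$, hence $\max(p,q) \le tw(D)$.

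For the upper bound the key is to combine optimal decompositions of the two sides while paying for the two crossing edges only once. First I would note that $D(T_1|X^{*},T_2|X^{*})$ arises from $G^{*}$ by suppressing its two degree-$2$ roots, so Observation~\ref{obs:2b} yields $tw(G^{*})=p$ (and $tw(G^{**})=q$). Next, form $[G^{*}]$ and $[G^{**}]$ by joining the two roots on each side with an edge (Figure~\ref{fig:clusterreduction}, right). Since adding one edge raises treewidth by at most $1$, we get $tw([G^{*}]) \le p+1$ and $tw([G^{**}]) \le q+1$, and by property (tw2) some bag of an optimal decomposition of $[G^{*}]$ contains both $r_1^{*},r_2^{*}$, and some bag of $[G^{**}]$ contains both $r_1^{**},r_2^{**}$. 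Finally I would glue these two decompositions by inserting a single \emph{bridge} bag $\{r_1^{*},r_2^{*},r_1^{**},r_2^{**}\}$ of size $4$ between the bag of the first that contains $\{r_1^{*},r_2^{*}\}$ and the bag of the second that contains $\{r_1^{**},r_2^{**}\}$. This bridge bag covers both $e_1$ and $e_2$; the running-intersection property holds because each root appears only on its own side together with the bridge, and the bridge attaches exactly to a side-bag already containing it; and the width is $\max\!\big(tw([G^{*}]),tw([G^{**}]),3\big) \le \max(p,q)+1$.

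The hard part, and the source of the $+1$, is precisely this upper-bound gluing: $D$ has two independent edges crossing the split, but a tree decomposition can be joined only along a single interface, so one is forced to gather the two degree-$2$ roots of a side into a common bag, and it is exactly this gathering that may cost one unit of width (equivalently, joining the roots in $[G^{*}]$ may push $tw$ from $p$ to $p+1$). A secondary technical point is the identity $tw(G^{*})=p$ in the degenerate regime where $T_1|X^{*}$ and $T_2|X^{*}$ happen to be compatible (so $p=2$): here one must apply Observation~\ref{obs:2b} with care to exclude the unique-triangle exception, and check that $\max(p,q)+1$ still holds when $p=q=2$, in which case the size-$4$ bridge bag alone already forces width $3$.
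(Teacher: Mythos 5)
Your proposal is correct in its main thrust, and the upper-bound construction is genuinely different from the paper's. The paper glues optimal decompositions of $G^{*}$ and $G^{**}$ asymmetrically: it inserts a size-2 bag $\{u_1,v_1\}$ to cover one crossing edge, and covers the second crossing edge by adding $u_2$ to every bag of the decomposition of $G^{**}$, so the $+1$ is paid on one side only. You instead symmetrize: you pass to $[G^{*}]$ and $[G^{**}]$ (paying $+1$ on each side via the add-an-edge bound), use (tw2) to get a bag containing both roots on each side, and join the two decompositions through a single size-4 bridge bag $\{u_1,u_2,v_1,v_2\}$. Your version buys something real: the running intersection property is automatic (each root's bags form a connected subtree together with the adjacent bridge bag), whereas the paper's ``add $u_2$ to every bag of $\mathbb{T}^{**}$'' step, as literally written, leaves (tw3) violated for $u_2$ --- its bags inside $\mathbb{T}^{*}$ are separated from $\mathbb{T}^{**}$ by the bridge bag $\{u_1,v_1\}$, which does not contain $u_2$ --- and needs a small repair. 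Your construction also dovetails with the graphs $[G^{*}]$, $[G^{**}]$ that the paper introduces anyway for Observation \ref{obs:starsandwich} and Theorem \ref{thm:cr_nec_suff}.

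One correction, however, on the degenerate cases. You locate the difficulty in the regime where $T_1|X^{*}$ and $T_2|X^{*}$ are compatible (so $p=2$); that case is in fact harmless: for $|X^{*}|\geq 3$ the graph $G^{*}$ has cycle space dimension at least $2$, suppression never produces a unique triangle graph, and $tw(G^{*})=p$ holds whether or not the restricted trees are compatible (note you need Observation \ref{obs:2} here rather than Observation \ref{obs:2b}, since the restricted trees need not be incompatible). The genuinely degenerate cases are $|X^{*}|\leq 2$, which the paper treats separately. For $|X^{*}|=2$, $G^{*}$ is a $4$-cycle while $D(T_1|X^{*},T_2|X^{*})$ is a double edge, so $tw(G^{*})=2\neq p=1$; for $|X^{*}|=1$, $[G^{*}]$ is a triangle, so $tw([G^{*}])=2>p+1=1$ and your intermediate inequality $tw([G^{*}])\leq p+1$ is simply false. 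Your final bound nevertheless survives in both cases: incompatibility of $T_1,T_2$ forces $|X^{**}|\geq 3$, hence $q\geq 2$, and since $tw([G^{*}])=2\leq\max(p,q)$ the glued width is still $\max\bigl(tw([G^{*}]),tw([G^{**}]),3\bigr)\leq\max(p,q)+1$. So the patch is routine, but the diagnosis in your last paragraph should be replaced by an explicit case analysis on $|X^{*}|\leq 2$.
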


\begin{proof}
First we observe that the lower bound $\max(p,q) \leq tw(D(T_1, T_2))$ is immediate, since both $D( T_1 | X^{*}, T_2 | X^{*} )$ and $D( T_1 | X^{**}, T_2 | X^{**} )$ are minors
of $D( T_1, T_2)$.

For the upper bound, we will first deal with the case when $|X^*|,|X^{**}| \geq 3$. Let $e_1 = \{u_1,v_1\}$ be the edge that
induces the $X^{*}|X^{**}$ split in $T_1$, and let $e_2 = \{u_2, v_2\}$ be the edge which induces the split in $T_2$. If we delete both the edges $\{u_1, v_1\}$ and $\{u_2,
v_2\}$ from $D(T_1, T_2)$ then we obtain a graph with two connected components. Each one of these two components has two degree-2 vertices, the endpoints of the two deleted
edges.  One of these components is a ``rooted'' version of $D( T_1 | X^{*}, T_2 | X^{*} )$, which we call $G^{*}$, and the other is a ``rooted'' version of $D( T_1 | X^{**},
T_2 | X^{**} )$, which we call $G^{**}$ where, in contrast with $D( T_1 | X^{*}, T_2 | X^{*} )$, $D( T_1 | X^{**}, T_2 | X^{**} )$, we do not suppress the degree-2 vertices
$v_1,v_2,u_1,u_2$. Note that, due to the cardinality constraints on $X^{*}$ and $X^{**}$, $p = tw(G^{*})$ and $q = tw(G^{**})$ because $D( T_1 | X^{*}, T_2 | X^{*} )$ can be
obtained from $G^*$ by suppressing the degree-2 vertices which does not alter the treewidth (because the pathological case of Observation \ref{obs:2} does not apply). Similarly for the other component.  Assume without loss of generality that $u_1$
and $u_2$ are in $G^{*}$, and $v_1$ and $v_2$ are in $G^{**}$. Let $\mathbb{T}^{*}$ and $\mathbb{T}^{**}$ be minimum-width tree decompositions of $G^{*}$ and $G^{**}$
respectively. Locate a bag $B^{*}$ of $\mathbb{T}^{*}$ that contains $u_1$ and a bag $B^{**}$ of $\mathbb{T}^{**}$ that contains $v_1$. Introduce a bag $\{u_1, v_1\}$ and
insert it between $B^{*}$ and $B^{**}$. Clearly, the width in this merged tree decomposition is not altered. It remains only to ensure that the decomposition covers the edge
$\{u_2, v_2\}$. This can be achieved simply by adding (say) $u_2$ to every bag in the tree decomposition of $G^{**}$, which increases the size of all bags by at most one. The
result follows.

Now, we deal with the case where $|X^*| \leq 2$ and/or $|X^{**}| \leq 2$. First of all, we observe that since $T_1, T_2$ are incompatible by assumption, it is not the case that
$|X^*|,|X^{**}| \leq 2$ at the same time.
So, at least one of $|X^*|,|X^{**}|$ must be at least 3. Suppose $|X^*| = 2$ and $|X^{**}| \geq 3$. Observe that in this case $tw(G^*) = 2 \neq p = 1$ but $tw(G^{**}) = q \geq 2$, so
$\max(p,q) \geq 2$. Hence the construction from the previous case - adding bag $\{u_1, v_1\}$ and then adding $u_2$ to all bags -  again cannot increase the width of the decomposition by
more than 1. The case $|X^{*}|=1$ is somewhat strange because then
$D( T_1 | X^{*}, T_2 | X^{*} )$ is just a single vertex. However, the upper bound still
goes through because $tw(G^{**}) = q \geq 2$ and $D(T_1, T_2)$ can be obtained from
$G^{**}$ by connecting the two roots of $G^{**}$ by an edge and then subdividing this
new edge with a single degree-2 vertex. Adding an edge to a graph can increase its treewidth
by at most 1, and edge subdivision is treewidth invariant.
\end{proof}

Now, let $[G^{*}]$ be the graph obtained from $G^{*}$ by adding the edge $\{u_1, u_2\}$, and $[G^{**}]$ be obtained from $G^{**}$ by adding the edge $\{v_1, v_2\}$. See again
Figure \ref{fig:clusterreduction}.

\begin{obs}
\label{obs:starsandwich} $tw(G^{*}) \leq tw([G^{*}]) \leq tw(D(T_1,T_2))$ and $tw(G^{**}) \leq tw([G^{**}]) \leq tw(D(T_1,T_2))$.
\end{obs}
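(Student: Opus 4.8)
The plan is to prove both chains of inequalities using nothing more than the minor-monotonicity of treewidth, which has already been invoked repeatedly above. The two left-hand inequalities will be immediate: $G^{*}$ is obtained from $[G^{*}]$ by deleting the single edge $\{u_1,u_2\}$, and edge deletion is a minor operation, so $tw(G^{*}) \leq tw([G^{*}])$; symmetrically, deleting $\{v_1,v_2\}$ from $[G^{**}]$ gives $G^{**}$, so $tw(G^{**}) \leq tw([G^{**}])$.

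For the two right-hand inequalities I would show that $[G^{*}]$ is a \emph{minor} of $D(T_1,T_2)$. Recall from the proof of Lemma~\ref{lem:cr_bound} that $D(T_1,T_2)$ is exactly $G^{*}$ and $G^{**}$ joined by the two split edges $e_1 = \{u_1,v_1\}$ and $e_2 = \{u_2,v_2\}$, and that deleting these two edges is precisely what disconnects $D$ into $G^{*}$ and $G^{**}$; hence $e_1,e_2$ are the only edges crossing between the two parts. Starting from $D(T_1,T_2)$, I would contract every edge of the connected subgraph $G^{**}$, collapsing it to a single vertex $w$. Since $v_1,v_2 \in G^{**}$ while $u_1,u_2 \in G^{*}$, the only surviving edges incident to $w$ are the images of $e_1$ and $e_2$, namely $\{u_1,w\}$ and $\{u_2,w\}$, so $w$ has degree exactly $2$. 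Contracting one of them, say $\{u_2,w\}$, merges $w$ into $u_2$ and turns $\{u_1,w\}$ into the edge $\{u_1,u_2\}$, leaving $G^{*}$ augmented by exactly that one edge, i.e.\ $[G^{*}]$. Therefore $[G^{*}]$ is a minor of $D(T_1,T_2)$ and $tw([G^{*}]) \leq tw(D(T_1,T_2))$. The argument for $[G^{**}]$ is entirely symmetric: contract $G^{*}$ to a single vertex and then contract one of its two edges to the vertices of $\{v_1,v_2\}$.

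The only point I expect to require care — and the sole (very mild) obstacle — is verifying the adjacency structure after the contraction: one must confirm that $e_1,e_2$ are the unique edges linking the two components, and that $u_1 \neq u_2$, so that the collapsed vertex $w$ is genuinely of degree two and its subsequent contraction produces the single new edge $\{u_1,u_2\}$ rather than a self-loop or a multi-edge. Both facts follow directly from the construction in Lemma~\ref{lem:cr_bound}: the split edges form the only edge cut between the parts, and the internal vertices $u_1 \in V(T_1)$ and $u_2 \in V(T_2)$ of two distinct trees are never identified in the display graph. No tree-decomposition bookkeeping is needed; the whole observation reduces to minor-monotonicity of treewidth.
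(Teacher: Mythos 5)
Your proof is correct and follows essentially the same route as the paper: the lower bounds via edge deletion, and the upper bounds by contracting the entire opposite part ($G^{**}$ resp.\ $G^{*}$) of the display graph to obtain $[G^{*}]$ resp.\ $[G^{**}]$ as a minor of $D(T_1,T_2)$. Your version merely spells out the final contraction step and the non-degeneracy checks ($u_1 \neq u_2$, and $e_1,e_2$ being the only crossing edges) that the paper leaves implicit.
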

\begin{proof}
The lower bounds are immediate by a standard minor argument. The upper bounds are also obtained via minors. Specifically, observe that $[G^{*}]$ can be obtained from $D =
D(T_1, T_2)$ by completely contracting the part of $D$ that lies between $v_1$ and $v_2$ (i.e. the $X^{**}$ part of $D$). A symmetrical argument holds for $[G^{**}]$ by
completely contracting the $X^{*}$ part of $D$.
\end{proof}

The following theorem strengthens Lemma \ref{lem:cr_bound} by adding necessary and sufficient conditions for the lower bound to hold.

\begin{thm}
\label{thm:cr_nec_suff} Consider Lemma \ref{lem:cr_bound}. Assume without loss of generality that $p \leq q$. Then $tw(D(T_1, T_2))  = \max(p,q)$ if and only if the following
holds:
\begin{enumerate}
\item (Case $p < q$): $tw( [G^{**}] ) = tw(G^{**})$,
\item (Case $p = q$):  $tw( [G^{**}] ) = tw( G^{**} )$ and $tw( [G^{*}] ) = tw( G^{*} )$.
\end{enumerate}
\end{thm}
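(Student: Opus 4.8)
The plan is to locate, inside the one-unit window $\max(p,q) \le tw(D(T_1,T_2)) \le \max(p,q)+1$ furnished by Lemma~\ref{lem:cr_bound}, exactly when the lower end $\max(p,q)=q$ is attained. Write $D = D(T_1,T_2)$, and recall the structural picture from the proof of Lemma~\ref{lem:cr_bound}: deleting the two split-inducing edges $\{u_1,v_1\}$ and $\{u_2,v_2\}$ breaks $D$ into the vertex-disjoint rooted clusters $G^{*}$ (containing $u_1,u_2$) and $G^{**}$ (containing $v_1,v_2$), with $tw(G^{*})=p$, $tw(G^{**})=q$, so that $D$ is exactly $G^{*}\sqcup G^{**}$ together with the two linking edges $\{u_1,v_1\}$ and $\{u_2,v_2\}$. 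Since $T_1,T_2$ are incompatible they cannot agree on \emph{both} clusters, so at least one cluster display graph is incompatible; as $p \le q$ this forces $q \ge 3$, a fact I will use to argue that a size-$4$ connector bag is free. The necessity direction I would simply read off Observation~\ref{obs:starsandwich}: if $tw(D)=q$ then $q = tw(G^{**}) \le tw([G^{**}]) \le tw(D)=q$ forces $tw([G^{**}])=tw(G^{**})$, and in the case $p=q$ the symmetric chain $p=tw(G^{*})\le tw([G^{*}])\le tw(D)=q=p$ additionally forces $tw([G^{*}])=tw(G^{*})$.

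The substance is sufficiency, where I would build a width-$q$ tree decomposition of $D$ by gluing. Take optimal tree decompositions $\mathbb{T}^{*}$ of $[G^{*}]$ and $\mathbb{T}^{**}$ of $[G^{**}]$. Because $[G^{*}]$ carries the edge $\{u_1,u_2\}$ and $[G^{**}]$ the edge $\{v_1,v_2\}$, property (tw2) guarantees a bag $B^{*}\supseteq\{u_1,u_2\}$ in $\mathbb{T}^{*}$ and a bag $B^{**}\supseteq\{v_1,v_2\}$ in $\mathbb{T}^{**}$ — this is precisely why the theorem is phrased in terms of the edge-augmented graphs and not $G^{*},G^{**}$. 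Now introduce a single new bag $Z=\{u_1,u_2,v_1,v_2\}$ and attach it by edges to both $B^{*}$ and $B^{**}$. The result is a tree; $\mathbb{T}^{*}$ and $\mathbb{T}^{**}$ already cover all edges of $G^{*}\subseteq[G^{*}]$ and $G^{**}\subseteq[G^{**}]$, and the linking edges $\{u_1,v_1\},\{u_2,v_2\}$ lie in $Z$. The running intersection property for each of $u_1,u_2$ (resp. $v_1,v_2$) holds because $Z$ is hung directly off a bag ($B^{*}$, resp. $B^{**}$) already containing it, and these vertices occur in only one of the two disjoint parts. Hence the glued object is a valid tree decomposition of $D$ of width $\max\bigl(tw([G^{*}]),\,tw([G^{**}]),\,3\bigr)$.

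It remains to bound the two cluster widths by $q$, and this is where the case split enters. In the case $p=q$ both hypotheses give $tw([G^{*}])=tw([G^{**}])=q$ directly, so the width is $\max(q,q,3)=q$ using $q\ge 3$. In the case $p<q$ the hypothesis supplies $tw([G^{**}])=q$, while on the star side I would invoke the standard fact (already used in Lemma~\ref{lem:cr_bound}) that adding one edge raises treewidth by at most one, giving $tw([G^{*}])\le tw(G^{*})+1=p+1\le q$ \emph{automatically} — which explains why no condition on $[G^{*}]$ is imposed when $p<q$. In either case the glued decomposition has width $q$, so $tw(D)\le q$, and together with the lower bound of Lemma~\ref{lem:cr_bound} this yields $tw(D)=q=\max(p,q)$.

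I expect the main obstacle to be conceptual rather than computational: recognising that the correct objects to control are the edge-augmented graphs $[G^{*}],[G^{**}]$, so that the connector bag $Z$ can be anchored to a \emph{single} bag of each part while respecting running intersection, and then isolating the exact source of the asymmetry between the two cases — namely the free ``$+1$'' of slack available on the smaller side when $p<q$. One should also check the degenerate cardinality cases $|X^{*}|\le 2$ (and $|X^{*}|=1$), handling them exactly as in Lemma~\ref{lem:cr_bound}; these do not disturb the argument, since $q\ge 3$ continues to hold and keeps the size-$4$ bag within budget.
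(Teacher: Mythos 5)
Your overall architecture mirrors the paper's proof: necessity is read off Observation~\ref{obs:starsandwich} exactly as in the paper, and sufficiency is proved by gluing optimal tree decompositions of the two sides through connector bags anchored at a bag containing $\{u_1,u_2\}$ and a bag containing $\{v_1,v_2\}$, with the case $p<q$ absorbed by the slack $tw([G^{*}])\leq tw(G^{*})+1=p+1\leq q$ (the paper's version of this move is to add $u_2$ to every bag of a decomposition of $G^{*}$). The one real deviation is the connector: you take optimal decompositions of the \emph{augmented} graphs $[G^{*}],[G^{**}]$ and join them by a single size-4 bag $Z=\{u_1,u_2,v_1,v_2\}$, whereas the paper takes optimal decompositions of $G^{*},G^{**}$ (using that $tw([G^{**}])=tw(G^{**})$ holds iff some optimal decomposition of $G^{**}$ has $v_1,v_2$ in one bag) and joins them by \emph{two} size-3 bags $\{u_1,v_1,u_2\}$ and $\{u_2,v_1,v_2\}$.

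That deviation creates a genuine gap, and it is propped up by a false claim. You assert that incompatibility of $T_1,T_2$ means ``they cannot agree on both clusters,'' hence $q\geq 3$. This is wrong: the incompatibility can reside entirely in where the split edge attaches. Take $X^{*}=\{a,b,c\}$, $X^{**}=\{d,e,f\}$, let both trees restrict to the unique 3-leaf star on each side, but let the split edge subdivide the pendant edge of $a$ (resp.\ $d$) in $T_1$ and of $b$ (resp.\ $e$) in $T_2$. Then $T_1$ displays the quartet $ad|bc$ while $T_2$ displays $ac|bd$, so the trees are incompatible, yet both cluster pairs are compatible and $p=q=2$. So $q\geq3$ is not available, and your size-4 bag gives only $tw(D)\leq\max(q,3)$, which fails exactly when $\max(p,q)=2$. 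Nor is that boundary case vacuous by inspection: for incompatible trees with $q=2$ the theorem asserts that the stated conditions must \emph{fail}, and the only way to establish this is to show that the conditions would force a width-2 decomposition of $D$ --- precisely what your construction cannot deliver, since it is stuck at width 3. The repair is exactly the paper's connector: insert the two size-3 bags $\{u_1,v_1,u_2\}$ and $\{u_2,v_1,v_2\}$ between $B^{*}$ and $B^{**}$; these contribute width only 2, so the glued decomposition has width $\max\bigl(tw([G^{*}]),tw([G^{**}]),2\bigr)=\max(p,q)$ for every value $\max(p,q)\geq 2$, and the unjustified claim about $q\geq 3$ becomes unnecessary.
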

\begin{proof} We consider both cases and both directions of implication.
\begin{enumerate}
\item (Case $p < q$, $\Rightarrow$) Assume $p < q$ and $tw(D(T_1, T_2))  = \max(p,q) = q$. Now, by  Observation \ref{obs:starsandwich},  $tw([G^{**}]) \leq tw(D(T_1,T_2)) = q
    = tw(G^{**})$. The bound $tw(G^{**}) \leq tw([G^{**}])$ also follows from Observation \ref{obs:starsandwich}, so $tw([G^{**}]) = tw(G^{**})$.

\item (Case $p = q$, $\Rightarrow$) Assume $p = q$ and $tw(D(T_1, T_2))  = \max(p,q) = p = q$. Both $tw( [G^{**}] ) = tw( G^{**} )$ and $tw( [G^{*}] ) = tw( G^{*} )$ follow
    from  Observation \ref{obs:starsandwich}.

\item (Case $p < q$, $\Leftarrow$) Observe that the statement $tw( [G^{**}] ) = tw(G^{**})$ holds if and only if there exists a minimum-width tree decomposition of $G^{**}$
    in which $v_1$ and $v_2$ are both in the same bag $B^{**}$. So, let us assume the existence of such a tree decomposition $\mathbb{T}^{**}$ and bag $B^{**}$. Construct a
    minimum-width tree decomposition $\mathbb{T}^{*}$ of $G^{*}$. Suppose $\mathbb{T}^{*}$ contains a bag $B^{*}$ that contains both $u_1$ and $u_2$. We can merge
    $\mathbb{T}^{*}$ and $\mathbb{T}^{**}$ by inserting bags $\{u_1, v_1, u_2\}$ and $\{u_2, v_1, v_2\}$  between $B^{*}$ and $B^{**}$. The size-3 bags do not influence the
    width of the decomposition, so $tw( D(T_1, T_2)) \leq \max(p,q)$, and $tw(D(T_1,T_2)) = \max(p,q)$ then follows from Lemma \ref{lem:cr_bound}. If no such bag $B^{*}$
    exists then create it by first adding (say) $u_2$ to every bag of $\mathbb{T}^{*}$. The addition of $u_2$ to every bag potentially increases the width of $\mathbb{T}^{*}$
    by 1, but due to the fact that $p < q$ we have $p+1 \leq q$, so $\max(p+1, q) \leq \max(p,q)$ and the earlier argument goes through.

\item (Case $p=q$, $\Leftarrow$) This is very similar to the (Case $p < q$, $\Leftarrow$) argument. The main difference is that, due to the strengthened starting assumption,
    both bags $B^{**}$ and $B^{*}$ are guaranteed to exist. Hence the ``If no such bag $B^{*}$...'' part of the argument will never be required.

\end{enumerate}
\end{proof}

The above results show that the treewidth of the display graph behaves rather differently around common splits than other phylogenetic incongruence measures. Many such measures
are (essentially) additive (i.e. the distance is the sum of the $X^{*}$ and $X^{**}$ parts) \cite{BSS06,linz2011cluster,bordewich2017fixed}, contrasting with the maximum function used in treewidth. \steven{As we demonstrate later in Section \ref{sec:gapBig} this is one of the reasons why treewidth distance can be substantially lower than, for example, $d_{MAF}$.} A second point worth noting is that, while Theorem \ref{thm:cr_nec_suff} describes necessary and
sufficient conditions for the treewidth of the display graph to achieve the lower bound, it is not yet clear what (phylogenetic) properties of $T_1$ and $T_2$ actually create
these conditions. Expressed differently, and for simplicity focussing on the case $p < q$: what properties do $T_1$ and $T_2$ need to have to ensure $tw( [G^{**}] ) =
tw(G^{**})$? \steven{It is perhaps relevant to observe that the graphs $[G^{*}], [G^{**}]$ can
themselves be viewed, modulo a treewidth-invariant suppression of a single degree-2 vertex, as display graphs of appropriately rooted phylogenetic trees. Taking $[G^{*}]$ as an example: take the two
trees $T_1|X^{*}$ and $T_2|X^{*}$ and attach a new placeholder taxon $\rho$ at points $u_1$
and $u_2$, respectively.}

\section{Diameters on $d_{tw}$}
\label{sec:upperbounds}

In this section we explore the question of how large can the treewidth of the display graph of two unrooted binary phylogenetic trees, both on $X$, can get. More precisely, we
consider the diameter $\Delta_n(d_{tw})$  defined as the maximum value $d_{tw}$ taken over all pairs of phylogenetic trees with $n$ taxa.
 Somewhat surprisingly,  we show that $\Delta_n(d_{tw})$ is bounded below and above by linear functions on $n$.
  To prove this, we first present a general result showing how we can embed an arbitrary graph into display
graphs (as minors) without adding too many extra edges or vertices.

\begin{thm}
\label{thm:embed} Let $G=(V,E)$ be an undirected (multi)graph with $n$ vertices and maximum degree $d \geq 2$. Then we can construct two unrooted binary phylogenetic trees
$T_1$  and $T_2$ such that both trees have $O(nd)$ taxa, $O(nd)$ nodes and $O(nd)$ edges (and hence their display graph has $O(nd)$ nodes and edges) and $G$ is a minor of
$D(T_1, T_2)$.
\end{thm}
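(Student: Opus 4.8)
The plan is to build the two trees so that each vertex of $G$ becomes a connected \emph{branch set} in $D(T_1,T_2)$, and each edge of $G$ becomes a single edge running between the corresponding branch sets; the minor relation then follows directly from the branch-set characterisation of minors. The guiding principle is that the only way two nodes lying in different trees can become adjacent in the display graph is through a shared taxon, so the taxa will serve as the ``wires'' of the construction.

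First I would introduce the taxa. For every vertex $v_i$ of $G$ create one \emph{connector} taxon $z_i$, and for every edge $e$ of $G$ create one \emph{edge} taxon $t_e$; set $X = \{z_1,\dots,z_n\}\cup\{t_e : e \in E\}$, so $|X| = n + |E| = O(nd)$. Orient each edge $e = \{v_i,v_j\}$ arbitrarily, say $v_i \to v_j$. In $T_1$ I would give each vertex $v_i$ its own \emph{region} $R_i^1$, a caterpillar whose pendant taxa are $z_i$ together with every $t_e$ for which $v_i$ is the \emph{tail} of $e$; in $T_2$ I would analogously give $v_i$ a region $R_i^2$ whose pendant taxa are $z_i$ together with every $t_e$ for which $v_i$ is the \emph{head} of $e$. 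Because $G$ has maximum degree $d$, each region carries at most $d+1$ taxa and hence has $O(d)$ nodes. I would then string the regions $R_1^1,\dots,R_n^1$ along a backbone path to form the single tree $T_1$, and likewise for $T_2$, yielding trees with $O(nd)$ nodes and edges in total. A constant number of auxiliary pendant taxa per region (added to both trees, and irrelevant to the minor since they can simply be deleted) suffices to make every internal vertex have degree exactly $3$ and every leaf labelled, so that $T_1,T_2$ are genuine unrooted binary phylogenetic trees while the $O(nd)$ bounds survive.

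Next I would verify the minor. In $D = D(T_1,T_2)$ the two copies of each taxon are identified, so $z_i$ becomes a single vertex adjacent to a node of $R_i^1$ and a node of $R_i^2$, while $t_e$ (for $e=\{v_i,v_j\}$ oriented $v_i \to v_j$) becomes a single vertex adjacent to $R_i^1$ via its $T_1$-parent and to $R_j^2$ via its $T_2$-parent. Define the branch set $B_{v_i} = R_i^1 \cup \{z_i\} \cup R_i^2$ together with every edge taxon $t_e$ whose tail is $v_i$. Each $B_{v_i}$ is connected (the two regions are glued through $z_i$, and each absorbed $t_e$ hangs off $R_i^1$), and the branch sets are pairwise disjoint (the regions live in disjoint parts of $T_1$ and of $T_2$, the $z_i$ are distinct, and each $t_e$ is absorbed into exactly one branch set). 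Finally, for $e = \{v_i,v_j\}$ the vertex $t_e \in B_{v_i}$ is adjacent in $D$ to $R_j^2 \subseteq B_{v_j}$, producing the required edge between $B_{v_i}$ and $B_{v_j}$. After deleting the auxiliary taxa this exhibits $G$ as a minor of $D$; parallel edges of a multigraph cause no difficulty, since each receives its own taxon and its own crossing edge.

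The main obstacle is bookkeeping rather than conceptual: realising the regions and the backbone as a genuine \emph{binary} phylogenetic tree, with all internal degrees equal to $3$ and every leaf labelled, without inflating the size beyond $O(nd)$, and checking that the auxiliary taxa introduced to this end never create spurious adjacencies that survive into the minor. Everything else — connectivity and disjointness of the branch sets and the presence of the crossing edges — follows immediately once the taxa are placed as above.
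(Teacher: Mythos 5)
Your construction is correct, and it is a genuinely different one from the paper's. The paper starts by setting $T_1 := T_2 := T$ for an arbitrary binary tree $T$ on $n+2$ taxa, identifies the vertices of $G$ with the internal nodes of $T_1$, and realises each edge $\{u,v\}$ of $G$ as a path $u,x^e_1,z,x^e_2,v$ using \emph{two} fresh taxa $x^e_1,x^e_2$ and one fresh internal vertex $z$ of $T_2$; since this inflates the degrees in $T_1$ up to $d+3$, a final vertex-splitting step is needed to restore binarity, and the images of the $G$-vertices end up being paths inside $T_1$ alone. You instead distribute each vertex of $G$ across \emph{both} trees (a $T_1$-region and a $T_2$-region glued through a connector taxon $z_i$), orient the edges, and spend only \emph{one} taxon $t_e$ per edge, with the crossing edge at $t_e$ (from its $T_1$-parent in the tail's region to its $T_2$-parent in the head's region) realising the edge of $G$. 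Both routes give $O(nd)$ taxa, nodes and edges; yours is slightly more economical ($n+|E|$ taxa versus $(n+2)+2|E|$), makes the branch-set model of the minor fully explicit, handles parallel edges transparently, and avoids the paper's post-hoc binarization step because the caterpillar regions are binary by construction (only the two backbone ends of each tree need a degree fix, which your auxiliary taxa, or simply suppression of the two degree-2 end vertices, handle). What the paper's version buys in exchange is that vertex images live entirely inside $T_1$, which is convenient for its later, more specialised refinement embedding grid minors, and it tracks explicit constants. One small point worth stating explicitly in your write-up: the backbone edges between consecutive regions, and any edges incident to auxiliary taxa, create adjacencies between branch sets that are not edges of $G$, but this is harmless since the branch-set characterisation of minors only requires an edge of $D$ \emph{for each} edge of $G$ — superfluous edges and vertices are simply deleted.
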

\begin{proof}
The construction can easily be computed in polynomial time. We start by selecting an arbitrary unrooted binary tree $T$ on $n+2$ taxa. Set $T_1 := T$ and $T_2 := T$. The idea
is that the $n$ internal nodes of $T_1$ are in bijection with the $n$ vertices of $G$. We will add the edges of $G$ one at a time, in the following manner. If an edge $e =
\{u,v\}$ of $G$ already exists within $T_1$, the edge is already encoded so there is nothing to do. If not, we subdivide an arbitrary edge in $T_2$ and let $y$ be the
subdivision node. We then introduce two new taxa $x^{e}_1$ and $x^{e}_2$ and a new vertex $z$ in $T_2$, and add the following edges: $\{u, x^{e}_1\}$, $\{x^{e}_1,  z\}$, $\{z,
y\}$, $\{z, x^{e}_2\}$ and $\{x^{e}_2, v\}$. The first and last of these edges is in $T_1$, the rest are in $T_2$. In the display graph the path $u, x^{e}_1, z, x^{e}_2, v$
will become the image of the edge $\{u,v\}$ (in the embedding of the minor). After encoding all the edges, $T_1$ and $T_2$ will each have at most $k  = (n+2)+2|E|$ taxa, so
(because $T_2$ remains binary) each will have at at most $k-2$ internal nodes and each at most $2k-3$ edges. Now, observe that the $n$ internal nodes of $T_1$ might have degree
as large as $d+3$. To turn $T_1$ into a binary tree we replace each vertex $u$, where $deg(u) > 3$, by a path of $t =deg(u)-2$ vertices $u_1, ..., u_t$. The first two edges
incident to $u$ are now made incident to $u_1$, the final two edges incident to $u$ are made incident to $u_t$, and each of the remaining edges is made incident to exactly one
of the nodes $u_2, ..., u_{t-1}$. (When obtaining $u$ from the embedding of $G$, the idea is that the edges of the path will be contracted to retrieve $u$). This transformation
does not alter the number of taxa, so $T_1$ and $T_2$ now have both the same number of internal nodes and edges (i.e. at most $k-2$ and $2k-3$ respectively). Due to the fact
that $G$ has maximum degree $d$, $|E| \leq nd/2$. We conclude that both trees each \twu{has} at most $(n+2) + nd$ taxa, at most $n(d+1)$ internal nodes and at most $2n + 4 + 4|E| -
3 \leq 2n + 1 + 2nd$ edges. It follows that $D(T_1, T_2)$ has at most $2n(d+1) + ((n+2)+nd)$ nodes in total and at most $4n+2+4nd$ edges.
\end{proof}

Applying the last theorem on complete graphs leads to a lower bound on $\Delta_n(d_{tw})$ which grows linearly on $\sqrt{n}$.  To get a better lower bound,   below we use the
fact that there are cubic expanders on $n$ vertices with treewidth at least $\epsilon n$, for some constant $\epsilon > 0$ (see, for example,
\cite{DBLP:journals/jct/GroheM09,DBLP:conf/gd/DujmovicEW15}).

\begin{cor}
\label{cor:lintw}
We have $\Delta_n(d_{tw})=\Theta(n)$ for $n\geq 4$, that is, $\Delta_n(d_{tw})$ is bounded below by a linear function on $n$ and above by a linear function on $n$.
\end{cor}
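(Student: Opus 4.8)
The plan is to prove the two linear bounds separately; the upper bound is elementary and the lower bound is where essentially all the work (and the only real obstacle) lies.

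For the upper bound $\Delta_n(d_{tw}) = O(n)$ I would simply count vertices. Each of $T_1,T_2$ is an unrooted binary tree on $n$ taxa and hence has exactly $n-2$ internal vertices; after identifying the $n$ shared leaves, the display graph $D(T_1,T_2)$ has $n + 2(n-2) = 3n-4$ vertices. Since the treewidth of any graph is at most its number of vertices minus one (a fact already noted in the preliminaries), we get $tw(D(T_1,T_2)) \leq 3n-5$, whence $d_{tw}(T_1,T_2) = tw(D) - 2 \leq 3n-7$ for every pair, and in particular $\Delta_n(d_{tw}) = O(n)$.

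For the lower bound $\Delta_n(d_{tw}) = \Omega(n)$ I would combine Theorem \ref{thm:embed} with the existence of cubic expanders of linear treewidth. Fix a family of $3$-regular graphs $\{G_m\}$ on $m$ vertices with $tw(G_m) \geq \epsilon m$ for a universal constant $\epsilon > 0$. Applying Theorem \ref{thm:embed} with maximum degree $d = 3$ produces, for each $G_m$, a pair of unrooted binary trees $T_1,T_2$ on $N = O(m)$ taxa such that $G_m$ is a minor of $D(T_1,T_2)$. Because treewidth is non-increasing under minors, $tw(D(T_1,T_2)) \geq tw(G_m) \geq \epsilon m$; and since $N \leq Cm$ for the constant $C$ implicit in the $O(nd)$ bound of Theorem \ref{thm:embed}, this yields $d_{tw}(T_1,T_2) \geq (\epsilon/C)N - 2 = \Omega(N)$, so $\Delta_N(d_{tw}) = \Omega(N)$ for every size $N$ realised by the construction.

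The one genuine technical point, which I expect to be the main obstacle, is upgrading this from the discrete set of sizes produced by the construction to \emph{all} $n \geq 4$. Here I would first observe that $\Delta_n(d_{tw})$ is non-decreasing in $n$: attaching one extra pendant taxon to both trees (subdividing an edge in each and hanging the new leaf there) gives a display graph $D'$ from which the old $D$ is recovered by deleting the new leaf and suppressing the two resulting degree-$2$ vertices, both minor operations, so $tw(D') \geq tw(D)$ and hence $\Delta_{n+1}(d_{tw}) \geq \Delta_n(d_{tw})$. Combining this monotonicity with the fact that cubic expanders of linear treewidth exist for a sufficiently dense set of sizes (for instance all even $m$, via random $3$-regular graphs), every target $n$ lies within a constant factor of some realised size $N \leq n$, and padding that pair up to exactly $n$ taxa preserves $d_{tw} = \Omega(n)$. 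This gives $\Delta_n(d_{tw}) = \Omega(n)$ for all $n$, and together with the upper bound establishes $\Delta_n(d_{tw}) = \Theta(n)$.
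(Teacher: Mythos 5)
Your proof is correct, and its core coincides with the paper's: the lower bound is exactly the paper's argument, namely cubic expanders of treewidth $\Omega(m)$ fed through Theorem~\ref{thm:embed} with $d=3$, plus minor-monotonicity of treewidth. Where you genuinely diverge is the upper bound. You count vertices of the display graph ($n + 2(n-2) = 3n-4$ vertices, so $tw(D) \leq 3n-5$ and $d_{tw} \leq 3n-7$), whereas the paper invokes $\Delta_n(d_{tw}) \leq \Delta_n(d_{TBR})$ (from the already-established inequality $d_{tw} \leq d_{TBR}$) together with the known diameter bound $\Delta_n(d_{TBR}) \leq n-3-\lfloor(\sqrt{n-2}-1)/2\rfloor$ of Ding et al. Your route is more elementary and self-contained, needing no external citation; the paper's route buys a sharper constant (roughly $n$ rather than $3n$), which is immaterial for a $\Theta(n)$ statement but yields a better explicit bound on the diameter. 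You also go beyond the paper in one respect: the paper silently passes over the fact that the expander construction only realises a discrete set of taxon sizes, whereas your monotonicity argument --- $\Delta_{n+1}(d_{tw}) \geq \Delta_n(d_{tw})$ via pendant-taxon padding, the padded display graph containing the original as a minor --- legitimately extends the lower bound to all $n \geq 4$. That is a small but genuine gap in the paper's write-up that your proof closes.
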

\begin{proof}
As mentioned above, it is a well known fact that there are cubic expanders on $n$ vertices with treewidth linear in $n$. The maximum degree of such graphs is 3, so for every
cubic expander $G$ there exists (by the previous theorem) a display graph on $O(n)$ vertices that contains $G$ as a minor. Hence, the treewidth of the display graph is at least
that of the expander, which is linear in $n$. This establish the linear lower bound on $\Delta_n(d_{tw})$. The upper bound follows from
$$
\Delta_n(d_{tw}) \leq \Delta_n(d_{TBR}) \leq n-3 -\left \lfloor \frac{\sqrt{n-2}-1}{2} \right\rfloor ,
$$
where the second inequality follows from~\cite[Theorem 1.1]{ding2011agreement}.

\end{proof}

\begin{figure}[ht]
\centering
\includegraphics[scale=0.95]{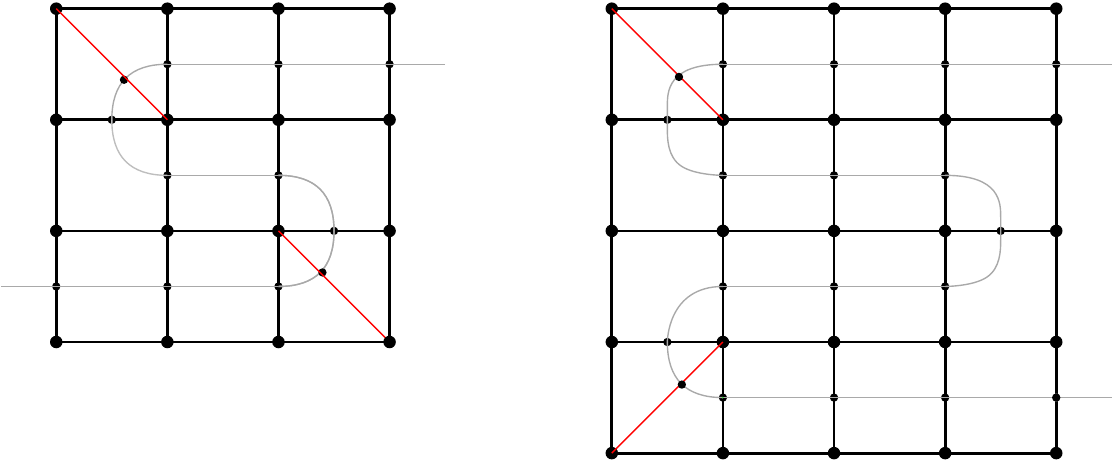}
\caption{Embedding grid minors in the display graphs of two unrooted binary trees: grids with even side length (left; $k=4$) and odd side length (right; $k=5$). Taxa are shown as small dots inside the grid. Both trees have exactly $(k-1)^2 + 3$ taxa.}
\label{fig:gridpack}
\end{figure}

The construction (and bounds) described in Theorem \ref{thm:embed} can be refined significantly in specific cases. Consider the $k \times k$ grid graph, which has maximum
degree 4 and $k^2$ nodes. When taking $n=k^2$ the theorem yields a bound of $ \approx 13n$ nodes. However, consider the construction shown in Figure \ref{fig:gridpack}, which distinguishes
the cases $k$ even and $k$ odd. The two sides of the curve indicate the two trees that are needed and the points at which the curve touches the grid become the taxa of the two
trees. Note that the \steven{red} edges are added simply to ensure that all 4 corners of the grid (which have degree-2) can be correctly retrieved when taking minors; 2 of the
corners are already present (because the curve intersects a neighbouring edge) but the other 2 require the addition of the red edges\footnote{Note that if we ``round off''
the 4 corners of the grid its treewidth (which is $k$) is unaffected and the red edges are not required.}. As in the theorem the degree-4 nodes can be split into two
degree-3 nodes. In both the odd and even cases it can be verified that both the resulting unrooted binary trees have $(k-1)^2 + 3$ taxa and thus that the display graph has
$3(k-1)^2 + 5$ nodes in total. This is $\approx 3n$, a significant improvement on the generic bound. In fact it is not far from ``best possible''. A $k \times k$ grid contains
$(k-1)^2$ chordless 4-cycles, and because a tree cannot contain a cycle the embedding of each cycle must pass through at least 2 taxa in the display graph. Each taxon can be
shared by at most two 4-cycles (because the display graph has maximum degree 3) yielding a lower bound on the number of taxa required of $(k-1)^2$.

\section{Display graphs formed from trees and networks}
\label{sec:treeNets}

In this section we will consider the display graph formed by an unrooted binary phylogenetic network $N = (V,E)$ and an unrooted binary phylogenetic tree $T$ both on the same set of taxa $X$. We will show
upper bounds on the treewidth of $D(N,T)$ in term of the reticulation number $r(N) (= |E| - (|V|-1))$ and the treewidth $tw(N)$ of $N$.



We begin with the first result that gives a sharp upper bound on the treewidth of the display graph in terms of the reticulation number of $N$.

\begin{lem}
\label{lem:reticbound} Let $N = (V,E)$ be an unrooted binary phylogenetic network and $T$ an unrooted binary phylogenetic tree, both on $X$, where $|X| \geq 3$. If $N$ displays
$T$ then $tw( D(N,T)) \leq r(N)+2$. \label{lem:twbound}
\end{lem}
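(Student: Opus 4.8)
The plan is to build an explicit tree decomposition of $D(N,T)$ of width at most $r(N)+2$, starting from a width-$2$ decomposition of a tree-like ``core'' and then paying one extra unit per reticulation edge. First I would exploit the hypothesis that $N$ displays $T$. Let $N'$ be an image of $T$ in $N$, i.e.\ a minimal connected subgraph of $N$ spanning $X$ that is a subdivision of $T$ (its existence is guaranteed by the definition of ``displays'' and the accompanying Observation). Since $N'$ is a connected acyclic subgraph of the connected graph $N$, I can greedily extend it to a spanning tree $S$ of $N$ with $N' \subseteq S$. Set $F := E(N)\setminus E(S)$; because $S$ is a spanning tree, $|F| = |E(N)| - (|V(N)|-1) = r(N) =: r$, and moreover $F \cap E(N') = \emptyset$.

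Second, I would show that the ``core'' display graph $D(S,T)$ --- the subgraph of $D(N,T)$ obtained by deleting the $r$ edges of $F$ --- has treewidth exactly $2$. The key point is that $S|X = N'$ (since $N' \subseteq S$ is already a minimal connected subgraph of the tree $S$ spanning $X$), and $N'$ is a subdivision of $T$. Hence, after suppressing degree-$2$ vertices --- which does not change the treewidth by Observation~\ref{obs:2}, as the graph in question is not a unique-triangle graph for $|X| \geq 3$ --- the $X$-spanning part of $D(S,T)$ is exactly $D(T,T)$, the display graph of two copies of the (trivially compatible) tree $T$, which has treewidth $2$. Everything in $S$ lying outside $N'$ consists of subtrees pendant to $N'$ whose leaves lie outside $X$; in $D(S,T)$ these remain pendant (each attaches to $N'$ at a single vertex and shares no vertex with the copy of $T$), so they do not raise the treewidth above $2$. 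Thus $tw(D(S,T)) = 2$, and I fix a tree decomposition $(\mathcal{B},\mathbb{T})$ of $D(S,T)$ of width $2$.

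Finally, I would reintroduce the reticulation edges. Writing $F = \{\{a_1,b_1\},\dots,\{a_r,b_r\}\}$, I add the single endpoint $b_i$ of each $F$-edge to every bag of $(\mathcal{B},\mathbb{T})$. Since each $a_i$ already appears in some bag (property (tw1) for $D(S,T)$) and each $b_i$ now appears in all bags, property (tw2) is restored for every edge of $F$; property (tw3) is preserved because each newly ubiquitous $b_i$ occupies the (trivially connected) set of all bags while no other vertex's bag-set changes. This adds at most $r$ distinct vertices to each bag, so the width rises from $2$ to at most $r+2$, yielding a valid tree decomposition of $D(N,T)$ and hence $tw(D(N,T)) \leq r(N)+2$. (As a sanity check, when $r(N)=0$ the network $N$ is itself a tree isomorphic to $T$, and the bound correctly reduces to $tw(D(T,T))=2$.)

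The main obstacle --- and the only place where the display hypothesis is genuinely used --- is the middle step, establishing $tw(D(S,T)) = 2$: I must argue carefully that extending the image $N'$ to a spanning tree $S$ only attaches treewidth-irrelevant pendant trees and does not secretly create cycles that interact with the copy of $T$. Everything else (extending $N'$ to $S$, the edge count $|F| = r$, and the ``one endpoint in every bag'' trick for feedback edges) is routine.
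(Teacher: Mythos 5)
Your proposal is correct and follows essentially the same route as the paper's proof: extend the image of $T$ to a spanning tree of $N$, observe that its display graph with $T$ has treewidth $2$ (being $D(T,T)$ up to suppressions, deletions of unlabelled pendant parts, and subdivisions), and then pay at most one unit of width per each of the $r(N)$ deleted edges. The only cosmetic difference is in the last step, where the paper simply invokes the fact that adding an edge increases treewidth by at most $1$, applied $r(N)$ times, while you explicitly realize this by placing one endpoint of each feedback edge in every bag --- a standard unpacking of the same fact.
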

\begin{proof}
\steven{Due to the fact that $N$ displays $T$, there is a subgraph $T'$ of $N$ that is a subdivision of $T$.  If $T'$ is a spanning tree of $N$, then let $N' = T'$. Otherwise,
construct a spanning tree $N'$ of $N$ by greedily adding edges to $T'$ until all vertices of $N$ are spanned.} At this point, $N'$ contains exactly $|V|-1$ edges and consists of a subdivision of $T$
from which possibly some unlabelled pendant subtrees (i.e. pendant subtrees without taxa) are hanging. We argue that $D(N',T)$ has treewidth 2, as follows. First, note that $D(T,T)$ has treewidth 2, because $T$ is trivally compatible with $T$ (and $|X| \geq 3$). Now, $D(T,T)$ can be obtained from $D(N', T)$ by repeatedly deleting unlabelled vertices of
degree 1 and suppressing unlabelled degree 2 vertices. These operations cannot increase or decrease the treewidth (because $tw(D(N',T)) = 2$ and because the pathological case of Observation \ref{obs:2} does not apply here).
 Hence, $D(N', T)$ has treewidth 2. Now, $D(N,T)$ can be obtained from $D(N',T)$ by adding back the $r(N) =
|E|-(|V|-1)$ missing edges. The addition of each edge can increase the treewidth by at most 1, so $tw(D(N,T))
\leq 2 + r(N)$.
\end{proof}

We note that this bound is sharp, since if $N=T$ then $r(N)=0$ and $D(N,T)$ has treewidth 2. Also, observe that an essentially unchanged argument shows that if two networks
$N_1$ and $N_2$, both on $X$, both display some tree $T$ on $X$, then $tw(D(N_1, N_2)) \leq r(N_1) + r(N_2) + 2$.

Now we derive a second upper bound in terms of the treewidth of the network $N$.

\begin{lem}
\label{lem:twbound} Let $N = (V,E)$ be an unrooted binary phylogenetic network and $T$ an unrooted binary phylogenetic tree, both on $X$, where $|X| \geq 3$. If $N$ displays
$T$ then $tw( D(N,T)) \leq 2 tw(N) + 1$.
\end{lem}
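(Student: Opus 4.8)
The plan is to build a tree decomposition of $D(N,T)$ directly from an optimal tree decomposition of $N$ by \emph{folding} the tree $T$ into it, in such a way that each bag grows by at most a factor of two; this doubling is exactly what produces the $2\,tw(N)+1$ bound.

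First I would extract from the hypothesis ``$N$ displays $T$'' the branch-set model supplied by the structural observation recorded in Section~2: a subtree $N'$ of $N$ and a surjection $f\colon V(N')\to V(T)$ whose fibres $S_w := f^{-1}(w)$ are pairwise disjoint, each induce a connected subtree of $N'$, and are such that every edge $\{w,w'\}\in E(T)$ is witnessed by a \emph{unique} edge of $N'$ running from $S_w$ to $S_{w'}$. Before using this I would normalise the model so that $S_\ell=\{\ell\}$ for every leaf $\ell\in X$: since each taxon has degree $1$, the $N'$-path realising its incident edge is internally disjoint from all others, so I can reassign all of its interior (degree-$2$) vertices to the branch set of the internal endpoint while keeping every fibre connected and preserving the unique-crossing-edge property. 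This normalisation is what lets the (tw2)-check go through cleanly at the leaves.

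Next I would take an optimal tree decomposition $(\mathcal{B},\mathbb{T})$ of $N$, of width $tw(N)$, keep its underlying tree $\mathbb{T}$ unchanged, and replace each bag $B$ by $B^{*}:=B\cup\{\,w\in V(T)\setminus X : S_w\cap B\neq\emptyset\,\}$. The width estimate is then immediate from disjointness of the fibres: each $v\in B$ lies in at most one $S_w$, so at most $|B|$ internal $T$-vertices are added and $|B^{*}|\le 2|B|\le 2(tw(N)+1)$, i.e. width at most $2\,tw(N)+1$. For validity, (tw1) is clear (every $N$-vertex already appears, and every internal $w$ has $S_w\neq\emptyset$ and hence meets some bag); for (tw2), an edge of $N$ is covered as before, while an edge $\{w,w'\}$ of $T$ is covered by any bag containing its unique witnessing $N'$-edge $\{\alpha,\beta\}$, because $\alpha\in S_w$ and $\beta\in S_{w'}$ force both $w$ and $w'$ into that augmented bag — and here the leaf normalisation guarantees $\alpha=\ell$ when $w=\ell$ is a taxon, so the shared leaf vertex itself is present.

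The step I expect to be the main obstacle is verifying the running-intersection property (tw3) for the newly inserted $T$-vertices. For an internal $w$, the bags of the augmented decomposition containing $w$ are exactly those $B$ with $S_w\cap B\neq\emptyset$, i.e. $\bigcup_{v\in S_w}\{B : v\in B\}$; I would show this is a connected subtree of $\mathbb{T}$ via the standard lemma that, in any tree decomposition, the bags meeting a \emph{connected} subgraph form a connected subtree (each $\{B : v\in B\}$ is a subtree, and adjacent vertices of $S_w$ share a bag since their edge is covered, so the union of these subtrees is connected). Running intersection for the original $N$-vertices is untouched by the construction. Combining the width bound with validity then yields $tw(D(N,T))\le 2\,tw(N)+1$.
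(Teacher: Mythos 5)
Your proposal is correct and is essentially the paper's own proof: both take an optimal tree decomposition of $N$ and add to each bag the $T$-vertices whose fibres (under the display surjection of the Observation in Section~2) meet that bag, verifying (tw1) via surjectivity, (tw2) via the unique witnessing edge in $N'$, (tw3) via connectedness of the fibres, and obtaining the width bound because each bag at most doubles. Your leaf-normalisation step is a harmless extra (the paper instead relies on the fact that adding $f(\ell)=\ell$ to a bag is vacuous since taxa are identified in $D(N,T)$), but otherwise the arguments coincide.
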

\begin{proof}
\steven{Since $N$ displays $T$ there is a subgraph $N'$ of $N$ that is a subdivision of $T$. We consider the surjection function $f$ defined in the preliminaries (which maps from vertices of $N'$ to vertices of $T$). Informally, $f$ maps taxa to taxa and  degree-3 vertices of $N'$ to the corresponding vertex of $T$. Each degree-2 vertex of $N'$ lies on a path corresponding to an edge $\{u,v\}$ of $T$; such vertices are mapped to $u$ or $v$, depending on how exactly the surjection was constructed.}


Now, consider any tree decomposition $t$ of $N$. Let $k$ be the width of the tree decomposition i.e., the largest bag in the tree decomposition has size $k+1$. We will
construct a new tree decomposition $t'$ for $D(N,T)$ as follows. For each vertex
$u' \in V(N')$
we add $f(u')$ to every bag that contains $u'$.
To show that $t'$ is a valid tree decomposition for $D(N,T)$ we will show
that it satisfies all the treewidth conditions. Condition (tw1) holds because $f$ is a surjection.
For property (tw2) we need to show that for every edge $e = \{ u,v \} \in E(T)$, there exists some bag $B \in V(t):
\{ u,v \} \subset B$. For this we use the third property of $f$ described in our observation: $\forall \{ u,v \} \in E(T), \exists_1 \{ \alpha, \beta \} \in E(N'):$ $f(\alpha)
= u$ and $f(\beta) = v$. For each $e = \{u,v\} \in E(T)$, let $\{ \alpha, \beta \} \in E(N')$ be
the edge which is
mapped through $f$ to e. Since $\{ \alpha, \beta \} \in E(N)$, there must be a bag $B \in V(t)$ that contains both of $\alpha, \beta$. Since $f(\alpha) = u$ and $f(\beta) = v$,
both of $u,v$ will be added into $B$. For the last property (tw3) we need to show that the bags of $t$ where $u \in V(T)$ have been added form a connected component. For this,
we use property (2) of the function $f$: $\forall v \in V(T)$, the set $\{ u \in V(N'): f(u) = v \}$ forms a connected subtree in $N'$. \steven{Hence, the set of bags that contain at least one element from
$\{ u \in V(N'): f(u) = v \}$ form a connected subtree in the tree decomposition.} These are the bags to which $v$ is
added, ensuring that (tw 3) indeed holds for $v$.

We now calculate the width of $t'$: Observe that the size of each bag can at most double. This can happen when every vertex in the bag is in $V(T')$, and $f(u') \neq f(v')$ for
every two vertices $u', v'$ in the bag. This causes the largest bag after this operation to have size  at most $2(k+1)$ i.e., the width of the new decomposition is at most
$2k+1$.
\end{proof}

Combining these two lemmas yields the following:

\begin{cor}
\label{cor:twbound}
Let $N$ be an unrooted binary phylogenetic network and $T$ be an unrooted binary phylogenetic tree, both on $X$. Then if $N$ displays $T$, $tw( D(N,T)) \leq \min\{ 2 tw(N) + 1, r(N) + 2 \}$.
\end{cor}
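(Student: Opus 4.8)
The plan is to derive this corollary as an immediate consequence of the two preceding lemmas, since it makes no new structural claim of its own. The key observation is that Lemma \ref{lem:reticbound} and Lemma \ref{lem:twbound} are both proved under \emph{exactly} the hypothesis appearing in the corollary, namely that $N$ is an unrooted binary phylogenetic network, $T$ an unrooted binary phylogenetic tree on the same taxon set $X$, and that $N$ displays $T$. Thus both upper bounds hold simultaneously for the single quantity $tw(D(N,T))$.

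First I would invoke Lemma \ref{lem:reticbound} to obtain $tw(D(N,T)) \leq r(N) + 2$, and then independently invoke Lemma \ref{lem:twbound} to obtain $tw(D(N,T)) \leq 2\,tw(N) + 1$. The final step is the elementary fact that a real number bounded above by two quantities is bounded above by their minimum: since $tw(D(N,T))$ does not exceed either $r(N)+2$ or $2\,tw(N)+1$, it does not exceed $\min\{2\,tw(N)+1,\ r(N)+2\}$, which is precisely the claimed bound.

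There is essentially no obstacle here, as this is a direct conjunction of two already-established inequalities rather than a fresh argument. The only point that warrants a brief mention is that both lemmas carry the standing assumption $|X| \geq 3$; one should note that this is assumed throughout (or, if one wishes to be exhaustive, dispatch the boundary case $|X| < 3$ separately, where $N = T$ is forced to be a single edge or vertex, $r(N) = 0$, and the display graph has treewidth $2$, so both bounds trivially hold). No interaction between the two bounds needs to be analysed, and neither bound dominates the other in general: which of $2\,tw(N)+1$ and $r(N)+2$ is smaller depends on the particular network, so retaining the minimum is genuinely the sharpest conclusion obtainable from the two lemmas.
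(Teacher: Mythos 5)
Your proposal is correct and takes exactly the paper's route: the paper gives no separate argument for Corollary~\ref{cor:twbound}, simply noting that combining Lemma~\ref{lem:reticbound} (the bound $r(N)+2$) with Lemma~\ref{lem:twbound} (the bound $2\,tw(N)+1$) and taking the minimum yields the claim. One small caveat: your optional parenthetical on $|X|<3$ is inaccurate---a binary network on one or two taxa need not coincide with $T$ and can have $r(N)>0$---but your primary resolution, treating $|X|\geq 3$ as the standing assumption inherited from the two lemmas, is the right (and the paper's implicit) one.
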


\begin{figure}[ht]
\centering
\includegraphics[scale=0.8]{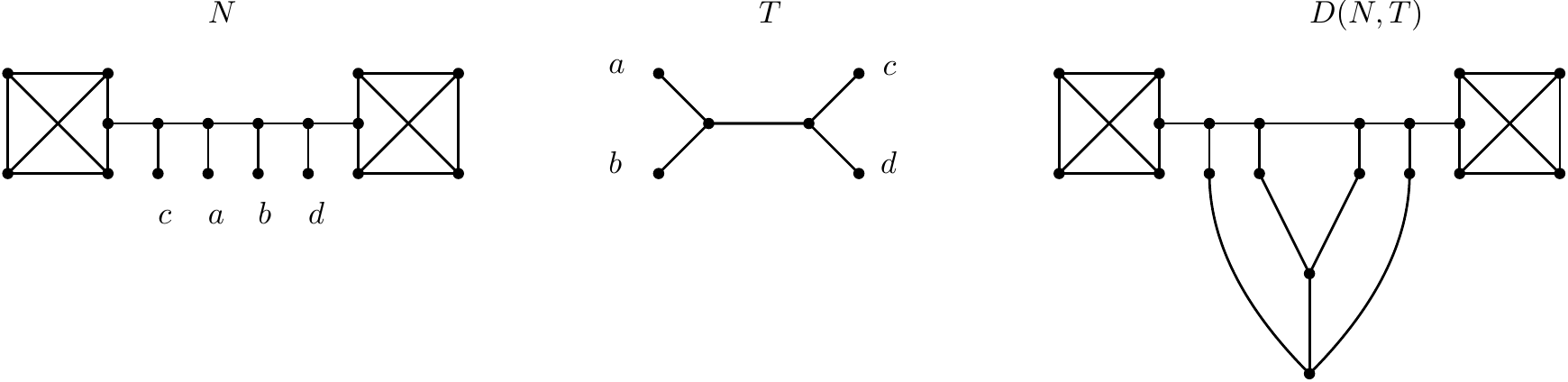}
\caption{The network $N$ does not display the tree $T$ but the treewidth of their display graph is equal to the treewidth of $N$ which is equal to 3.}
\label{fig:net}
\end{figure}

The above results raises a number interesting points. First, consider the case where a binary network $N$ \emph{does not} display a given binary phylogenetic network $T$. As we can see in Figure \ref{fig:net}, there
is a network $N$ and a tree $T$ such that $N$ does not display $T$ and yet the treewidth of their display graph is equal to the treewidth of $N$ which (as can be easily verified) is equal to three. Hence ``does not display'' does not necessarily
cause an increase in the treewidth. On the other hand, the results in Section \ref{sec:upperbounds} show that for two incompatible unrooted binary phylogenetic trees (vacuously: neither of which displays the other, and both of which have treewidth 1) the treewidth of the display graph can be as large as linear in the size of the trees. The increase
in treewidth in this situation is asymptotically maximal. So the relationship between ``does not display'' and treewidth
is rather complex. Contrast this with the
bounded growth in treewidth articulated in Corollary \ref{cor:twbound}. Such bounded growth opens the door to
algorithmic applications. In Section \ref{sec:courcelle} (in the Appendix) we leverage this bounded growth to
obtain a (theoretical) FPT result for determining whether a network displays a tree.

Second, observe that $tw(N)$ is equal to the maximum treewidth ranging over all biconnected components of $N$. An upper bound on the treewidth of a biconnected component $K$ is
$r(K)$ i.e. the number of edges that need to be deleted from $K$ to obtain a spanning tree of the component. In phylogenetics the maximum value of $r(K)$ ranging over all
biconnected components $K$ of $N$ is a well-studied parameter known as \emph{level} \cite{GBP2012,HusonRuppScornavacca10}. So $tw(N) \leq \emph {level}(N)$. Hence, if $N$
displays $T$, then the treewidth of $D(N,T)$ is also bounded as a function of the level of $N$.

\section{The unit ball of $d_{tw}$ compared to that of $d_{TBR}$ and $d_{MP}$}
\label{sec:unitball}

\twu{In this section we will compare the unit ball neighborhood of $d_{tw}$ with those of $d_{TBR}$ and $d_{MP}$. Recall that given a distance $d$ and a phylogenetic tree $T$ on $X$ the unit
neighborhood of $T$ under $d$ is the set of all phylogentic trees $T'$ on $X$ with the property that $d(T,T') = 1$ (see, e.g.\cite{humphries2013neighborhoods, moulton2015parsimony}, for results that characterise the unit ball neighbourhoods of $d_{TBR}$ and $d_{MP}$). } We will begin by comparing treewidth with Maximum Parsimony
(MP) unit neighborhoods.


\begin{thm}
\label{thm:mp} Suppose that $T$ and $T'$ are a pair of  unrooted binary phylogenetic trees on $X$ with $d_{MP}(T,T')=1$ or $d_{TBR}(T,T') = 1$. Then we also have  $d_{tw}(T,T')=1$.
\end{thm}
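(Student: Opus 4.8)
The plan is to phrase everything through the identity $d_{tw}(T,T')=1 \iff tw(D(T,T'))=3$ and to isolate a lower bound common to both hypotheses. Since both $d_{MP}$ and $d_{TBR}$ are metrics, a distance of $1$ forces $T\neq T'$, so $T$ and $T'$ are incompatible, and the fact recalled in Section~3 that $tw(D(T_1,T_2))=2$ exactly when $T_1,T_2$ are compatible yields $tw(D(T,T'))\geq 3$ in either case. It therefore remains only to establish the upper bound $tw(D(T,T'))\leq 3$. The TBR hypothesis disposes of this at once: the inequality $tw(D(T,T'))\leq d_{TBR}(T,T')+2$ recorded in Section~3 gives $tw(D(T,T'))\leq 3$, hence $d_{tw}(T,T')=1$.

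The MP hypothesis is the substantive case, and here the TBR bound is useless, because the only relation between the two measures is $d_{MP}\leq d_{TBR}$; thus $d_{MP}(T,T')=1$ places no bound whatsoever on $d_{TBR}(T,T')$, and indeed the infinite-gap examples announced in the introduction realise treewidth $3$ together with arbitrarily large MP distance. I would instead prove the equivalent statement $d_{MP}(T,T')\leq 1 \Rightarrow tw(D(T,T'))\leq 3$ in contrapositive form: if $tw(D(T,T'))\geq 4$ then $d_{MP}(T,T')\geq 2$. As a preprocessing step I would first strip away all common structure using the treewidth-controlled reductions of Section~4 --- suppressing taxa and degree-$2$ vertices via Observation~\ref{obs:2b}, collapsing common pendant subtrees via Theorem~\ref{thm:subtree:reduction}, and separating off common splits via Lemma~\ref{lem:cr_bound} and Theorem~\ref{thm:cr_nec_suff} --- so that any treewidth-$4$ obstruction must survive inside a single irreducible core. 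In that core I would locate, as a minor of $D(T,T')$, one of the four forbidden minors for treewidth $3$ (namely $K_5$, the octahedron, the pentagonal prism and the Wagner graph), and read off from the way this minor is distributed across the $T$-side and the $T'$-side of the display graph two ``independent'' incongruities. Each incongruity corresponds to a split on which $T$ and $T'$ disagree, and from a suitable pair of them I would build a single character $f$ whose two disagreements each cost an extra mutation in the same direction, giving $|l_f(T)-l_f(T')|\geq 2$ and hence $d_{MP}(T,T')\geq 2$.

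The main obstacle is precisely this last conversion, from a treewidth-$4$ obstruction in $D(T,T')$ to a character realising MP distance at least $2$. The difficulty is twofold: the argument must be global rather than local, since a large display graph can still have small treewidth, so I cannot simply point at a bounded subgraph; and I must guarantee that the two incongruities extracted from the forbidden minor are genuinely independent (disjoint in the relevant sense) and that their contributions to $l_f(T)-l_f(T')$ reinforce rather than cancel. Getting the reductions to interact cleanly with this extraction is also delicate, because the chain reduction can lower treewidth by one (Lemma~\ref{chain_sep}) and the cluster reduction behaves like a maximum up to an additive unit, so the preprocessing must be arranged to preserve the treewidth-$\geq 4$ hypothesis exactly rather than approximately. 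Once independence and sign-alignment of the two incongruities are secured, the parsimony bound $|l_f(T)-l_f(T')|\geq 2$ follows, completing the contrapositive and with it the MP case.
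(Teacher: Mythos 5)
Your lower bound and your TBR case are correct and coincide with the paper's own argument: the metric axioms force $T \neq T'$, hence incompatibility and $tw(D(T,T')) \geq 3$, and the inequality $tw(D(T,T')) \leq d_{TBR}(T,T')+2$ from \cite{kelk2015} closes that case. The MP case, however, contains a genuine gap: the decisive step of your plan --- converting a treewidth-$4$ obstruction in $D(T,T')$ into a character $f$ with $|l_f(T)-l_f(T')| \geq 2$ --- is never carried out, and you concede as much. The obstacles you list are real and unresolved: no mechanism is given for extracting two ``independent'' incongruities from a forbidden minor, none for aligning their signs so the mutation costs reinforce rather than cancel, and your preprocessing cannot even be set up as stated, since the chain reduction may lower treewidth by one (Lemma \ref{chain_sep}) and the cluster reduction is only tight up to an additive unit (Lemma \ref{lem:cr_bound}), so the hypothesis $tw(D(T,T')) \geq 4$ is not preserved exactly. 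As written, this is a research programme, not a proof.

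The missing idea --- and the false premise that steers you away from it --- is the characterisation of the MP unit ball. You assert that ``the only relation between the two measures is $d_{MP} \leq d_{TBR}$,'' so that $d_{MP}(T,T')=1$ places no bound on $d_{TBR}(T,T')$. This is incorrect: by \cite[Theorem 6.4]{moulton2015parsimony}, which is precisely the external ingredient the paper's proof invokes, $d_{MP}(T,T')=1$ holds if and only if either $d_{TBR}(T,T')=1$, or $d_{TBR}(T,T')=2$ and common pendant subtree reductions transform $T,T'$ into a pair of trees on exactly five taxa. (Your appeal to the infinite-gap examples is a non sequitur: they show that small \emph{treewidth} does not bound $d_{MP}$ or $d_{TBR}$, not that small $d_{MP}$ fails to bound $d_{TBR}$.) With this characterisation the MP case collapses: the first alternative is the TBR case already settled; in the second, since $d_{tw}$ is invariant under CPS reduction (Theorem \ref{thm:subtree:reduction}) one may assume both trees have exactly five taxa, so the display graph, after suppressing degree-$2$ vertices (Observation \ref{obs:2b}), is a cubic graph on six vertices. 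The only forbidden minors for treewidth $3$ on at most six vertices are $K_5$ and the octahedron, both $4$-regular, and a $4$-regular minor of a cubic graph requires every branch set to contain at least two vertices --- impossible with six vertices available. Hence $tw(D(T,T')) \leq 3$ and $d_{tw}(T,T')=1$. This finite check is what replaces the global minor-to-character conversion you could not supply.
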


\begin{figure}[ht]
\centering
\includegraphics[scale=0.8]{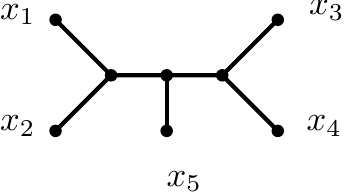}
\caption{\steven{Without loss of generality we can assume that an unrooted binary phylogenetic tree on 5 taxa has (up to relabelling of taxa) the topology $T_1$.}}
\label{fig:five:trees}
\end{figure}

\begin{proof}
\steven{Note that, because both TBR and MP distance are metrics (and thus satisfy the identity of indiscernibles property) we can assume that $T_1$ and $T_2$ are incompatible.}
We will first show that the  claim is true for the TBR distance. Take two (necessarily incompatible) binary phylogenetic trees $T,T'$ such that $d_{TBR}(T,T') = 1$. By
combining the results of \cite{AllenSteel2001} where it was shown that $d_{TBR}(T_1,T_2) = d_{MAF}(T_1,T_2)+1$ and the result of \cite{kelk2015} where it was shown that
$tw(D(T_1,T_2))  \leq d_{MAF}(T_1,T_2) + 1$ we have that
$$
tw(D(T_1,T_2))  \leq d_{TBR}(T_1,T_2) + 2.
$$

Now if $T,T'$ are such that $d_{TBR}(T,T') = 1$ we conclude by the above that $tw(D(T,T')) \leq 3$ and by the assumption that $T,T'$ are incompatible we have that $d_{tw}(T,T') =
1$.

Now we will deal with the Maximum Parsimony distance. Let $T,T'$ be two (necessarily incompatible) unrooted binary phylogenetic trees such that $d_{MP}(T,T') = 1$.  Using Theorem~\ref{thm:subtree:reduction}, we
assume without any loss of generality that $T$ and $T'$ share no common pendant subtrees. Therefore, we can apply ~\cite[Theorem 6.4]{moulton2015parsimony} on $T,T'$ which characterizes the
unit ball neighborhood of the maximum parsimony distance. There it was shown that $d_{MP}(T,T') = 1$ if and only if either (1) $d_{TBR}(T,T') = 1$, in which case we are done
since we are in the TBR case or (2)  $d_{TBR}(T,T') = 2$ and \twu{using common pendant subtree (CPS) reductions we can transform $T$ and $T'$ into a pair of trees with \textit{precisely} five taxa.} \steven{(All unrooted binary phylogenetic trees on 5 taxa are caterpillars and modulo relabelling of taxa there is only one caterpillar topology on 5 taxa.)} \twu{Since $d_{tw}$ is preserved by CPS reduction in view of Theorem~\ref{thm:subtree:reduction}, we
can  assume without loss of generality that $T$ and $T'$ both have 5 taxa, and $T$ is the tree $T_1$ depicted in Fig.~\ref{fig:five:trees}.} Let $D = D(T,T')$ be the display graph formed from $T$ and $T'$ in which we subsequently suppress
all vertices of degree-2. (Suppression does not alter the treewidth, by Observation \ref{obs:2b}.) It is easy to observe that $D$ has at most \steven{(in fact, exactly)} 6  vertices.

Now, assume that $tw(D) > 3$ so that $d_{tw}(T,T') > 1$. Then, $D$ must have as a minor one of the forbidden minors for treewidth 3. In other words, one of the forbidden minors
for treewidth 3 can be obtained by a series of edge deletions/contractions on $D$.  There are precisely 4 forbidden minors for treewidth 3 \cite{DBLP:journals/dm/ArnborgPC90},
2 of which are on 6 vertices or less: the $K_5$ and the Octahedron graph. Both of them have uniform degree 4. On the other hand, recall that the degree of each vertex of $D$ is
3 (because $T,T'$ are unrooted \emph{binary} phylogenetic trees), so each degree-4 vertex of the minor maps to at least 2 vertices of $D$. This is clearly impossible.  So $D$
cannot contain as a minor any of the forbidden minors for treewidth 3 which shows that $tw(D) \leq 3$. By assumption, $T,T'$ are incompatible so $tw(D(T,T')) = 3 \Rightarrow
d_{tw}(T,T') = 1$.
\end{proof}

In the following section we will show that the converse of the above claim, namely that $d_{tw}(T_1,T_2) = 1 \Rightarrow d_{MP}(T_1,T_2) = 1$ is certainly not true (and that the same holds for the TBR distance.)

\section{On the gap between $d_{tw}$ and $d_{TBR}, d_{MP}$}
\label{sec:gapBig}

The purpose of this section is to explore how far treewidth distance $d_{tw}$ can be from the other two distances considered in this manuscript, namely maximum parsimony
distance $d_{MP}$ and TBR distance $d_{TBR}$. In particular we will provide an example of a sequence of \steven{pairs of} trees whose treewidth distance is as low as 1 (i.e.,
the treewidth of their display graph is at most 3) but such that the corresponding TBR and MP distances can be arbitrarily large.

\begin{figure}[ht]
\centering
\includegraphics[scale=1]{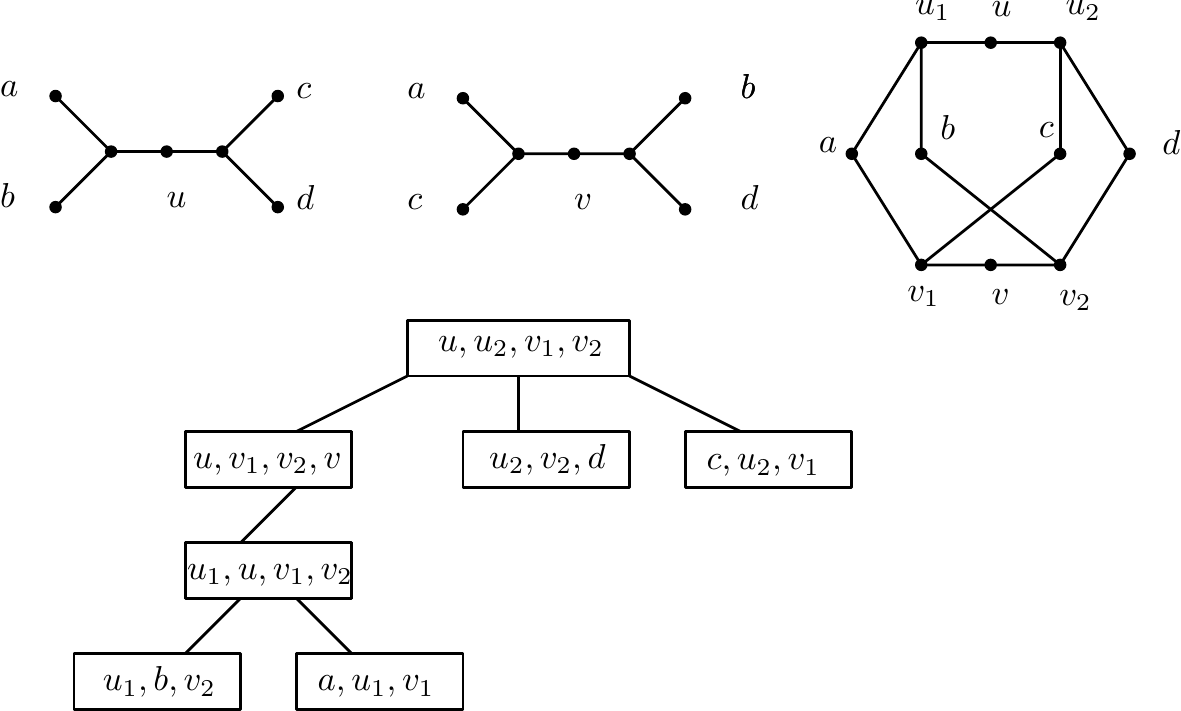}
\caption{Top: The two quartets $ab|cd$ and $ac|bd$ and their corresponding display graph (denoted $D^{0} = D$ in
the proof of Claim \ref{claim:always3}). Bottom: a width-3 tree decomposition of $D$ in which $u,v$ are in the same bag.}
\label{fig:quartets}
\end{figure}

\noindent
The construction starts with the 2 incompatible \emph{quartets} (unrooted binary trees on 4 taxa) $T_1 = ab|cd$ and $T_2 = ac|bd$. Without any loss of generality, we assume that both of the quartets contain a degree-2 vertex in the
``middle" namely, vertices $u,v$ respectively. (See Figure \ref{fig:quartets}). \steven{Note that with or without these degree-2 vertices the display graph has treewidth exactly 3 (by Observation \ref{obs:2b})}.


Given a tree $T$ with a single degree-2 vertex we define the following \textit{doubling} operation as follows:

\begin{description}
\item[Doubling tree operation:] Given a tree $T$, with a unique degree-2 vertex $v$, the doubling of $T$, denoted by $(T,T)$, is constructed as follows: we take 2 copies of
    $T$ and we join with an edge their unique degree-2 vertices. We subdivide this new edge such that $(T,T)$ has a unique degree-2 vertex.
\end{description}

\begin{figure}[ht]
\centering
\includegraphics[scale=1]{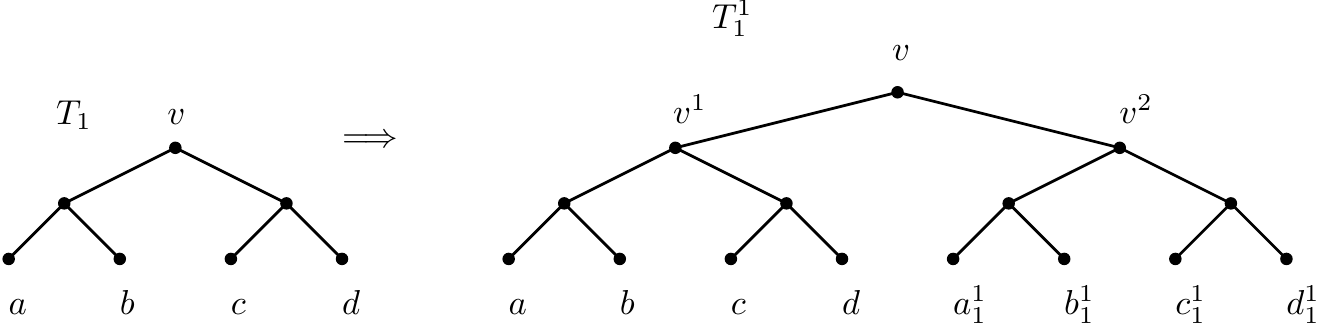}
\caption{An example of doubling the tree $T_1 = ab|cd$. When we create the second copy we label the taxa of the new copy appropriately to reflect the stage of doubling they
appear at (superscript) and the tree to which they belong (subscript).}
\label{fig:doubling}
\end{figure}

This operation will be the base of our construction. We will construct trees $T^i_1$ and $T^i_2$, for any step $i$, inductively as follows: $T^1_1 = (T_1,T_1)$ and $T^{i+1}_1 =
(T^i_1,T^i_1)$. Similarly for $T^2_2$ and subsequently for $T^{i+1}_2 = (T^i_2,T^i_2)$. Let $D^i$ be the display graph of $T^i_1$ and $T^i_2$. Observe that since we start from
$T_1, T_2$ on a common set of 4 taxa $\{a,b,c,d\}$, all the new doubled trees are on the same taxon set by labelling the new leaves appropriately, and so their display graph is
well defined and unique. Initially, let $D = D^0$ be the display graph of $T^0_1 = T_1$ and $T^0_2 = T_2$. We will show that $tw(D^i) = 3, \forall i$.


\begin{claim}
For every step $i$ we have that $d_{tw}(T^i_1,T^i_2) = 1$. Equivalently, we have that $tw(D^i) = 3$.
\label{claim:always3}
\end{claim}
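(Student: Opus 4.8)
The plan is to prove a statement strictly stronger than $tw(D^i)=3$ and then to induct on $i$. Concretely, I would carry along the invariant: \emph{for every $i$, $tw(D^i)=3$ and $D^i$ admits a width-$3$ tree decomposition containing a bag that holds both of the special degree-$2$ vertices $u^i,v^i$} (the images in $D^i$ of the unique degree-$2$ vertices of $T^i_1$ and $T^i_2$; note these retain degree $2$ in $D^i$ since they are not taxa). The reason for strengthening the hypothesis is that Lemma~\ref{lem:cr_bound}, applied to the common split that doubling creates, only yields $tw(D^{i+1})\le tw(D^i)+1$; to pin the value down to $3$ one needs exactly the extra ``same bag'' information, which is precisely the condition $tw([G^{*}])=tw(G^{*})$ appearing in Theorem~\ref{thm:cr_nec_suff}.

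First I would record the structural description of doubling at the level of display graphs. Writing copy $A$ and copy $B$ for the two copies of $T^i_j$ used to build $T^{i+1}_j$, identification of equally-labelled taxa shows that $D^{i+1}$ consists of two disjoint copies of $D^i$ (on the copy-$A$ and copy-$B$ taxa) together with two length-$2$ connecting paths $u^i_A - u^{i+1} - u^i_B$ (from $T^{i+1}_1$) and $v^i_A - v^{i+1} - v^i_B$ (from $T^{i+1}_2$), where $u^{i+1},v^{i+1}$ are the new degree-$2$ vertices and $u^i_A,v^i_A$ (resp.\ $u^i_B,v^i_B$) are the special vertices of the copy-$A$ (resp.\ copy-$B$) copy of $D^i$.

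The base case $i=0$ is supplied by Figure~\ref{fig:quartets}, which exhibits a width-$3$ decomposition of $D^0$ with $u,v$ in a common bag. For the lower bound in the inductive step, $D^i$ occurs as a subgraph (hence minor) of $D^{i+1}$, so $tw(D^{i+1})\ge tw(D^i)=3$ (incompatibility of $T^{i+1}_1,T^{i+1}_2$ gives the same bound directly). For the upper bound together with the propagated invariant, I would take width-$3$ decompositions $\mathbb{T}_A,\mathbb{T}_B$ of the two copies of $D^i$, each (by the inductive hypothesis) possessing a bag $B_A\supseteq\{u^i_A,v^i_A\}$ and $B_B\supseteq\{u^i_B,v^i_B\}$, and splice them through two size-$4$ bridge bags $H_1=\{u^i_A,v^i_A,u^{i+1},v^{i+1}\}$ and $H_2=\{u^i_B,v^i_B,u^{i+1},v^{i+1}\}$, forming the path $\mathbb{T}_A - B_A - H_1 - H_2 - B_B - \mathbb{T}_B$. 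One then checks routinely that the four new edges are covered ($H_1$ covers $\{u^i_A,u^{i+1}\},\{v^i_A,v^{i+1}\}$ and $H_2$ covers $\{u^{i+1},u^i_B\},\{v^{i+1},v^i_B\}$) and that running intersection holds for each vertex (the only vertices shared across the new splice points are $u^i_A,v^i_A$ at the $B_A$--$H_1$ join, the pair $u^{i+1},v^{i+1}$ at $H_1$--$H_2$, and $u^i_B,v^i_B$ at $H_2$--$B_B$). Since all bags have size at most $4$, the width is $3$; and since $H_1$ contains both $u^{i+1},v^{i+1}$, the invariant is re-established at level $i+1$.

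The main obstacle is not any single estimate but the need to carry the ``same bag'' property through the induction: without it, merging two width-$3$ decompositions across the common split could force a size-$5$ bag (the generic $+1$ of Lemma~\ref{lem:cr_bound}), and Theorem~\ref{thm:cr_nec_suff} shows that $tw(D^{i+1})=\max(p,q)$ holds precisely when such decompositions of both clusters exist. Thus the real work is verifying that the bridging construction simultaneously preserves width $3$ and keeps $u^{i+1},v^{i+1}$ together, so the hypothesis remains available one level up; the subdivision vertices $u^{i+1},v^{i+1}$ (which could alternatively be suppressed via Observation~\ref{obs:2b} if only the numerical value of the treewidth were wanted) are exactly what the bridge bags are designed to accommodate.
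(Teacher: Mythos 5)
Your proof is correct and follows essentially the same approach as the paper: the same inductive invariant (a width-$3$ decomposition of $D^i$ with the two special degree-$2$ vertices sharing a bag), the same base case from Figure~\ref{fig:quartets}, and the same splicing of the two copies' decompositions through bridge bags containing the new subdivision vertices. The only difference is cosmetic: you bridge with two size-$4$ bags where the paper uses a chain of four size-$3$ bags, and both variants verify identically.
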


\begin{proof}
The proof is by an inductive argument. For the base of the induction, we first construct a tree decomposition of width 3 with specific properties: see Figure \ref{fig:quartets}.

As is apparent from the base case, we can assume without any loss of generality that the two degree-2 vertices $u,v$ in $T_1,T_2$ respectively, are in the same bag of the tree
decomposition of their display graph $D$. We will exploit this fact in the following. For the induction step we assume that the display graph $D^i$ formed by $T^i_1$ and
$T^i_2$ has treewidth 3. We will show a tree decomposition for $D^{i+1}$ of width equal to the width of the tree decomposition of $D^i$. We
can construct $D^{i+1}$ from $D^i$ as follows: take two copies of $D_i$, let's call them $D^i_1$ and $D^i_2$. Each copy $D^i_j, j \in \{1,2\}$ has two degree-2
vertices: one, let's call it $u^i_j$  is the degree-2 vertex resulting after repeated doubling of the $T_1$ tree and the other, let's call it $v^i_j$ from doubling the $T_2$
tree. For each display graph $D^i_j$ let $\mathbb{T}^i_j$ be its tree decomposition which by the inductive hypothesis has width 3. Moreover, as explained, we can assume without any
loss of generality that the two degree-2 vertices $u^i_j$ and $v^i_j$ are in the same bag $B_j$.
Observe that $D^{i+1}$
has two new degree two vertices, $u^*, v^*$: $u^*$ will be connected with each $u^i_j$ and $v^*$ with each $v^i_j$, $j \in \{1,2\}$. Construct $\mathbb{T}^{i+1}$ as follows:
locate the bags $B_j$ that contain $\{u^i_j, v^i_j \}$, $j \in \{1,2\}$. Such bags exist by the inductive hypothesis. Create the following chain of bags: $B_1 - \{ u^*, u^i_1, v^i_1
\} - \{ u^*, v^*, v^i_1 \} - \{ u^*,v^*,v^i_2 \} - \{ u^*, u^i_2, v^i_2 \} - B_2$. It is immediate that $\mathbb{T}^{i+1}$ is a valid tree decomposition for $D^{i+1}$ of width
no higher than the width of $\mathbb{T}^i$ (and $u^*, v^*$ are in the same bag) so the claim follows.
\end{proof}

So the treewidth distance $d_{tw}$ of $T^i_1$ and $T^i_2$ remains $1$ for any $i$. We will now give lower bounds on $d_{TBR}(T^i_1,T^i_2)$.
%
We claim that $d_{TBR}(T^i_1, T^i_2) > d_{TBR}(T^j_1,T^j_2)$ for $i > j$. In particular $d_{TBR}(T^{i+1}_1, T^{i+1}_2) >
d_{TBR}(T^{i}_1,T^i_2)$, for all $i \geq 0$. We will prove the claim using the \emph{maximum agreement forest} distance which, by the result of Allen and Steel \cite{AllenSteel2001},
is equivalent to TBR: $d_{MAF}(T_1,T_2) = d_{TBR}(T_1,T_2) +1$. \steven{(See the preliminaries for definitions pertaining to agreement forests).}
First of all, it is not too difficult to verify that (after suppression of the two degree-2 vertices\footnote{Agreement forests are unaffected by suppression of degree-2 vertices.}) $d_{MAF}(T_1,T_2) = 2$.

Let $T^{i+1}_j$ be the two trees obtained after we double $T^i_j$, for $j \in \{1,2\}$ and let $d_{MAF}(T^i_1, T^i_2) = p \in \mathbb{N}^+$. We assume without loss of generality that neither of $T^{i+1}_1, T^{i+1}_2$ has a degree-2 vertex. We distinguish between two cases: Let $e_1 (e_2)$ be the edge used to connect the two
copies of $T^i_1 (T^i_2)$ to construct $T^{i+1}_1 (T^{i+1}_2)$. We say that an edge is \emph{deleted} by an agreement forest if it is an edge that is deleted in order to obtain the agreement forest. It is easy to observe that if $e_1$ is deleted in an agreement forest, then so is $e_2$ because of the symmetric properties of the constructed graphs $T^{i+1}_1, T^{i+1}_2$.
Now, fix $m$ to be an arbitrary maximum agreement forest.

\begin{description}
\item[Edges $e_1 (e_2)$ are deleted by $m$:]
Note that by deleting $e_1 (e_2)$ we obtain two disjoint copies of the trees $T^i_1 (T^i_2)$.
In this case we observe that $d_{MAF}(T^{i+1}_1,
    T^{i+1}_2) = 2 d_{MAF}(T^i_1, T^i_2) = 2p$ since any maximum agreement forest that does not use $e_1 (e_2)$
can and should select a maximum agreement forest for the pair of trees $T^i_1, T^i_2$, and do this twice (since there
are two disjoint copies of these trees).
\item[Neither of these edges is deleted by $m$:] Then these edges are used by the image of some component $C$ of the agreement forest $m$. If we split $C$ into two pieces (at the edges $e_1$ and $e_2$) we increase the size of the agreement forest by 1 and obtain an agreement forest that does not use either edge $e_1$ or $e_2$. From the
previous case we know that any agreement forest that does not use these edges has at least $2p$ components.
Hence, $d_{MAF}(T^{i+1}_1, T^{i+1}_2) \geq 2p - 1$.



\end{description}

\begin{lem}
The MAF
distance between $T^{i+1}_1$ and $T^{i+1}_2$ is at least $2 \times d_{MAF}(T^i_1, T^i_2) -1 > d_{MAF}(T^i_1, T^i_2)$.
\end{lem}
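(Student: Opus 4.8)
The plan is to read the lemma off the two-case analysis that directly precedes it, and then to supply the one extra ingredient---a base case---that upgrades the weak inequality into a strict one. Write $p := d_{MAF}(T^i_1,T^i_2)$, and fix $m$ to be a maximum agreement forest of $T^{i+1}_1$ and $T^{i+1}_2$, so that by definition $d_{MAF}(T^{i+1}_1,T^{i+1}_2)$ equals the number of components of $m$.

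First I would establish $d_{MAF}(T^{i+1}_1,T^{i+1}_2)\geq 2p-1$ by reusing the two cases. If $m$ deletes the connecting edges $e_1,e_2$ (which, as already observed, are deleted together or not at all by the symmetry of the doubling construction), then removing them separates the taxa into the two copies, $m$ restricts to an agreement forest on each copy of the pair $(T^i_1,T^i_2)$, and each restriction has at least $p$ components; hence $m$ has at least $2p$ components. If $m$ does not delete $e_1,e_2$, then a single component $C$ contains taxa from both copies and therefore uses $e_1$ in $T^{i+1}_1$ and $e_2$ in $T^{i+1}_2$; splitting the taxa of $C$ into their copy-$1$ and copy-$2$ parts increases the component count by exactly one and produces an agreement forest that does delete $e_1,e_2$, which by the first case has at least $2p$ components, so $m$ itself has at least $2p-1$. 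Either way,
$$ d_{MAF}(T^{i+1}_1,T^{i+1}_2)\geq 2p-1 = 2\,d_{MAF}(T^i_1,T^i_2)-1. $$

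To obtain the strict inequality $2p-1>p$, which is equivalent to $p\geq 2$, I would argue by induction on $i$. The base case is the already-noted computation $d_{MAF}(T^0_1,T^0_2)=d_{MAF}(T_1,T_2)=2$, and the bound just established gives $d_{MAF}(T^{i+1}_1,T^{i+1}_2)\geq 2\cdot 2-1=3\geq 2$ whenever $d_{MAF}(T^i_1,T^i_2)\geq 2$. Hence $p\geq 2$ for every $i\geq 0$, and so $2p-1>p$, which completes the lemma.

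Since most of the heavy lifting is already done before the statement, the remaining obstacle is one of rigour rather than of ideas. The two points I would be careful about are: (i) that a maximum agreement forest cutting $e_1,e_2$ genuinely restricts to agreement forests of $(T^i_1,T^i_2)$ on each copy---this relies on the doubling relabelling keeping the two copies' taxon sets disjoint, so that each block lies entirely within one copy and induces the corresponding subtrees there; and (ii) that the component $C$ in the second case is the \emph{same} component in both trees, which holds because ``straddling the two copies'' is a property of a block's taxon set alone, so that splitting $C$ at $e_1$ and at $e_2$ yields the identical copy-$1$/copy-$2$ partition and raises the component count by exactly one rather than merely by at least one.
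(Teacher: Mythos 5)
Your proposal is correct and follows essentially the same route as the paper: the lemma there is simply the conclusion of the two-case analysis (deleting $e_1,e_2$ gives $\geq 2p$ components; otherwise splitting the straddling component $C$ gives $\geq 2p-1$). Your only addition is to make explicit the induction from the base case $d_{MAF}(T_1,T_2)=2$ showing $p\geq 2$, which the paper leaves implicit but which is exactly what is needed for the strict inequality $2p-1>p$.
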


\begin{thm}
\label{thm:tbrgrows}
There is at least one infinite subfamility of trees $T_1, T_2$ such that $d_{tw}(T_1,T_2) = 1$ whereas $d_{TBR}(T_1,T_2)$ is unbounded.
\end{thm}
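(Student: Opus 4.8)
The plan is to assemble the infinite family directly from the doubling construction already developed in this section, invoking the three results that have been set up for exactly this purpose. Concretely, I would take the sequence of tree pairs $(T^i_1, T^i_2)$ for $i = 0, 1, 2, \ldots$, where $T^0_1 = T_1 = ab|cd$ and $T^0_2 = T_2 = ac|bd$ are the two incompatible quartets and each subsequent pair is obtained by the doubling operation. This is precisely the family I want, so the entire burden of the proof is to verify the two claimed properties (treewidth distance exactly $1$, TBR distance unbounded) along this family, and both have essentially been established already.

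For the treewidth side, I would simply cite Claim~\ref{claim:always3}, which states that $d_{tw}(T^i_1, T^i_2) = 1$ (equivalently $tw(D^i) = 3$) for every $i$. That claim's inductive proof already does all the work of maintaining a width-$3$ tree decomposition in which the two distinguished degree-$2$ vertices share a bag, so I can treat this half as immediate. For the TBR side, I would argue that $d_{TBR}(T^i_1, T^i_2)$ grows without bound as $i \to \infty$. The preceding Lemma gives $d_{MAF}(T^{i+1}_1, T^{i+1}_2) \geq 2\, d_{MAF}(T^i_1, T^i_2) - 1$, and combined with the base value $d_{MAF}(T^0_1, T^0_2) = 2$ this yields, by a one-line induction, a lower bound of the form $d_{MAF}(T^i_1, T^i_2) \geq 2^i + 1$ (since the recurrence $a_{i+1} \geq 2a_i - 1$ with $a_0 = 2$ solves to $a_i \geq 2^i + 1$). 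Translating back through the Allen--Steel identity $d_{TBR} = d_{MAF} - 1$ gives $d_{TBR}(T^i_1, T^i_2) \geq 2^i$, which clearly diverges. Hence the single family $\{(T^i_1, T^i_2)\}_{i \geq 0}$ witnesses the theorem.

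The only genuinely delicate point, and the one I would be most careful about, is making sure the lower-bound recurrence is applied with a strictly increasing (and in fact unbounded) conclusion rather than merely the ``$> d_{MAF}(T^i_1, T^i_2)$'' strict-monotonicity already noted in the Lemma. Strict monotonicity alone does not immediately give unboundedness for integer sequences in general, but here the multiplicative factor of $2$ in the recurrence forces exponential growth, so I would explicitly carry out the short induction $a_i \geq 2^i + 1$ to make the divergence rigorous rather than relying on monotonicity. I would also briefly note that the same argument shows $d_{MP}$ is unbounded along this family, since $d_{MP} \geq$ a suitable lower bound is not directly needed here but the companion Theorem~\ref{thm:mp} and the section's framing make clear the family simultaneously separates $d_{tw}$ from both $d_{TBR}$ and $d_{MP}$; for the stated theorem, however, only the TBR divergence is required.

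In summary, the proof is short: exhibit the doubling family, quote Claim~\ref{claim:always3} for the constant treewidth distance, and solve the MAF recurrence to exponential growth for the unbounded TBR distance. I expect no real obstacle, since all the machinery is in place; the main thing to get right is the clean induction converting the ``$2p - 1$'' recurrence into an explicit exponential lower bound on $d_{MAF}$, and thence on $d_{TBR}$.
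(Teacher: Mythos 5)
Your proposal is correct and takes essentially the same route as the paper: the paper likewise obtains this theorem directly from the doubling family, Claim~\ref{claim:always3} for the constant treewidth distance, and the MAF recurrence $d_{MAF}(T^{i+1}_1,T^{i+1}_2) \geq 2\,d_{MAF}(T^i_1,T^i_2)-1$, leaving the final assembly implicit. One small aside: your caution about strict monotonicity is unnecessary, since a strictly increasing sequence of integers is automatically unbounded (it grows at least linearly), though your explicit solution $d_{MAF}(T^i_1,T^i_2) \geq 2^i+1$, hence $d_{TBR}(T^i_1,T^i_2) \geq 2^i$, is a sharper and perfectly valid way to conclude.
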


Finally, we turn to $d_{MP}$:

\begin{thm}
\label{thm:tbrgrows}
There is at least one infinite subfamility of trees $T_1, T_2$ such that $d_{tw}(T_1,T_2) = 1$ whereas $d_{MP}(T_1,T_2)$ is unbounded.
\end{thm}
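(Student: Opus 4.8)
The plan is to reuse the very same doubling family $(T_1^i, T_2^i)$ constructed above. By Claim~\ref{claim:always3} we already know $d_{tw}(T_1^i,T_2^i)=1$ for every $i$, so it remains only to exhibit, for each $i$, a \emph{single} character $f_i$ witnessing a large value of $|l_{f_i}(T_1^i) - l_{f_i}(T_2^i)|$; since $d_{MP}(T_1^i,T_2^i) = \max_f |l_f(T_1^i) - l_f(T_2^i)|$, any such character lower-bounds $d_{MP}$. I would construct these characters recursively, mirroring the recursive doubling construction, and track \emph{both} parsimony scores simultaneously. Note that MP distance is unaffected by the degree-2 vertices appearing in the construction, since subdivision and suppression of degree-2 vertices leave every parsimony score unchanged, so I will freely ignore them.

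For the base case I would take the character $f_0$ given by the split $ab|cd$ (that is, $a,b\mapsto 0$ and $c,d\mapsto 1$). A direct computation gives $l_{f_0}(T_1^0)=1$ (one mutation on the central edge of $ab|cd$) and $l_{f_0}(T_2^0)=2$ (each cherry of $ac|bd$ is bichromatic and forces a mutation), so the two scores differ by $1$. Writing $g_i := l_{f_i}(T_1^i)$ and $h_i := l_{f_i}(T_2^i)$, I would then define $f_{i+1}$ on the doubled taxon set (the disjoint union of two copies of the stage-$i$ taxon set) by placing a copy of $f_i$ on the first block, valued in a state set $\Sigma_1$, and a second copy of $f_i$ on the second block, valued in a \emph{disjoint} state set $\Sigma_2$. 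The purpose of using disjoint state sets is that no state is shared between the two halves, which forces any extension to change state somewhere on the path joining the two copies.

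The key lemma I need is the following additive identity for the doubling operation: if a tree $S$ is obtained by joining two rooted subtrees $S_1,S_2$ at their (former degree-2) roots via a subdivided edge, and $f$ restricts on $S_1$ and $S_2$ to characters whose state sets $\Sigma_1,\Sigma_2$ are disjoint, then $l_f(S) = l_f(S_1) + l_f(S_2) + 1$. The upper bound is immediate: combine optimal extensions of $S_1$ and $S_2$ and pay exactly one mutation on the connecting path. For the lower bound I would write $l_f(S) = \min_{\sigma,\tau}\bigl[\,C_1(\sigma) + C_2(\tau) + d(\sigma,\tau)\,\bigr]$, where $C_j(\sigma)$ is the minimum number of mutations inside $S_j$ when its root is fixed to state $\sigma$, and $d(\sigma,\tau)\in\{0,1\}$ is the cost on the connecting path (which is $0$ iff $\sigma=\tau$). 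The crucial observation is that, since all leaves of $S_1$ carry states in $\Sigma_1$, a Fitch-type argument shows $C_1(\sigma) \ge l_f(S_1)+1$ for any $\sigma\in\Sigma_2$ (a state appearing at no leaf of $S_1$ costs at least one extra mutation), and symmetrically for $S_2$. A short case analysis on whether $\sigma=\tau$ then yields $C_1(\sigma)+C_2(\tau)+d(\sigma,\tau) \ge l_f(S_1)+l_f(S_2)+1$ in every case, establishing the identity. This applies verbatim to both $T_1^{i+1}$ and $T_2^{i+1}$, since both are doublings of the respective stage-$i$ trees.

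Applying the identity to both trees gives $g_{i+1} = 2g_i + 1$ and $h_{i+1} = 2h_i + 1$, hence $h_{i+1}-g_{i+1} = 2(h_i - g_i)$ and therefore $h_i - g_i = 2^i$. Consequently $d_{MP}(T_1^i,T_2^i) \ge |l_{f_i}(T_1^i) - l_{f_i}(T_2^i)| = 2^i$, which is unbounded, completing the proof. I expect the main obstacle to be the lower-bound half of the additive identity: one must rule out the possibility that assigning a ``foreign'' state (from the other half's state set) to the root of a subtree could simultaneously save the connecting-path mutation \emph{and} avoid increasing the internal cost. The disjointness of $\Sigma_1$ and $\Sigma_2$, together with the fact that Fitch root sets contain only leaf-states, is exactly what closes this gap.
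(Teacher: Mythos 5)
Your proof is correct, and it shares the paper's skeleton---the same doubling family, Claim~\ref{claim:always3} for the treewidth side, and a single explicit character per level witnessing a gap of $2^i$---but the witness characters and the key lemma are genuinely different. The paper keeps one fixed \emph{binary} character throughout: all copies of $a,b$ receive state $0$ and all copies of $c,d$ receive state $1$; it then bounds the two scores directly (a Fitch-style assignment gives $l_f(T_1^i)\le 2^i$, while the $2^i$ vertex-disjoint bichromatic cherries of $T_2^i$ force $l_f(T_2^i)\ge 2\cdot 2^i$), so no structural lemma about the doubling operation is needed. You instead let the character grow with the construction---disjoint state sets on the two blocks---and prove an exact additivity identity $l_f(S)=l_f(S_1)+l_f(S_2)+1$, from which the recursions $g_{i+1}=2g_i+1$ and $h_{i+1}=2h_i+1$ follow. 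Your lower-bound step is sound: the claim that a foreign root state costs at least one extra mutation can be verified by recoloring the monochromatic component of the root (it contains no leaves, so recoloring it to the state across a boundary edge removes a mutation without creating one), and your case analysis on $\sigma=\tau$ versus $\sigma\neq\tau$ does close the argument, since disjointness prevents both halves from being cheap simultaneously. As for what each approach buys: yours is more modular, yields exact scores rather than one-sided bounds, and the additivity lemma is a reusable tool; the paper's is shorter and shows, notably, that the unbounded gap is already achieved by characters on just \emph{two} states, which is a slightly stronger conclusion (relevant if one restricts $d_{MP}$ to binary characters), whereas your characters need $2^{i+1}$ states at level $i$. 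Both establish $d_{MP}(T_1^i,T_2^i)\ge 2^i$, as required.
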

\begin{proof}
In fact, this is a strengthening of the previous theorem because $d_{MP}$ is always a lower bound on $d_{TBR}$. However, as $d_{MP}$ is less well-known we only sketch the construction. Observe that the tree $T^{i}_1$ contains $2^i$ copies of each taxon. We assign all the copies of taxa $a$ and $b$ the state 0, and all copies of taxa $c$ and $d$ the state
1. It can be easily verified (by applying e.g. Fitch's algorithm) that the parsimony score of $T^{i}_1$ on such a character
is at most $2^i$. However, on the same character the parsimony score of $T^{i}_2$ will be at least $2 \cdot 2^i$.
Hence, $d_{MP}(T^i_1, T^i_2) \geq 2\cdot 2^{i} - 2^{i}$ and this grows to infinity.
\end{proof}

\section{Discussion and open problems}
In this paper we presented several algorithmic and combinatorial results on the treewidth distance $d_{tw}$, including its behaviour under three commonly used tree reduction rules and its diameter and unit ball neighbourhood. There are a number of interesting problems remain open, and we discuss some of  them below.

A major open question is whether it is \textbf{NP}-hard to compute the treewidth distance $d_{tw}$ between two trees.  This is equivalent to compute the treewidth of the
display graph of these two trees, which is a cubic graph after suppressing all degree-2 vertices. Although computing the treewidth of general graphs is \textbf{NP}-hard, even
for graphs whose maximum degree is at most 9~\cite{arnborg1987complexity,bodlaender1997treewidth}, it is still unknown whether the treewidth of cubic graphs  can be computed in
polynomial time. Hence it is also interesting to understand the complexity of computing the treewidth of cubic graphs, and whether it has the same complexity of computing that
of display graphs. Moreover, irrespective of whether it is an \textbf{NP}-hard problem, it is of interest to explore whether the structure of display graphs can be leveraged to
compute their treewidth quickly \emph{in practice}.

Another question concerns the common chain reduction, that is, whether there exists a universal constant $d$ such that reducing common chains to length $d$, preserves the
treewidth  of the display graph? This is likely to require deep insights into forbidden minors - in particular the way they interact with chain-like regions of graphs (that are
not separators).

Initial numerical experiments suggest that treewidth distance can be ``low'' compared to
traditional phylogenetic distances, such as the well-known TBR distance. Is this phenomenon more widespread? In how far is this an artefact of the way treewidth distance decomposes around common splits?  Are there traditional phylogenetic distances and measures which are verifiably (and/or empirically) close to treewidth distance - and, if so, why? Finally,  could we leverage low treewidth distance to develop efficient algorithms (based on
dynamic programming over tree decompositions) for other phylogenetic distances and measures?

\bibliographystyle{plain}
\bibliography{bibliographyTOP}

\appendix
\section{Unrooted tree compatibility is FPT: an alternative proof via Courcelle's Theorem}
\label{sec:courcelle}

The Unrooted Tree Compatibility problem (UTC) is simply the problem of determining
whether an unrooted binary phylogenetic network $N$ on $X$ displays an unrooted binary
phylogenetic tree $T$, also on $X$.

In \cite{van2016unrooted} a linear kernel is described for the UTC problem and, separately, a bounded-search branching algorithm. Summarizing, these yield FPT algorithms parameterized by $r(N)$ i.e. algorithms that can solve UTC in time at most $f( r(N) ) \cdot \text{poly}( |N| + |T| )$ for some function $f$ that depends only on $r(N)$. Here we give an
alternative proof of FPT using Courcelle's Theorem. This leverages
Corollary \ref{cor:twbound} from Section \ref{sec:treeNets}. We prove that UTC is not only
FPT when parameterized by $r(N)$, but also $tw(N)$. We begin with a simple
auxiliary observation.

\begin{obs}
Let $G=(V,E)$ be an undirected, connected graph and let $k = |E|-(|V|-1)$. For every $E' \subseteq E$ such that $|E'| = k$,  if $G'=(V, E \setminus E')$ is connected then
$G'=(V,E \setminus E')$ is a spanning tree. \label{obs:connect}
\end{obs}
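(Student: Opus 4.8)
The plan is to reduce the statement to the standard characterization of trees as connected graphs having exactly one fewer edge than vertices, so that the entire argument becomes a short edge-counting observation. First I would note that $G'=(V, E\setminus E')$ has, by construction, the same vertex set $V$ as $G$, so $|V(G')| = |V|$. Next I would count the edges of $G'$: using the hypotheses $E' \subseteq E$ and $|E'| = k = |E| - (|V|-1)$, the graph $G'$ has exactly
\[
|E \setminus E'| = |E| - |E'| = |E| - \bigl(|E| - (|V|-1)\bigr) = |V| - 1
\]
edges. Here the containment $E' \subseteq E$ is what guarantees $|E \setminus E'| = |E| - |E'|$ with no cancellation, and the cardinality assumption $|E'| = k$ is what pins this count down to $|V|-1$.

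The key step is then to combine this edge count with the hypothesis that $G'$ is connected. It is a standard fact that a connected graph on $n$ vertices with exactly $n-1$ edges is a tree (this is one of the textbook equivalent characterizations of trees). Applying it with $n = |V|$ shows that $G'$ is a tree. Since $V(G') = V$ and $E\setminus E' \subseteq E$, the tree $G'$ is a subgraph of $G$ that contains every vertex of $G$, and is therefore by definition a spanning tree of $G$, completing the argument.

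There is no substantial obstacle here: all the content lies in the elementary edge-counting identity, which is immediate from the definition of $k$, followed by the ``connected plus $n-1$ edges implies tree'' fact. The only point meriting a little care is bookkeeping with the hypotheses, namely observing that both $E' \subseteq E$ and $|E'| = k$ are needed to obtain the exact count $|V|-1$, and that it is precisely the assumed connectivity of $G'$ that upgrades ``$G'$ has the right number of edges'' to ``$G'$ is a (spanning) tree''.
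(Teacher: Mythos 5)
Your proof is correct and follows essentially the same route as the paper: count that $G'$ has exactly $|V|-1$ edges and then invoke the characterization of trees as connected graphs on $n$ vertices with $n-1$ edges. The only cosmetic difference is that the paper inlines a short contradiction argument for that characterization (deleting edges from a cycle would yield a connected spanning subgraph with fewer than $|V|-1$ edges), whereas you cite it as a standard fact.
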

\begin{proof}
Observe that $G'=(V, E \setminus E')$ has exactly $|V|-1$ edges. If $G'$ contained at least one cycle then (due to the assumption that $G'$ is connected)  we could delete edges
from $G'$, whilst mainting connectivity, to obtain a tree that has strictly fewer than $|V|-1$ edges and which spans $V$. However, this would contradict the standard result
from graph theory that any connected spanning subgraph must contain at least $|V|-1$ edges. Hence, $G'$ is connected and acyclic i.e. it is a spanning tree.
\end{proof}

The high-level idea of the following MSOL formulation is that, if $N$ displays $T$, then (rather like the proof of Lemma \ref{lem:reticbound}) $N$ contains some subtree $T'$
that is a subdivision of $T$ and which can be ``grown'' into a spanning tree $T''$ of $N$. Every spanning tree of $N$ can be obtained by the deletion of $|E|-(|V|-1)$ edges
from $N$, and (from Observation \ref{obs:connect}) any subset $E' \subseteq E$ where $|E'|=k$ and whose deletion from $N$ yields a connected graph, must therefore yield a
spanning tree. Note that the set of quartets (unrooted phylogenetic trees on subsets of exactly
4 taxa) displayed by $T''$ is identical to those displayed by $T'$, which is identical to those displayed by $T$. (In other words,
subdivision operations, and pendant subtrees without taxa that possibly hang from $T''$, do not induce any extra quartets.)

For the benefit of readers not familiar with MSOL we now show how various basic auxiliary predicates can be easily constructed and combined to obtain more powerful predicates.
(The article \cite{kelk2015} gives a more comprehensive inroduction to the use of these techniques in phylogenetics). The MSOL sentence will be queried over the display graph
$D(N,T)$ where we let $V$ be the vertex set of $D(N,T)$ and $E$ its edge set. Here $R^{D}$ is the edge-vertex incidence relation on $D(N,T)$. We let $V_T, V_N, E_T, E_N$ denote
those vertices and edges of $D(N,T)$ which belong to $T, N$ respectively (note that $T_V \cap T_N = X$). Alongside  $X, V, E$ all this information is available to the MSOL
formulation via its \emph{structure}.

\begin{itemize}
\item test that $Z$ is equal to the union of two sets $P$ and $Q$:

$P \cup Q = Z := \forall z ( z \in Z \Rightarrow z \in P \vee z \in Q) \wedge \forall z ( z \in P \Rightarrow z \in Z) \wedge\forall z ( z \in Q \Rightarrow z \in Z).$

\item  test that $P \cap Q = \emptyset$:

$\mathrm{NoIntersect}(P,Q) := \forall u \in P( u \not \in Q ).$

\item test that $P \cap Q = \{v\}$:

$\mathrm{Intersect}(P,Q,v) := (v \in P) \wedge (v \in Q) \wedge \forall u \in P( u \in Q \Rightarrow (u  = v) ).$

\item test if the sets $P$ and $Q$ are a bipartition of $Z$:

$\mathrm{Bipartition}(Z, P, Q) := (P \cup Q = Z) \wedge \mathrm{NoIntersect}(P,Q).$

\item test if the elements in $\{x_1, x_2, x_3, x_4\}$ are pairwise different:

$\mathrm{allDiff}( x_1, x_2, x_3, x_4 ) := \bigwedge_{i \neq j \in \{1,2,3,4\}} x_i \neq x_j.$

\item check if the nodes $p$ and $q$ are adjacent:

$\mathrm{adj}(p,q) := \exists e \in E ( R^{D}(e,p) \wedge R^{D}(e,q)).$
\end{itemize}

The predicate $PAC(Z, x_1, x_2, K)$  asks: is there a path from $x_1$ to $x_2$ entirely contained inside vertices $Z$ that avoids all the edges $K$? We model this by observing
that this does \emph{not} hold if you can partition $Z$ into two pieces $P$ and $Q$, with $x_1 \in P$ and $x_2 \in Q$, such that the only edges that cross the induced cut (if
any) are in $K$.

\begin{eqnarray*}
PAC(Z, x_1, x_2, K) &:=& (x_1 = x_2) \vee \neg \exists P, Q ( \mathrm{Bipartition}(Z,P,Q) \wedge x_1 \in P \wedge x_2 \in Q \wedge\\
&&(\forall p, q (p \in P \wedge q \in Q \Rightarrow \neg \mathrm{adj}( p,q) \vee (\exists g \in K( R^{D}(g, p)\\
&& \wedge R^{D}(g,q))))))
\end{eqnarray*}

The following predicate $QAC^{i}$, where $i \in \{T,N\}$, returns true if and only if $i$ contains an embedding of quartet $x_a x_b | x_c x_d$ that is disjoint from the edge cuts $K$.

\begin{eqnarray*}
QAC^{i}(x_a, x_b, x_c,x_d, K) &:=&
\exists u, v \in V_i ( (u \neq v) \wedge \exists A,B,C,D,P \subseteq V_i ( x_a, u \in A \wedge x_b, u \in B \wedge x_c,\\
&& v \in C \wedge x_d,v \in D \wedge u \in P \wedge v \in P \wedge \mathrm{Intersect}(A,B,u) \wedge\\
&& \mathrm{Intersect}(A,P,u) \wedge \mathrm{Intersect}(B,P,u) \wedge \mathrm{Intersect}(C,D,v) \wedge \\
&&  \mathrm{Intersect}(C,P,v) \wedge \mathrm{Intersect}(D,P,v) \wedge   \mathrm{NoIntersect}(A,C) \wedge \\
&&  \mathrm{NoIntersect}(B,C)\wedge \mathrm{NoIntersect}(A,D)\wedge \mathrm{NoIntersect}(B,D)   \wedge \\
&& PAC(A, u, x_a, K) \wedge PAC(B, u, x_b, K) \wedge PAC(C, v, x_c, K) \wedge \\
&& PAC(D, v, x_d, K)  \wedge PAC(P, u, v, K)))
\end{eqnarray*}

The overall formulation is shown as below. The first line asks for a subset $E'$ of cardinality exactly $k$, the second line requires that the $N$ part of $D(N,T)$ remains
connected after deletion of $E'$ (and thus induces a spanning tree), and from the third line onwards we stipulate that, after deletion of $E'$, the set of quartets that survive
is exactly the same as the set of quartets displayed by $T$. (This is leveraging the well-known
result from phylogenetics that two trees are compatible if and only if they display the same
set of quartets \cite{SempleSteel2003}).

\begin{align*}
|E'| = k \wedge E' \subseteq E_N \wedge \\
 \forall u, v \in V_N(PAC(V_N, u, v, E')) \wedge \\
\forall x_1, x_2, x_3, x_4 \in X( \mathrm{allDiff}(x_1, x_2, x_3, x_4) \Rightarrow \\
((QAC^{T}(x_1, x_2, x_3, x_4, \emptyset) \Leftrightarrow QAC^{N}(x_1, x_2, x_3, x_4, E') )\wedge \\
 (QAC^{T}(x_1, x_3, x_2, x_4, \emptyset) \Leftrightarrow QAC^{N}(x_1, x_3, x_2, x_4, E') ) \wedge \\
 (QAC^{T}(x_1, x_4, x_2, x_3, \emptyset) \Leftrightarrow QAC^{N}(x_1, x_4, x_2, x_3, E') ))).\\
\end{align*}

\begin{thm}
Given an unrooted binary network $N = (V,E)$ and an unrooted binary tree both on $X$, we can determine in time $O( f(k) \cdot n )$ whether $N$ displays $T$, where $k =
|E|-(|V|-1)$ and $n=|V|$.
\end{thm}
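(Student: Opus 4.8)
The plan is to read off the theorem as a direct combination of the MSOL sentence $\phi$ constructed immediately above with Courcelle's Theorem \cite{Courcelle90,Arnborg91}, which evaluates a fixed MSOL sentence in time linear in the number of vertices on any graph that is supplied together with a tree decomposition of bounded width. Concretely, I would query $\phi$ over the structure of the display graph $D = D(N,T)$, whose vertex and edge sets are both of size $O(n)$ and which can be built in linear time. The correctness of $\phi$ (that it is satisfiable over this structure if and only if $N$ displays $T$) follows from the discussion preceding the statement: a satisfying assignment selects a set $E' \subseteq E_N$ of exactly $k$ edges whose deletion leaves the $N$-part connected, which by Observation \ref{obs:connect} is precisely a spanning tree $T''$ of $N$; the quartet clauses then force $T''$ and $T$ to display exactly the same set of quartets, which for trees on a common taxon set is equivalent to compatibility, hence to $T''|X = T$, hence to $N$ displaying $T$.

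The central difficulty is that Courcelle's Theorem requires a tree decomposition of \emph{bounded} width, whereas Corollary \ref{cor:twbound} only guarantees $tw(D) \leq k+2$ under the hypothesis that $N$ displays $T$; on a NO-instance the treewidth of $D$ may be arbitrarily large, so we cannot blindly assume a low-width decomposition is available. I would resolve this with a guarded two-branch computation. First, run the linear-time-for-fixed-width algorithm of Bodlaender \cite{Bodlaender96} on $D$ with target width $w = k+2$; in time $O(f_1(k)\cdot n)$ this either outputs a tree decomposition of $D$ of width at most $k+2$, or certifies that $tw(D) > k+2$. In the latter case the contrapositive of Corollary \ref{cor:twbound} (with $r(N)=k$, so $r(N)+2 = k+2$) tells us at once that $N$ cannot display $T$, and we return NO without ever invoking the logic.

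In the former case we have a width-$(k+2)$ tree decomposition in hand and feed it, together with $\phi$, into Courcelle's Theorem. The one point to verify is the dependence of the running time on $k$: although $\phi$ contains the cardinality constraint $|E'| = k$, whose MSOL encoding has length growing with $k$, the size of $\phi$ depends on $k$ alone and not on $n$, so the evaluation runs in $O(g(k+2,|\phi|)\cdot n) = O(f_2(k)\cdot n)$ for an appropriate function $g$. Combining the two branches gives an overall running time of $O(f(k)\cdot n)$ with $f = f_1 + f_2$, as claimed. Thus the main obstacle is not the logical encoding itself, which is handed to us by $\phi$ together with Courcelle's Theorem, but the a priori unboundedness of $tw(D)$ on NO-instances; the Bodlaender pre-test combined with Corollary \ref{cor:twbound} neatly sidesteps it by converting ``treewidth too large'' directly into a correct negative answer.
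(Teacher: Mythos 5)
Your proposal is correct and follows essentially the same route as the paper: run Bodlaender's linear-time FPT algorithm on $D(N,T)$, answer NO if the width exceeds $k+2$ (justified by the bound $tw(D(N,T)) \leq r(N)+2$), and otherwise evaluate the MSOL sentence via the Arnborg et al.\ variant of Courcelle's Theorem. Your added remark that the cardinality constraint $|E'|=k$ makes the sentence length depend on $k$ (but not on $n$) is a valid refinement of the paper's looser claim that the sentence is ``constant-length,'' and it does not affect the $O(f(k)\cdot n)$ bound.
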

\begin{proof}
We run Bodlaender's linear-time FPT algorithm \cite{Bodlaender96} to compute a tree decomposition of $D(N,T)$ and return NO if the treewidth is larger than $k+2$. (This is
correct by Lemma \ref{lem:reticbound}). Otherwise, we construct the constant-length MSOL sentence described earlier and apply the Arnborg et al. \cite{Arnborg91} variant of
Courcelle's Theorem, from which the result follows. (Note that $D(N,T)$ has $O(n)$ vertices and $O(n)$ edges). The result can be made constructive if desired i.e. in the event
of a YES answer the actual set $E'$ can be obtained.
\end{proof}

\begin{cor}
\label{cor:twUTC}
Given an unrooted binary phylogenetic network $N = (V,E)$ and an unrooted binary phylogenetic tree both on $X$, we can determine in time $O( f(t) \cdot n )$ whether $N$ displays $T$, where $t$ is $tw(N)$ and $n=|V|$.
\end{cor}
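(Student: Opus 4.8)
The plan is to reuse the entire machinery of the preceding theorem, making exactly two changes: replace Lemma~\ref{lem:reticbound} by Corollary~\ref{cor:twbound} when bounding the treewidth of the display graph, and --- crucially --- replace the MSOL sentence by one whose length does \emph{not} depend on the reticulation number. First I would isolate the obstacle that blocks a direct appeal to the preceding theorem: that theorem runs in time $f(r(N))\cdot n$, and $r(N)$ cannot in general be bounded by any function of $t = tw(N)$. Indeed, a network built from a long ``ladder'' of triangles has treewidth $2$ yet unbounded reticulation number, so the earlier running-time bound gives nothing when the parameter is $tw(N)$. A genuinely new argument is therefore needed, and the entire burden falls on the logical formulation.

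The only place where the earlier sentence depends on $r(N)$ is the clause $|E'| = k$, which, together with the connectivity requirement $\forall u,v \in V_N\, PAC(V_N,u,v,E')$ and Observation~\ref{obs:connect}, forces the retained subgraph $(V_N, E_N\setminus E')$ to be a spanning tree. This cardinality clause is the culprit, since expressing $|E'|=k$ requires a formula whose length grows with $k=r(N)$. I would simply delete it and instead assert directly that the retained subgraph is a spanning tree: the connectivity clause is already present, so it suffices to add an \emph{acyclicity} clause on $E_N\setminus E'$ (introduce a fresh set variable $S$ with $\mathrm{Bipartition}(E_N, E', S)$ and stipulate that $S$ is acyclic). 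Connectivity and acyclicity are classical constant-length MSOL-expressible properties, so the resulting sentence has length independent of $N$ and $T$. The remaining, quartet-matching part of the sentence is left untouched and is already of constant length, because the quantifiers $\forall x_1,x_2,x_3,x_4\in X$ are first-order and the predicates $QAC^{T}, QAC^{N}$ have fixed size. Correctness rests on the same facts as before: a spanning tree of $N$ displays exactly the quartets of $T$ if and only if it is, modulo subdivisions and taxon-free pendant subtrees, a subdivision of $T$, i.e.\ if and only if $N$ displays $T$.

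With a constant-length sentence in hand, the algorithm is routine. I would first compute $t = tw(N)$ by applying Bodlaender's linear-time FPT algorithm \cite{Bodlaender96} to $N$ itself (time $g(t)\cdot n$), then run Bodlaender's algorithm on $D(N,T)$ with threshold $2t+1$. If it reports $tw(D(N,T)) > 2t+1$ we return NO, which is correct by Corollary~\ref{cor:twbound}: were $N$ to display $T$ we would have $tw(D(N,T)) \le 2\,tw(N)+1 = 2t+1$. Otherwise Bodlaender returns a tree decomposition of $D(N,T)$ of width at most $2t+1$, to which I would apply the Arnborg et al.\ variant of Courcelle's Theorem \cite{Arnborg91} together with the reformulated sentence. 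Since the sentence has constant length and the decomposition has width $O(t)$, this step runs in $f(t)\cdot n$ time, and the whole procedure is $O(f(t)\cdot n)$ as claimed.

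The main obstacle, to reiterate, is precisely the dependence of the original formula on $r(N)$. The technical heart of the corollary is the observation that the ``delete exactly $k$ edges and remain connected'' certificate for a spanning tree can be traded for the cardinality-free ``connected and acyclic'' certificate. This decouples the formula length from the reticulation number, after which the treewidth bound of Corollary~\ref{cor:twbound} supplies everything else; no new phylogenetic or combinatorial content beyond this substitution is required.
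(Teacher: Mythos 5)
Your proof is correct, and its algorithmic skeleton is exactly the paper's: test whether $tw(D(N,T)) \leq 2t+1$ via Bodlaender's algorithm, answer NO otherwise (sound by Corollary~\ref{cor:twbound}), and else model-check an MSOL sentence on the bounded-width display graph via the Arnborg et al.\ variant of Courcelle's Theorem. The genuine difference is in how the Courcelle step is handled, and your version is the more careful one. The paper's proof simply says ``we can again leverage Courcelle's Theorem,'' implicitly reusing the sentence from the preceding theorem; but that sentence contains the clause $|E'| = k$ with $k = r(N)$, whose length necessarily grows with $r(N)$, and, as your triangle-ladder example shows, $r(N)$ is bounded by no function of $t$. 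Verbatim reuse therefore yields a running time of the form $f(r(N),t)\cdot n$ rather than $f(t)\cdot n$, so the paper's proof, read literally, glosses over exactly the point where you intervene. Your replacement of the cardinality clause by a constant-length spanning-tree certificate --- the connectivity requirement already present via the $PAC$ clause, plus MSOL-expressible acyclicity of $E_N \setminus E'$, which is precisely what the combination of $|E'|=k$, connectivity and Observation~\ref{obs:connect} was there to guarantee --- decouples the formula length from $r(N)$ and makes the claimed $O(f(t)\cdot n)$ bound actually follow, while the quartet-matching correctness argument is untouched. In short, the paper's proof buys brevity, yours buys rigour; your constant-length sentence would in fact also work in (and slightly simplify) the preceding theorem, since the parameter would then enter only through the treewidth bound. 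One minor remark: the acyclicity clause, though clean, is not strictly necessary --- connectivity of $(V_N, E_N\setminus E')$ alone suffices, because any spanning tree of that connected subgraph displays a subset of its quartets, and per four-taxon set both that tree and $T$ display exactly one topology, forcing the quartet sets to coincide --- but including it costs nothing and keeps the correspondence with Observation~\ref{obs:connect} transparent.
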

\begin{proof}
By Corollary \ref{cor:twbound} we know that, if $N$ displays $T$, then $D(N,T)$ will
have treewidth at most $2tw(N)+ 1$. Hence, if the treewidth of $D(N,T)$ is larger
than this, we can immediately answer NO. Otherwise, we have a bound on the
treewidth of $D(N,T)$ in terms of $tw(N)$, so we can again leverage Courcelle's Theorem.
\end{proof}

Corollary \ref{cor:twUTC} is potentially interesting in cases where $tw(N)$ is significantly
smaller than $r(N)$. Note also that $tw(N)$  is a lower bound on the level of $N$ (discussed
earlier in the manuscript: see Section \ref{sec:treeNets}), so in fact UTC is also FPT when parameterized by the level of $N$.

\end{document}